\documentclass{lmcs}
\pdfoutput=1

\usepackage{lastpage}
\lmcsdoi{21}{2}{10}
\lmcsheading{}{\pageref{LastPage}}{}{}%
{May~28,~2024}{Apr.~29,~2025}{}

\keywords{completion, AC axioms, term rewriting}

\usepackage{hyperref}
\usepackage{booktabs}
\usepackage[utf8]{inputenc}
\usepackage{definitions}

\begin{document}

\title[Left-Linear Completion with AC Axioms]{Left-Linear Completion with
AC Axioms}
\thanks{The research described in this article is partly supported by JSPS
KAKENHI Grant Number JP22K11900.}

\author[J.~Niederhauser]{Johannes Niederhauser%
\lmcsorcid{0000-0002-8662-6834}}[a]
\author[N.~Hirokawa]{Nao Hirokawa\lmcsorcid{0000-0002-8499-0501}}[b]
\author[A.~Middeldorp]{Aart Middeldorp\lmcsorcid{0000-0001-7366-8464}}[a]

\address{Department of Computer Science, University of Innsbruck,
Innsbruck, Austria}
\email{johannes.niederhauser@uibk.ac.at, aart.middeldorp@uibk.ac.at}

\address{School of Information Science, JAIST, Nomi, Japan}
\email{hirokawa@jaist.ac.jp}

\begin{abstract}
\noindent We revisit completion modulo equational theories for
left-linear term rewrite systems where unification modulo the theory
is avoided and the normal rewrite relation can be used in order to
decide validity questions. To that end, we give a new correctness
proof for finite runs and establish a simulation result between the
two inference systems known from the literature. Given a concrete
reduction order, novel canonicity results show that the resulting
complete systems are unique up to the representation of their rules'
right-hand sides. Furthermore, we show how left-linear AC
completion can be simulated by general AC completion. In particular,
this result allows us to switch from the former to the latter at any
point during a completion process.
\end{abstract}

\maketitle

\section{Introduction}

Completion has been extensively studied since its introduction in the
seminal paper by Knuth and Bendix \cite{KB70}. One of the main
limitations of the original formulation is its inability to deal with
equations which cannot be oriented into a terminating rule such as the
commutativity axiom. This shortcoming can be resolved by completion
modulo an equational theory $\xE$. In the literature, there are two
different approaches of achieving this. The general approach
\cite{JK86,B91} requires $\xE$-unification and allows us to decide
validity problems using the rewrite relation $\Rb{\xR/\xE}$ which is
defined as $\LRab{\xE}{*} \cdot \Rab{\xR}{} \cdot \LRab{\xE}{*}$. For
left-linear term rewrite systems, however, there is Huet's approach
\cite{H80} which avoids $\xE$-unification. In particular, Huet's
approach does not consider local peaks modulo $\xE$
($\LbR \cdot \sim_\xE \cdot \RbR$) but works with ordinary
local peaks ($\LbR \cdot \RbR$) as well as local cliffs of
the form $\LRb{\xE} \cdot \RbR$. Hence, instead of
$\xE$-critical pairs we can use normal critical pairs when we also
take overlaps between $\xR$ and $\xE$ into account. We call this
approach \emph{left-linear completion modulo an equational theory}.
If we have a complete TRS $\xR$ in the general sense, we can decide
validity problems $s \approx t$ by rewriting both terms to normal
form with $\Rb{\xR/\xE}$ and then checking the result for
$\xE$-equivalence. For complete systems stemming from left-linear
$\xE$-completion, however, it suffices to rewrite both $s$ and $t$
to normal form using the normal rewrite relation $\RbR$ and then
perform just one $\xE$-equivalence check on these normal forms.

In their respective books, Avenhaus \cite{A95} and Bachmair \cite{B91}
present inference systems for left-linear completion modulo an
equational theory. This article gives a detailed account of the
nature of and relation between these two systems for finite
runs. For the concrete case of AC (associative and commutative
function symbols), we compare left-linear completion modulo
equational theories with the general approach by presenting an
implementation of left-linear AC completion and comparing it with
the state of the art concerning general AC completion.
After setting the stage in \secref{preliminaries}, we present a new
criterion for the Church--Rosser modulo property of left-linear TRSs
based on prime critical pairs in \secref{cc}. Slightly modified
versions (\ia and \ib) of the inference systems due to Avenhaus and
Bachmair are discussed in Sections~\ref{sec:avenhaus} and
\ref{sec:bachmair}, respectively. Both sections include a new
correctness proof of the given inference system for finite runs. For
\ia, this is done from scratch by using the criterion from \secref{cc}
in the spirit of \cite{HMSW19}. Correctness of \ib is then reduced to
the correctness of \ia by establishing a simulation result between
finite runs in these systems. Furthermore, \secref{canonicity}
reports on novel results on canonicity for this setting. For the
concrete equational theory of associative and commutative (AC)
function symbols, we also show the connection between the inference
system \ia and general AC completion by means of another simulation
result (\secref{ac-completion}). Finally, we describe our
implementation of \ia in the tool \accompll and present experimental
results which show that the avoidance of AC unification and AC
matching can result in significant performance improvements over
general AC completion (Sections~\ref{sec:implementation} and
\ref{sec:experimental-results}).

This article extends our previous papers \cite{NHM23a} and
\cite{NHM23b} by including full proof details, more examples
and experimental data as well as novel results on canonicity.

\section{Preliminaries}
\label{sec:preliminaries}

We assume familiarity with term rewriting and completion as described
e.g.\ in \cite{BN98,T03} but recall some central notions
in \secref{rewriting}. The concept of (prime) critical peaks which
leads to the definition of (prime) critical pairs is introduced in
\secref{cpeaks}. Finally, \secref{rewritingmodulo} provides the
necessary background for rewriting modulo equational theories.

\subsection{Rewrite Systems}
\label{sec:rewriting}  

An \emph{abstract rewrite system} or \emph{abstract reduction system}
(ARS) $\xA = \langle A,{\R} \rangle$ is a set $A$ together with a
binary relation $\R$ on $A$. If $a \R b$ for no $b$, then $a$ is
called a \emph{normal form} of $\xA$. Otherwise, we say that $a$
is \emph{reducible}. The set of normal forms is denoted by
$\nf{\xA}$. Given an arbitrary binary relation $\R$, we write $\L$,
$\LR$, $\R^=$, $\R^+$ and $\R^*$ to denote its \emph{inverse}, its
\emph{symmetric closure}, its \emph{reflexive closure}, its
\emph{transitive closure} and its \emph{symmetric and transitive closure},
respectively. Hence, the relation $\LR^*$ denotes the
symmetric, reflexive and transitive closure of $\R$ and is called
\emph{conversion}. The relation $a \R^! b$ is defined as
$a \R^* b$ and $b \in \nf{\xA}$ and is used to denote rewriting
to a normal form. Finally, $\D$ abbreviates
the \emph{joinability relation} $\R^* \cdot \La{*}$ where $\cdot$
denotes the composition of binary relations
which is defined as follows: $R_1 \cdot R_2 =
\SET{(a,c) \mid \text{$(a,b) \in R_1$ and $(b,c) \in R_2$}}$.

Given a signature $\xF$ and a set of variables $\xV$, we consider the
set of terms $\xT(\xF,\xV)$ which is defined as usual. The set of
variables in a term $t$ is written as $\var{t}$. Terms which do not
contain the same variable more than once are referred to as
\emph{linear} terms. Each subterm of a term $t$ has a unique
\emph{position} which is a finite sequence of positive integers where
the empty sequence representing the root position is written as
$\epsilon$. The set of positions in a term $t$ is denoted by $\pos{t}$
and further divided into the subset $\posf{t}$ of positions
which address function symbols and the subset
$\posv{t} = \pos{t} \setminus \posf{t}$ of variable positions.
If a position $p$ is a
prefix of the position $q$ we write $p \leq q$. Positions $p$ and $q$
are parallel, denoted by $p \parallel q$, if neither $p \leq q$
nor $q \leq p$. If $p \leq q$ then $q \setminus p$ denotes the unique
position $r$ such that $pr = q$. By $s|_p$ we denote the subterm
of $s$ at position $p$. Mappings $\sigma$ from variables to
terms with a finite domain ($\SET{x \in \xV \mid x \neq \sigma(x)}$) are
called \emph{substitutions}. The application of a substitution $\sigma$
to a term $t$ is denoted by $t\sigma$. A \emph{renaming} is a
bijective substitution from $\xV$ to $\xV$. A term $s$ is a
\emph{variant} of a term $t$ ($s \doteq t$) if there exists a renaming
$\sigma$ such that $s = t\sigma$. A context $C$ is a term with
exactly one occurrence of the special symbol $\Box \notin \xF \cup \xV$
called \emph{hole} which acts as a placeholder for concrete
terms. Replacing the hole with a term $t$ results in a term which we
denote by $C[t]$. A term $s$ \emph{encompasses} a term $t$
($s \enceq t$) if $s = C[t\sigma]$ for some context $C$ and
substitution $\sigma$. 
It is known that $\enceq$ is a quasi-order on terms and its strict
part $\enc$ is a well-founded order with
${\enc} = {\enceq} \setminus {\doteq}$. We write $s[t]_p$ for the term
which is created from $s$ by replacing its subterm at position $p$ by $t$.

A pair of terms $(s,t)$ can be viewed as an \emph{equation} ($s \approx t$)
or a \emph{rule} ($s \R t$). In the latter case we assume that $s$ is not
a variable and $\var{s} \supseteq \var{t}$. \emph{Equational systems}
(ESs) are sets of equations while \emph{term rewrite systems} (TRSs)
are sets of rules. Given a set of pairs of terms $\xE$, we define a
\emph{rewrite relation} $\R_{\xE}$ as the closure of its pairs under
substitutions and contexts. More formally, $s \Rb{\xE} t$ if there is
a pair $(\ell,r) \in \xE$, a position $p$ and a substitution $\sigma$
such that $s|_p = \ell\sigma$ and $t = s[r\sigma]_p$.
We sometimes make the position $p$ explicit by writing
$s \Rab{\xE}{p} t$ and define $\LRab{\xE}{p}$ as
$\Lab{\xE}{p} \cup \Rab{\xE}{p}$. The
\emph{equational theory} induced by $\xE$ consists of all pairs of
terms $(s,t)$ such that $s \LRab{\xE}{*} t$. A TRS $\xR$
\emph{represents} an ES $\xE$ if ${\LRab{\xE}{*}} = {\LRab{\xR}{*}}$. Two
rules $\ell \to r$ and
$\ell' \to r'$ are \emph{variants} if there exists a renaming $\sigma$
such that $\ell\sigma = \ell'$ and $r\sigma = r'$. Two TRSs $\xR_1$
and $\xR_2$ over the same signature $\xF$ are \emph{literally similar}
($\xR_1 \doteq \xR_2$) if every rule in $\xR_1$ has a variant in $\xR_2$
and vice versa.
A TRS is
\emph{left-linear} if $\ell$ is a linear term for every rule
${\ell \R r} \in \xR$. Given an ES $\xE$, $\xE^\pm$ denotes
$\xE \cup \SET{t \approx s \mid {s \approx t} \in \xE}$.

A TRS $\xR$ is \emph{terminating} if there is no infinite rewrite
sequence $s_1 \RbR s_2 \RbR \cdots$. If
${\Lab{\xR}{*} \cdot \Rab{\xR}{*}} \subseteq {\Rab{\xR}{*} \cdot
\Lab{\xR}{*}}$ then $\xR$ is \emph{confluent}. A TRS is
\emph{complete} if it is terminating and confluent. For standard
rewriting, the \emph{Church--Rosser property}
(${\LRab{\xR}{*}} \subseteq {\Db{\xR}}$) coincides with confluence.
Hence, complete presentations $\xR$ of an ES $\xE$ can be used to
decide the validity problem for $\xE$: $s \LRab{\xE}{*} t$ if and only
if $s \Rab{\xR}{!} \cdot \Lab{\xR}{!} t$.

\subsection{Critical Peaks}
\label{sec:cpeaks}

Confluence of terminating TRSs is characterized by joinability of
critical pairs. Critical pairs are computed from overlaps which
are the potentially dangerous cases of nondeterminism in rewriting
as long as termination holds.

\begin{defi}
Let $\xR$ be a TRS. An \emph{overlap} is a triple
$\langle {\ell_1 \R r_1}, p, {\ell_2 \R r_2} \rangle$
satisfying the following properties:
\begin{itemize}[label=$\triangleright$]
\item
$\ell_1 \R r_1$ and $\ell_2 \R r_2$ are variants of rewrite rules of
$\xR$ without common variables,
\smallskip
\item
$p \in \posf{\ell_2}$,
\smallskip
\item
$\ell_1$ and $\ell_2|_p$ are unifiable and
\smallskip
\item
if $p = \epsilon$ then $\ell_1 \R r_1$ and $\ell_2 \R r_2$ are not
variants.
\end{itemize}
\end{defi}
\noindent 
Overlaps give rise to \emph{critical peaks} from which the critical
pairs can then be extracted.

\begin{defi}
Let $\langle {\ell_1 \R r_1}, p, {\ell_2 \R r_2} \rangle$ be an
overlap of a TRS $\xR$ and $\sigma$ a most general unifier of
$\ell_1$ and $\ell_2|_p$. The term
$\ell_2\sigma[\ell_1\sigma]_p = \ell_2\sigma$ can be rewritten in
two different ways which gives rise to a critical peak, here
illustrated graphically:
\begin{center}
\begin{tikzpicture}
\node (1) {$\ell_2\sigma[\ell_1\sigma]_p = \ell_2\sigma$};
\node (2) [below left=of 1] {$\ell_2\sigma[r_1\sigma]_p$};
\node (3) [below right=of 1] {$r_2\sigma$};
\ArrowOS{(1)}{(2)}{$p$}{$r_1 \L \ell_1$};
\Arrow{(1)}{(3)}{$\epsilon$}{$\ell_2 \R r_2$};
\end{tikzpicture}
\end{center}
More formally, a critical peak is a quadruple
$\langle \ell_2\sigma[r_1\sigma]_p, p, \ell_2\sigma, r_2\sigma \rangle$
and the equation $\ell_2\sigma[r_1\sigma]_p \approx r_2\sigma$ is a
critical pair of $\xR$ obtained from the original overlap.
\end{defi}

Usually, we denote a critical peak $\langle t, p, s, u \rangle$ more
illustratively by \smash{$t \Lo{}{p} s \Ro{}{\epsilon} u$}. The set of all
critical pairs of a TRS $\xR$ (obtained from all possible overlaps) is
denoted by $\cp{\xR}$.
Knuth and Bendix' criterion~\cite{KB70} states that a terminating TRS
$\xR$ is confluent if and only if $\cp{\xR} \subseteq {\Db{\xR}}$.
Kapur et al.~\cite{KMN88} showed that joinability of prime
critical pairs still guarantees confluence for terminating TRSs.

\begin{defi}
A critical peak \smash{$t \Lo{}{p} s \Ro{}{\epsilon} u$} is \emph{prime}
if all proper subterms of $s|_p$ are in normal form. Critical pairs
derived from prime critical peaks are called prime. The set of all prime
critical pairs of a TRS $\xR$ is denoted by $\pcp{\xR}$.
\end{defi}

\begin{thmC}[\cite{KMN88}]
\label{thm:KMN88}
Let $\xR$ be a terminating \textup{TRS}. The \textup{TRS} $\xR$ is
confluent if and only if $\pcp{\xR} \subseteq {\Db{\xR}}$. \qed
\end{thmC}

\begin{exa}
\label{exa:pcp}
Using \thmref{KMN88}, we show confluence of the following TRS $\xR$:
\begin{align*}
\m{f}(\m{a} + x) &\cR x &
\m{f}(x + \m{a}) &\cR x &
\m{f}(\m{b} + x) &\cR x &
\m{f}(x + \m{b}) &\cR x &
\m{a} &\cR \m{b}
\end{align*}
Since every rewrite step reduces the total number of $\m{f}$ and $\m{a}$,
termination of $\xR$ follows. Observe that $\xR$ admits six (modulo
symmetry) critical peaks of the form 
$t \Lab{\xR}{p} s \Rab{\xR}{\epsilon} u$:
\begin{align*}
&  \peak{\underline{\m{f}(\m{a} + \m{a})}}{\m{a}}{\m{a}}
&& \peak{\underline{\m{f}(\m{a} + \m{b})}}{\m{a}}{\m{b}}
&& \peak{\underline{\m{f}(\m{b} + \m{a})}}{\m{a}}{\m{b}}
&& \peak{\underline{\m{f}(\m{b} + \m{b})}}{\m{b}}{\m{b}}
&& \peak{\m{f}(\underline{\m{a}} + x)}{\m{f}(\m{b} + x)}{x}
&& \peak{\m{f}(x + \underline{\m{a}})}{\m{f}(x + \m{b})}{x}
\end{align*}
Here the positions $p$ in $s$ are indicated by underlining. The first
three critical peaks are not prime due to the reducible proper subterm
$\m{a}$ in $s|_p$, while the others are prime. Therefore,
$\pcp{\xR} = \SET{\m{b} \approx \m{b},\, \m{f}(\m{b} + x) \approx x,\,
\m{f}(x + \m{b}) \approx x}$. Since $\pcp{\xR} \subseteq {\Db{\xR}}$
holds, $\xR$ is confluent.
\end{exa}

\subsection{Rewriting Modulo}
\label{sec:rewritingmodulo}

We now turn our attention to rewriting modulo an equational theory. To
that end, we start by giving general definitions for abstract rewrite
systems (ARSs). Let $\xA = \langle A,{\R} \rangle$ be an ARS and
$\sim$ an equivalence relation on $A$. We write 
$\bigLR$ for $\L \cup \R \cup \sim$ (conversion modulo $\sim$),
${\R}/{\sim}$ for $\sim \cdot \R \cdot \sim$ (rewriting modulo $\sim$)
and $\Da{\sim}$ for $\Ra{*} \cdot \sim \cdot \La{*}$
(valley modulo $\sim$). In order to simplify our terminology,
we sometimes refer to conversions modulo $\sim$ as conversions. Hence,
whether conversions include $\sim$-steps or not depends on the type of
rewriting we consider. Given $\xA$, we
denote $\langle A,{{\R}/{\sim}} \rangle$ by ${\xA}/{\sim}$. The ARS
$\xA$ is \emph{terminating modulo $\sim$} if there are no infinite
rewrite sequences with ${\R}/{\sim}$ and
\emph{Church--Rosser modulo $\sim$} if
${\bigLRa{*}} \subseteq {\Da{\sim}}$.

Confluence and the
Church--Rosser property do not coincide for rewriting modulo
equational theories \cite{O98}. Therefore, a notion of completeness
for this setting has to depend on the Church--Rosser property instead
of confluence in order to facilitate a decision procedure for validity
problems of equational theories with complete presentations.
Hence, we define an ARS $\xA$ to be \emph{complete modulo $\sim$}
if it is terminating modulo
$\sim$ and Church--Rosser modulo $\sim$. While there is no distinction
for termination modulo $\sim$ between $\xA$ and ${\xA}/{\sim}$
(${\sim \cdot \sim} = {\sim}$ by transitivity), it makes a
considerable difference whether we talk about the Church--Rosser
modulo $\sim$ property and therefore completeness modulo $\sim$ of
$\xA$ or ${\xA}/{\sim}$.

\begin{exa}
\label{exa:cr-a-vs-asim}
Consider the ARS $\xA$ together with the equivalence relation $\sim$
defined as follows:
\[
\m{a} \sim \m{b} \R \m{c}
\]
Since $\m{a} \bigLRa{*} \m{c}$ and
$\m{a} \mathrel{{\R}/{\sim}} \m{c}$, the ARS ${\xA}/{\sim}$ is
Church--Rosser modulo $\sim$. However, $\xA$ is not
Church--Rosser modulo $\sim$ as $\m{a}$ and $\m{c}$
are different normal forms of $\R$ which are not
equivalent in $\sim$.
\end{exa}

The following lemma is taken from
\cite[Lemma~4.1.12]{A95}. It establishes an important connection
between the Church--Rosser modulo $\sim$ property of an ARS $\xA$ and
${\xA}/{\sim}$. 

\begin{lemC}[\cite{A95}]
\label{lem:crconnection}
Let $\xA = \langle A,{\R} \rangle$ and
$\xA' = \langle A,{\rightharpoonup} \rangle$ be \textup{ARS}s and
$\sim$ an equivalence relation on $A$ such that
${\R} \subseteq {\rightharpoonup} \subseteq {{\R}/{\sim}}$. If
$\xA'$ is Church--Rosser modulo $\sim$ then ${\xA}/{\sim}$ is
Church--Rosser modulo $\sim$. \qed
\end{lemC}
\noindent 
Note that the implication cannot be strengthened to an equivalence due
to \exaref{cr-a-vs-asim}.  The following example illustrates a successful
use of \lemref{crconnection}.

\begin{exa}
Consider the ARS $\xA = \langle A,{\R} \rangle$ from \exaref{cr-a-vs-asim}.
For $\xA' = \langle A, {\rightharpoonup} \rangle$ with
${\rightharpoonup} = {\sim \cdot \R}$ we can easily establish
the Church--Rosser modulo $\sim$ property. Hence, by
\lemref{crconnection} also ${\xA}/{\sim}$ is Church--Rosser
modulo $\sim$.
\end{exa}

The definitions and results for ARSs carry over to TRSs by replacing
the equivalence relation $\sim$ by the equational theory
$\LRab{\xB}{*}$ of an ES $\xB$. Most theoretical results of this
article are not specific to AC but hold for an arbitrary base theory
$\xB$ of which we only demand that $\var{\ell} = \var{r}$ for all
${\ell \approx r} \in \xB$. We abbreviate $\LRab{\xB}{*}$ by $\bsim$
and the rewrite relation $\RmB$ is defined as
$\bsim \cdot \RbR \cdot \bsim$. Furthermore, we write
$\Dab{\xR}{\sim}$ for the relation
$\Rab{\xR}{*} \cdot \bsim^{} \cdot \Lab{\xR}{*}$. Note
that the omission of $\xB$ in the notation $\Dab{\xR}{\sim}$ poses no
problem as $\xB$ is usually fixed and can be inferred from
the context in all other cases.
Termination modulo
$\xB$ is shown by \emph{$\xB$-compatible reduction orders} $>$, i.e.,
$>$ is well-founded, closed under contexts and substitutions and
${\bsim \cdot > \cdot \bsim} \subseteq {>}$. Note that $\RmB$
is not a very practical rewrite relation: It is undecidable in
general, and even if the equational theory of $\xB$ is decidable,
rewriting a term $t$ requires to check every member of its
$\xB$-equivalence class. A more practical alternative due to Peterson
and Stickel \cite{PS81} is the relation
$\Rb{\xR,\xB}$ defined as follows:
$s \Rb{\xR,\xB} t$ if there exist a rule $\ell \R r \in \xR$,
a substitution $\sigma$ and a position $p$ such that
$s|_p \bsim \ell\sigma$ and $t = s[r\sigma]_p$. This definition is
very similar to the definition of standard rewriting but with
$\xB$-matching instead of normal matching. It is immediate from the
respective definitions that the inclusions
${\RbR} \subseteq {\Rb{\xR,\xB}} \subseteq {\RmB}$ hold.

\begin{exa}
Consider the TRS
$\xR$ consisting of the rules
\begin{align*}
\m{0} + y &\R y &
\m{s}(x) + y &\R \m{s}(x + y) 
\end{align*}
where $+$ is an AC symbol. We have $y + \m{s}(x) \RcAC \m{s}(x + y)$ as
$y + \m{s}(x) \acsim \m{s}(x) + y$ but $y + \m{s}(x)$ is a normal form with
respect to $\RbR$. Furthermore, $(y + z) + \m{s}(x) \RmAC \m{s}(x + y) + z$
since $(y + z) + \m{s}(x) \acsim (\m{s}(x) + y) + z$ but this step is not
possible with $\RcAC$ as the rewrite step takes place at position $1$
whereas we need to search for AC equivalent terms at the root position.
\end{exa}

\section{Church--Rosser Criterion}
\label{sec:cc}

In this section we present a new characterization of the Church--Rosser
property modulo an equational theory $\xB$ for left-linear TRSs which are
terminating modulo $\xB$. The original left-linear completion
procedure~\cite{A95,B91} relies on the following 
theorem. We use critical pairs between two different TRSs $\xR_1$ and
$\xR_2$ with the same signature. We write $\cp{\xR_1,\xR_2}$ for the set
of all critical pairs originating from overlaps
$\langle \rho_1, p, \rho_2 \rangle$ where $\rho_1 \in \xR_1$ and
$\rho_2 \in \xR_2$.  The union of $\cp{\xR_1,\xR_2}$ and
$\cp{\xR_2,\xR_1}$ is denoted by $\cppm{\xR_1,\xR_2}$.

\begin{thmC}[\cite{H80}]
\label{thm:accp}
A left-linear \textup{TRS} $\xR$ which is terminating modulo $\xB$
is Church--Rosser modulo $\xB$ if and only if
$\cp{\xR} \cup \cppm{\xR,\xB^\pm} \subseteq {\Dab{\xR}{\sim}}$. \qed
\end{thmC}

\begin{exa}[continued from \exaref{pcp}]
\label{exa:accp}
Let $\xB = \SET{x + y \approx y + x}$. We show that the left-linear
TRS $\xR$ of \exaref{pcp} is Church--Rosser modulo $\xB$. As before, the
termination of $\xR/\xB$ follows from the fact that every rewrite step
reduces the total number of $\m{f}$ and $\m{a}$. The six critical peaks
result in $\cp{\xR} = \SET{
\m{a} \approx \m{a},\,
\m{a} \approx \m{b},\,
\m{b} \approx \m{a},\,
\m{b} \approx \m{b},\,
\m{f}(\m{b} + x) \approx x,\,
\m{f}(x + \m{b}) \approx x}$.
Observe that $\xR$ and $\xB$ admit four critical peaks of the forms 
$t \Lab{\xR}{p} s \LRab{\xB}{\epsilon} u$ or
$t \LRab{\xB}{p} s \Rab{\xR}{\epsilon} u$:
\begin{align*}
&  \peak{\m{f}(\underline{\m{a} + x})}{\m{f}(x + \m{a})}{x}
&& \peak{\m{f}(\underline{x + \m{a}})}{\m{f}(\m{a} + x)}{x}
&& \peak{\m{f}(\underline{\m{b} + x})}{\m{f}(x + \m{b})}{x}
&& \peak{\m{f}(\underline{x + \m{b}})}{\m{f}(\m{b} + x)}{x}
\end{align*}
Thus, $\cppm{\xR,\xB^\pm} = \SET{
\m{f}(x + \m{a}) \approx x,\,
\m{f}(\m{a} + x) \approx x,\,
\m{f}(x + \m{b}) \approx x,\,
\m{f}(\m{b} + x) \approx x
}$. It is easy to see that $t \Dab{\xR}{\sim} u$ for all critical
pairs $t \approx u$ in $\cp{\xR} \cup \cppm{\xR,\xB^\pm}$. 
Hence, by \thmref{accp}, the TRS $\xR$ is
Church--Rosser modulo $\xB$.
\end{exa}

\thmref{accp} allows us to use ordinary critical pairs
instead of $\xB$-critical pairs, i.e., critical pairs stemming from
local peaks modulo $\xB$ ($\LbR \cdot \bsim \cdot \RbR$) \cite{JK86}.
In particular, equational unification
modulo $\xB$ can be replaced by syntactic unification which improves
efficiency. Furthermore, the form of the joining sequence
($\Dab{\xR}{\sim}$) is advantageous as it uses the normal rewrite
relation and just one $\xB$-equality check in the end as opposed to
rewrite steps modulo the theory ($\bsim \cdot \RbR \cdot \bsim$).
However, left-linearity is necessary in \thmref{accp}
as the following example illustrates.

\begin{exa}
Consider the AC-terminating TRS $\xR$ consisting of the single rule
$\m{f}(x,x) \R x$ with $+$ as an additional AC function symbol as well as
the following conversion:
\[
x + y \,\LbR\, \m{f}(x + y,x + y) \,\acsim\, \m{f}(x + y,y + x)
\]
There are no critical pairs in $\xR$ and between $\xR$ and
$\AC^\pm$, so $\cp{\xR} = \cppm{\xR,\AC^\pm} = \varnothing$.
However, $x + y \,\Dab{\xR}{\sim}\, \m{f}(x + y,y + x)$
does not hold because $x + y$ and $\m{f}(x + y,y + x)$ are $\xR$-normal
forms which are not AC equivalent. Thus, $\xR$ is not
Church--Rosser modulo AC.
\end{exa}

Even though most results in this article only apply to
left-linear TRSs, we do not demand that $\xB$ is linear (i.e.,
having equations only consisting of linear terms). The reason
for that can be found in the
proof of \thmref{accp} where left-linearity is crucial but only needed for
rewrite rules: If $\xR$ is not left-linear, the proof reveals that
we also have to consider variable overlaps (i.e., overlaps at variable
positions) between $\xR$ and $\xB$. This would make a confluence analysis
solely based on critical pairs impossible.
In the remainder of this section we show that joinability of
prime critical pairs suffices for the characterization of
\thmref{accp}.

\subsection{Peak-and-Cliff Decreasingness}

We present a new Church--Rosser modulo criterion, dubbed
\emph{peak-and-cliff decreasingness}. This is an extension of
peak decreasingness \cite{HMSW19} which is a simple confluence criterion
for ARSs designed to replace complicated proof orderings in the
correctness proofs of completion procedures. As such,
peak-and-cliff decreasingness will be a crucial ingredient
in the correctness proof we give for Avenhaus' inference system
in \secref{avenhaus}.

In the following, we assume that equivalence relations $\sim$ are
defined as the reflexive and transitive closure of a symmetric
relation $\eqstep$, so ${\sim} = {\eqstep^*}$. We refer to conversions
of the form $\L \cdot \eqstep$ or $\eqstep \cdot \R$ as
\emph{local cliffs} and conversions of the form $\L \cdot \R$ as
\emph{local peaks}. Furthermore, we assume that rewrite and equality
steps are labeled with labels from the same set $I$, so let
$\xA = \langle A, \SET{\Rb{\alpha} \mid \alpha \in I} \rangle$ be an
ARS
and ${\sim} = (\,\bigcup\,\SET{{\eqstep_\alpha} \mid \alpha \in I})^*$
an equivalence relation on $A$. Note that several different steps
can have the same label. Furthermore, for the sake of better readability,
we allow ourselves some freedom of where to annotate our arrow relations
with labels, closure operators (reflexive, transitive, \dots) and the like.
This will cause no confusion.

\begin{defi}
An ARS $\xA$ is \emph{peak-and-cliff decreasing} if there
is a well-founded order $>$ on $I$ such that for all $\alpha, \beta \in I$
the inclusions
\begin{align*}
{\Lb{\alpha} \cdot \Rb{\beta}} &\,\subseteq\,
{\bigLRo{\vee\alpha\beta}{*}} &
{\Lb{\alpha} \cdot \eqstep_\beta} &\,\subseteq\,
{\bigLRo{\vee\alpha}{*} \cdot \Lo{\beta}{=}}
\end{align*}
hold. Here $\Vee\alpha\beta$ denotes the set
$\SET{\gamma \in I \mid \text{$\alpha > \gamma$ or $\beta > \gamma$}}$
and if $J \subseteq I$ then $\Rb{J}$ denotes
$\bigcup\,\SET{{\Rb{\gamma}} \mid \gamma \in J}$. We abbreviate
$\Vee\alpha\alpha$ to $\Vee\alpha$.
\end{defi}

\begin{exa}
Let $I = \SET{1,2}$ and 
consider the following $I$-labeled ARS
$\xA = \langle A, \SET{\Rb{1},\Rb{2}} \rangle$
equipped with the equivalence relation
${\sim} = ({\eqstep_1} \cup {\eqstep_2})^*$:
\begin{center}
\begin{tikzpicture}[>=stealth]
\node (a) at (2,1.4) {$\m{a}$};
\node (b) at (0,0.7) {$\m{b}$};
\node (c) at (4,0.7) {$\m{c}$};
\node (d) at (2,0.0) {$\m{d}$};
\draw[->]
 (a) edge[out=180,in=40] node[above] {$\scriptstyle 2$} (b)
 (a) edge[out=0,in=140]  node[above] {$\scriptstyle 2$} (c)
 (c) edge[out=-140,in=0] node[below] {$\scriptstyle 1$} (d)
 (b) edge[out=-40,in=180] node[below] {$\scriptstyle 1$} (d)
;
\draw[|-|]
 (b) edge node[below] {$\scriptstyle 1$} (c)
;
\end{tikzpicture}
\end{center}
Using the well-founded order $2 > 1$ it is easily established that $\xA$
is peak-and-cliff decreasing: For the only local peak 
$\m{b} \Lb{2} \m{a} \Rb{2} \m{c}$
we have $\m{b} \eqstep_1 \m{c}$.
The two local cliffs
$\m{d} \Lb{1} \m{b} \eqstep_1 \m{c}$
and
$\m{d} \Lb{1} \m{c} \eqstep_1 \m{b}$
are handled by 
$\m{d} \Lb{1} \m{c}$ and
$\m{d} \Lb{1} \m{b}$, respectively.
\end{exa}

We show that peak-and-cliff decreasingness is a sufficient condition
for the Church--Rosser modulo property.

\begin{lem}
\label{lem:acconvpeakcliff}
Every conversion modulo $\sim$ is
a valley modulo $\sim$ or contains a local peak or cliff:
\[
{\bigLRa{*}} \:\subseteq\: {\Da{\sim}} \cup
{\bigLRa{*} \cdot \L \cdot \R \cdot \bigLRa{*}} \cup
{\bigLRa{*} \cdot \eqstep \cdot \R \cdot \bigLRa{*}} \cup
{\bigLRa{*} \cdot \L \cdot \eqstep \cdot \bigLRa{*}}
\]
\end{lem}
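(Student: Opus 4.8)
The plan is to reduce the statement to a purely combinatorial observation about the shape of conversions. First I would expand an arbitrary conversion $a \bigLRa{*} b$ into a sequence of elementary steps: since $\sim = \eqstep^*$ we have ${\bigLRa{*}} = (\R \cup \L \cup \eqstep)^*$, so the conversion can be written as $a = t_0 \to t_1 \to \cdots \to t_n = b$ where each step $t_{i-1} \to t_i$ is one of $\R$, $\L$, or $\eqstep$. Reading the conversion as a word over the three-letter alphabet $\{\R, \L, \eqstep\}$ turns the claim into a statement about which adjacent pairs of letters can occur. Stating this expansion explicitly matters because the three disjuncts on the right-hand side are phrased in terms of the single-step relations $\eqstep$, $\L$, $\R$.

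Next I would perform a case analysis according to whether the conversion contains one of three critical adjacent patterns. If some consecutive pair of steps forms $t_{i-1} \L t_i \R t_{i+1}$ (a local peak), then splitting the conversion as $a \bigLRa{*} t_{i-1}$, the peak $t_{i-1} \L t_i \R t_{i+1}$, and $t_{i+1} \bigLRa{*} b$ places the pair $(a,b)$ in ${\bigLRa{*} \cdot \L \cdot \R \cdot \bigLRa{*}}$. The two cliff patterns $t_{i-1} \eqstep t_i \R t_{i+1}$ and $t_{i-1} \L t_i \eqstep t_{i+1}$ are handled identically, landing in ${\bigLRa{*} \cdot \eqstep \cdot \R \cdot \bigLRa{*}}$ and ${\bigLRa{*} \cdot \L \cdot \eqstep \cdot \bigLRa{*}}$, respectively. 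So whenever one of these patterns is present we are immediately done.

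It remains to treat the case where none of these three adjacent patterns occurs; here I must show that the conversion is a valley modulo $\sim$. The key is that forbidding the pairs $(\L,\R)$, $(\L,\eqstep)$ and $(\eqstep,\R)$ constrains the admissible successor of each step: after an $\L$-step only a further $\L$-step may follow, and after an $\eqstep$-step only an $\eqstep$- or an $\L$-step may follow, while $\R$ imposes no restriction. Reading left to right, this forces the letter word into the shape $\Ra{*} \, \eqstep^* \, \La{*}$, i.e.\ a (possibly empty) block of forward steps, followed by a block of $\eqstep$-steps, followed by a block of backward steps. Since $\eqstep^* = {\sim}$, the corresponding conversion is exactly ${\Ra{*} \cdot \sim \cdot \La{*}}$, that is $a \Da{\sim} b$, as required.

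I expect the only genuine obstacle to be making this ``forced shape'' argument rigorous rather than hand-wavy. The cleanest route is a short induction on $n$ (or on the index of the first $\L$- or $\eqstep$-step), observing that the forbidden-pair constraint propagates monotonically: once an $\eqstep$-step has occurred no later step can be an $\R$-step, and once an $\L$-step has occurred no later step can be an $\R$- or $\eqstep$-step. Everything else in the proof is routine bookkeeping with relational composition.
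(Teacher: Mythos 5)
Your proposal is correct, but it takes a genuinely different route from the paper. The paper proves the lemma by induction on the length of the conversion: it peels off the leftmost step $a \bigLR c$, applies the induction hypothesis to the tail (which is either a valley modulo $\sim$ or already contains a peak or cliff), and then does a case analysis on the type of the prepended step and on whether the rewrite part $c \Ra{k} {}\cdot \sim \cdot \La{*} b$ of the tail's valley is empty ($k = 0$) or not, extracting a peak or cliff exactly when prepending the step destroys the valley shape. You instead argue globally on the word of elementary steps over $\SET{\R, \L, \eqstep}$: either one of the three bad adjacent patterns occurs somewhere (immediately placing the pair in one of the three peak/cliff sets), or their absence forces the word into the shape $\Ra{*} \cdot \eqstep^* \cdot \La{*}$, which is precisely $\Da{\sim}$. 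Both arguments are sound; you correctly handle the one delicate point, namely expanding $\sim$-steps into $\eqstep$-steps so that the single-step relations in the three disjuncts are actually exhibited (the paper deals with this via the sub-cases ``$c = c'$'' versus ``$c \eqstep \cdot \sim c'$''). Your version makes the combinatorial content more transparent, at the cost of needing the auxiliary ``forced shape'' induction you mention at the end; the paper's step-prepending induction is more uniform and closer in style to the multiset-based induction used later in Theorem~\ref{thm:pcd}.
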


\begin{proof}
We abbreviate
${\bigLRa{*} \cdot \L \cdot \R \cdot \bigLRa{*}} \cup
{\bigLRa{*} \cdot \eqstep \cdot \R \cdot \bigLRa{*}} \cup
{\bigLRa{*} \cdot \L \cdot \eqstep \cdot \bigLRa{*}}$
to $\lrhup$.
Suppose $a \bigLRa{n} b$. We show $a \Da{\sim} b$ or $a \lrhup b$ by
induction on $n$. If $n = 0$ then $a = b$ and therefore also
$a \Da{\sim} b$. If $n > 0$ then
$a \bigLR c \bigLRa{n-1} b$ for some $c$. The induction hypothesis
yields $c \Da{\sim} b$ or $c \lrhup b$. In the latter case we are
already done because ${\bigLR \cdot \lrhup} \subseteq {\lrhup}$. In
the former case, note that there exists a $k$ such that
$c \Ra{k} \cdot \sim \cdot \La{*} b$. We continue with a
case analysis on $k$:
\begin{itemize}[label=$\triangleright$]
\item $k = 0$:
From $a \bigLR c$ we obtain $a \R c$, $a \L c$ or $a \sim c$.
If $a \R c$ we immediately obtain $a \Da{\sim} b$.
If $a \L c$ we have $a \L c \sim c' \La{*} b$ for some $c'$. Now
$c = c'$ and hence $a \Da{\sim} b$
or $c \eqstep \cdot \sim c'$ and therefore
$a \lrhup b$. If $a \sim c$ we have $a \Da{\sim} b$ because $\sim$ is
transitive.
\smallskip
\item $k > 0$:
From $a \bigLR c$ we obtain $a \R c$, $a \L c$ or $a \sim c$.
If $a \R c$ we immediately obtain $a \Da{\sim} b$.
If $a \L c$ then there exists a $c'$ such that $a \L c \R c' \bigLRa{*} b$
and therefore $a \lrhup b$. Finally, if $a \sim c$ then
$a \sim c \R c' \bigLRa{*} b$ for some $c'$. If $a = c$ then we obtain
$a \Da{\sim} b$ from the induction hypothesis as there is a conversion
between $a$ and $b$ of length $n-1$. Otherwise, $a \sim \cdot \eqstep c$
and therefore $a \lrhup b$. \qedhere
\end{itemize}
\end{proof}
\noindent 
The proof of the following theorem relies on the fact that the
well-founded order on an index set obtained from peak-and-cliff
decreasingness
can be extended to a well-founded order on multisets of labels.  
Here, the multiset extension of an order $>$ is defined as follows:
$M_1 \gtmul M_2$ if $M_2 = (M_1 \setminus X) \uplus Y$ where
$\varnothing \neq X \subseteq M_1$ and for all $y \in Y$ there
exists an $x \in X$ such that $x > y$. It is well-known that the multiset
extension of a well-founded order is also well-founded \cite{DM79}.

\begin{thm}
\label{thm:pcd}
If $\xA$ is a peak-and-cliff decreasing \textup{ARS} then $\xA$
is Church--Rosser modulo $\sim$.
\end{thm}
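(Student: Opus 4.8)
The plan is to establish the relational inclusion $\bigLRa{*} \subseteq \Da{\sim}$ by well-founded induction on conversions, where a conversion $C$ is measured by the finite multiset $M(C)$ of labels attached to \emph{all} of its steps — rewrite steps and $\eqstep$-steps alike — and these multisets are compared by the multiset extension $\gtmul$ of the well-founded order $>$ on $I$ supplied by peak-and-cliff decreasingness. Since the multiset extension of a well-founded order is again well-founded (as recalled just before the statement), induction on $M(C)$ along $\gtmul$ is a legitimate principle. This mirrors the proof pattern for peak decreasingness, with the extra cliff case accounting for $\eqstep$-steps.

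In the induction step I would fix a conversion $C$ from $a$ to $b$ and apply \lemref{acconvpeakcliff}: either $C$ is already a valley modulo $\sim$, giving $a \Da{\sim} b$ directly, or $C$ contains a local peak $\Lb{\alpha} \cdot \Rb{\beta}$ or a local cliff of the form $\Lb{\alpha} \cdot \eqstep_\beta$ or $\eqstep_\beta \cdot \Rb{\alpha}$. In the non-valley cases I would replace the offending pair of adjacent steps, leaving the surrounding prefix and suffix of $C$ untouched, to obtain a new conversion $C'$ with the same endpoints $a$ and $b$, and then invoke the induction hypothesis on $C'$. For a local peak the first inclusion of peak-and-cliff decreasingness replaces $\SET{\alpha,\beta}$ by labels from $\bigLRo{\vee\alpha\beta}{*}$, each strictly below $\alpha$ or below $\beta$; hence $M(C) \gtmul M(C')$ with $X = \SET{\alpha,\beta}$ and the induction hypothesis yields $a \Da{\sim} b$. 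The cliff $\eqstep_\beta \cdot \Rb{\alpha}$ is handled by reading it backwards and using symmetry of $\eqstep$, which turns it into an instance of $\Lb{\alpha} \cdot \eqstep_\beta$; so it suffices to treat the latter.

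The step I expect to be the main obstacle is verifying that the cliff replacement genuinely decreases the measure, since the right-hand side $\bigLRo{\vee\alpha}{*} \cdot \Lo{\beta}{=}$ of the second inclusion can \emph{retain} the label $\beta$ through its optional $\Lb{\beta}$-step. Here I would split into two subcases. If the $\Lb{\beta}$-step is present, then $C'$ arises from $C$ by deleting one $\alpha$ and inserting only labels strictly below $\alpha$ (the $\beta$ being preserved), so $M(C) \gtmul M(C')$ with $X = \SET{\alpha}$. If the $\Lb{\beta}$-step is absent, then both $\alpha$ and $\beta$ are deleted and replaced only by labels strictly below $\alpha$, so $M(C) \gtmul M(C')$ with $X = \SET{\alpha,\beta}$ and every inserted label dominated by $\alpha$. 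In either subcase the measure strictly decreases, so the induction hypothesis applies. As every non-valley conversion is thereby shown to admit joinable endpoints via a strictly smaller conversion, well-foundedness of $\gtmul$ forces $a \Da{\sim} b$ for every conversion, which is precisely the Church--Rosser modulo $\sim$ property.
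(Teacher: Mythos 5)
Your proposal is correct and follows essentially the same route as the paper's proof: the same multiset measure on conversions, the same well-founded induction via the multiset extension $\gtmul$, and the same appeal to \lemref{acconvpeakcliff} to locate a local peak or cliff to replace. Your explicit case split on whether the optional $\Lo{\beta}{=}$-step is present in the cliff replacement is a slightly more detailed verification of the measure decrease than the paper spells out, but it is the same argument.
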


\begin{proof}
With every conversion $C$ we associate a multiset $M_C$ consisting
of labels of its rewrite and equivalence relation steps. Since $\xA$
is peak-and-cliff decreasing, there is a well-founded order $>$ on
$I$ which allows us to replace conversions $C$ of the forms
$\Lb{\alpha} \cdot \Rb{\beta}$, $\Lb{\alpha} \cdot \eqstep_\beta$
and $\eqstep_\beta \cdot \Rb{\alpha}$ by conversions $C'$ where
$M_C \gtmul M_{C'}$. Hence, we prove that $\xA$ is Church--Rosser
modulo $\sim$, i.e., ${\bigLRa{*}} \subseteq {\Da{\sim}}$, by
well-founded induction on $\gtmul$. Consider a conversion
$a \bigLRa{*} b$ which we call $C$. By \lemref{acconvpeakcliff} we
have $a \Da{\sim} b$ (which includes the case that $C$ is
empty) or one of the following cases holds:
\begin{align*}
a &\bigLRa{*} \cdot \L \cdot \R \cdot \bigLRa{*} b &
a &\bigLRa{*} \cdot \L \cdot \eqstep \cdot \bigLRa{*} b &
a &\bigLRa{*} \cdot \eqstep \cdot \R \cdot \bigLRa{*} b
\end{align*}
If $a \Da{\sim} b$ we are immediately done. In the remaining cases,
we have a local peak or cliff with concrete labels $\alpha$ and
$\beta$, so $M_C = \Gamma_1 \uplus \SET{\alpha, \beta} \uplus \Gamma_2$.
Since $\xA$ is peak-and-cliff decreasing, there is a conversion $C'$ with
$M_{C'} = \Gamma_1 \uplus \Gamma \uplus \Gamma_3$ where
$\SET{\alpha, \beta} \gtmul \Gamma$. Hence, $M_C \gtmul M_{C'}$ and
we finish the proof by applying the induction hypothesis.
\end{proof}
\noindent 
The above theorem can also be shown by verifying that it is a special
case of the Church--Rosser modulo criterion known as decreasing
diagrams~\cite[Theorem~31]{FvO13}. Note, however, that it is not as
obvious as the fact that peak decreasingness~\cite{HMSW19} is an instance
of decreasing diagrams for confluence~\cite{vO94}.\footnote{Pointed out
by Vincent van Oostrom (personal communication).}

For the main result of this section, a simpler version of
peak-and-cliff decreasingness suffices. The full power
of peak-and-cliff decreasingness will be needed in the
correctness proof of Avenhaus' inference system in
\secref{avenhaus}.

\begin{defi}
Let $\xA = \langle A, {\R} \rangle$ be an ARS equipped with
a $\sim$-compatible well-founded order $>$ on $A$ and
${\sim} = {\eqstep^*}$ an equivalence relation on $A$. We write
\smash{$b \Ro{}{a} c$ $(b \eqstepover{a} c)$} if $b \R c$ $(b \eqstep c)$
and $b \sim a$, i.e., steps are labeled with elements of $A$ as
indices.
We say that $\xA$ is
\emph{source decreasing modulo $\sim$} if the inclusions
\begin{align*}
{\L a \R} &\,\subseteq\, {\bigLRo{\vee a}{*}} &
{\L a \eqstep} &\,\subseteq\, {\bigLRo{\vee a}{*} \cdot \Lo{a}{=}}
\end{align*}
hold for all $a \in A$. Here $\L a \R$ $(\L a \eqstep)$ denotes the
binary relation consisting of all pairs $(b,c)$ such that $a \R b$
and $a \R c$ $(a \eqstep c)$. Furthermore, \smash{$\bigLRo{\vee a}{*}$}
denotes the binary relation consisting of all pairs of elements
which are connected by a conversion where each step is labeled by an
element smaller than $a$.
\end{defi}

\begin{cor}
\label{cor:sd}
Every \textup{ARS} that is source decreasing modulo $\sim$ is
Church--Rosser modulo $\sim$.
\end{cor}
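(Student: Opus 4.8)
The plan is to recognize source decreasingness modulo $\sim$ as the special case of peak-and-cliff decreasingness in which the index set $I$ is the carrier $A$ itself and every step is labeled by its source, and then to invoke \thmref{pcd}. Concretely, I take $>$ to be the given $\sim$-compatible well-founded order on $A = I$ and read the labeled relations $\Ro{}{\alpha}$ and $\eqstepover{\alpha}$ from the definition of source decreasingness as the relations $\Rb{\alpha}$ and $\eqstep_\alpha$ required by peak-and-cliff decreasingness. Since every rewrite step $b \R c$ is in particular an $\Ro{}{b}$-step and every equality step $b \eqstep c$ an $\eqstepover{b}$-step, the union of the labeled relations recovers $\R$ and $\sim$ exactly, so this describes the same ARS $\xA$. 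It then remains to check the two inclusions of peak-and-cliff decreasingness for all $\alpha, \beta \in A$.

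Consider first a local peak, i.e.\ a pair $(b,c)$ witnessed by some $a$ with $a \R b$, $a \R c$, $a \sim \alpha$ and $a \sim \beta$. Then $(b,c) \in {\L a \R}$, so source decreasingness yields a conversion $b \bigLRo{\vee a}{*} c$ all of whose labels are smaller than $a$. Because $>$ satisfies $\sim$-compatibility ($\sim \cdot > \cdot \sim \subseteq {>}$) and $a \sim \alpha$, any $\gamma < a$ also satisfies $\gamma < \alpha$ and hence lies in $\Vee\alpha\beta$; this gives $b \bigLRo{\vee\alpha\beta}{*} c$, which is the first inclusion. (If no such $a$ exists, the relation $\Lb\alpha \cdot \Rb\beta$ is empty and the inclusion holds trivially.) For a local cliff we have $a \R b$, $a \eqstep c$, $a \sim \alpha$ and $a \sim \beta$, so $(b,c) \in {\L a \eqstep}$ and source decreasingness provides $b \bigLRo{\vee a}{*} \cdot \Lo{a}{=} c$. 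The leading conversion again has labels below $a$, hence below $\alpha$, so it is a $\bigLRo{\vee\alpha}{*}$ conversion; the final $\Lo{a}{=}$ step can moreover be taken as an $\Lo{\beta}{=}$ step (see below), which gives $b \bigLRo{\vee\alpha}{*} \cdot \Lo{\beta}{=} c$, the second inclusion. Peak-and-cliff decreasingness therefore holds, and \thmref{pcd} yields that $\xA$ is Church--Rosser modulo $\sim$.

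I expect the only delicate point to be the trailing step of the cliff case. The step $\Lo{a}{=}$ is either empty---in which case there is nothing to do---or a genuine step whose source, call it $c$, satisfies $c \sim a$; to match the bound $\Lo{\beta}{=}$ demanded by peak-and-cliff decreasingness one must re-label this step from $a$ to $\beta$, which is legitimate precisely because $c \sim a \sim \beta$ by transitivity of $\sim$, so that $c \sim \beta$ and the step is equally a $\beta$-labeled step. Keeping track of this source, and more generally using $\sim$-compatibility of $>$ to pass from labels below $a$ to labels below $\alpha$, are the only places where the hypotheses are genuinely used.
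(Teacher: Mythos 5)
Your proposal is correct and takes essentially the same route as the paper: instantiate peak-and-cliff decreasingness with $I = A$ and source-equivalent labels, then apply \thmref{pcd}. You simply spell out the relabeling details (using $\sim$-compatibility of $>$ and transitivity of $\sim$) that the paper's one-line proof leaves implicit.
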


\begin{proof}
In the definition of peak-and-cliff decreasingness we set $I = A$. Note that
this implies $\alpha = \beta$ for all local peaks and cliffs. Hence,
the ARS is peak-and-cliff decreasing and we can conclude by \thmref{pcd}.
\end{proof}

\subsection{Prime Critical Pairs}

We show that joinability of prime critical pairs is enough for
characterizing the Church--Rosser modulo property.
In the following, $\pcppm{\xR,\xB^\pm}$ denotes the restriction of
$\cppm{\xR,\xB^\pm}$ to prime critical pairs but where
irreducibility is always checked with respect to $\xR$, i.e., the
critical peaks $t \Lab{\xR}{p} s \LRab{\xB}{\epsilon} u$ and
$t' \LRab{\xB}{p} s \Rab{\xR}{\epsilon} u'$ are both prime if 
proper subterms of $s|_p$ are irreducible with respect to $\xR$.

\begin{exa}[continued from \exaref{accp}]
\label{exa:acpcp}
Recall the four critical peaks between $\xR$ and $\xB$. The first two
peaks
$\m{f}(x + \m{a}) \Lab{\xB}{1} \m{f}(\underline{\m{a} + x}) \Rb{\xR} x$
and
$\m{f}(\m{a} + x) \Lab{\xB}{1} \m{f}(\underline{x + \m{a}}) \Rb{\xR} x$
are not prime due to the reducible proper subterm $\m{a}$. The other two
are prime. Hence, $\pcppm{\xR,\xB^\pm} =
\SET{\m{f}(\m{b} + x) \approx x,\, \m{f}(x + \m{b}) \approx x}$.
\end{exa}

Correctness of \thmref{accp} can be shown by the combination of
\corref{sd} with the following lemma.

\begin{lemC}[\cite{H80}]
\label{lem:accplemma}
For left-linear \textup{TRS}s $\xR$, the following inclusion holds:
\[
{\LbR \cdot \LRb{\xB}} \,\subseteq\,
{\Dab{\xR}{\sim} \cup \LRb{\cppm{\xR,\,\xB^\pm}}} \eqno\qed
\]
\end{lemC}
\noindent 
In order to integrate the refinement by prime critical pairs some more
observations are required. 
Note that for the original refinement by Kapur et al.~\cite{KMN88},
correctness is shown in the context of general AC rewriting by flattening
terms with AC symbols.  We employ our novel notion of peak-and-cliff
decreasingness instead.  Our proof can be seen as an extension of the
corresponding proof given for the ordinary Church--Rosser property in
\cite{HMSW19}.

\begin{defi}
Given a TRS $\xR$ and terms $s$, $t$ and $u$, we write
$t \trid{s}{} u$ if $s \Rab{\xR}{+} t$, $s \Rab{\xR}{+} u$ and
$t \Db{\xR} u$ or $t \LRb{\pcp{\xR}} u$. We write $t \tridt{s} u$ if
$s \Rab{\xR}{+} t$, $s \sim u$ and $t \Dab{\xR}{\sim}u$ or
$t \LRb{\pcppm{\xR,\xB^\pm}} u$. Furthermore,
${\ttrid{s}} = \SET{(u,t) \mid t \tridt{s} u}$.
\end{defi}

Note that the joinability of ordinary critical peaks is not affected
by incorporating $\xB$ into conversions. Hence, the following result
is taken from \cite[Lemma 2.15]{HMSW19} and therefore stated without
proof. Here, $t \trid{s}{2} u$ means that there is a term $v$
with $t \trid{s}{} v$ and $v \trid{s}{} u$.

\begin{lemC}[\cite{HMSW19}]
\label{lem:cpeak}
If $t \Lo{\xR}{p} s \Ro{\xR}{\epsilon} u$ is
a critical peak of a \textup{TRS} $\xR$ then $t \trid{s}{2} u$. \qed
\end{lemC}

\begin{lem}
\label{lem:ccliff}
Let $\xR$ be a left-linear \textup{TRS}.
\begin{enumerate}
\item
If $t \Lo{\xR}{p} s \LRo{\xB}{\epsilon} u$ is a critical peak then
$t \trid{s}{} \cdot \tridt{s} u$.
\item
If $t \LRo{\xB}{p} s \Ro{\xR}{\epsilon} u$ is a critical peak then
$t \ttrid{s} \cdot \trid{s}{} u$.
\end{enumerate}
\end{lem}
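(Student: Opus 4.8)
The plan is to prove part~(1) and to obtain part~(2) by the mirror-image argument, exchanging the roles of $\xR$ and $\xB$ and swapping $t$ with $u$: in~(1) the inner step at $p$ comes from $\xR$ and the outer root step from $\xB$, whereas in~(2) it is the other way round, so that the pure-$\xR$ relation $\trid{s}{}$ and the $\sim$-carrying relation $\tridt{s}$ (with its converse $\ttrid{s}$) simply trade places. For~(1) I would write the underlying overlap as $\langle \ell_1 \to r_1, p, \ell_2 \to r_2 \rangle$ with $\ell_1 \to r_1 \in \xR$, $\ell_2 \to r_2 \in \xB^\pm$ and most general unifier $\sigma$, so that $s = \ell_2\sigma$, $t = \ell_2\sigma[r_1\sigma]_p$, $u = r_2\sigma$ and $s|_p = \ell_1\sigma$. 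Following the (omitted) proof of \lemref{cpeak}, the argument then splits on whether the peak is prime, the non-prime case being handled inductively by peeling off a reducible proper subterm of $s|_p$.

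If the peak is prime, then the critical pair $t \approx u$ lies in $\pcppm{\xR,\xB^\pm}$ by definition, hence $t \LRb{\pcppm{\xR,\xB^\pm}} u$; combined with $s \Rab{\xR}{+} t$ (the step at $p$) and $s \sim u$ (the single $\xB$-step at the root) this gives $t \tridt{s} u$, and prepending the reflexive step $t \trid{s}{} t$ (valid since $t \Db{\xR} t$) yields the desired $t \trid{s}{} \cdot \tridt{s} u$. The mirror computation settles the prime case of~(2).

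The substance is the non-prime case, where $s|_p = \ell_1\sigma$ contains a reducible proper subterm, which I would locate using the left-linearity of $\ell_1$. If the redex sits at a proper function position of $\ell_1$, it induces an ordinary overlap of $\xR$ into $\ell_1 \to r_1$, so the two $\xR$-reducts of $s$ at $p$ and at the deeper redex form a local peak that is an instance of an ordinary critical peak, closed with \lemref{cpeak} and closure of $\Db{\xR}$ and $\LRb{\pcp{\xR}}$ under contexts and substitutions. If instead the redex lies below a variable of $\ell_1$, I reduce it and propagate the reduction consistently through every occurrence that the (possibly non-linear) outer pattern $\ell_2$ forces; the hypothesis $\var{\ell_2} = \var{r_2}$ is precisely what ensures that the reduced term still admits the root $\xB$-step, so that the equivalence $s \sim u$ is preserved along the reduction. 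In the function-position case \lemref{cpeak} closes the induced ordinary peak directly, while in the variable case the reduction produces a strictly smaller overlap-induced cliff to which the induction hypothesis applies; in both the auxiliary reductions are absorbed into the joinability components $\Db{\xR}$ and $\Dab{\xR}{\sim}$ of $\trid{s}{}$ and $\tridt{s}$, keeping the conclusion within its prescribed two-relation form.

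I expect the main obstacle to be exactly this non-prime bookkeeping: pinning down the position of the redex relative to both the inner pattern $\ell_1 \in \xR$ and the outer pattern $\ell_2 \in \xB$ (which we do not assume to be linear), and threading the equational step so that $s \sim u$ survives reduction. The delicate point is to keep the auxiliary reductions inside $\Db{\xR}$ and $\Dab{\xR}{\sim}$ so that they do not collide with a prime-critical-pair step and thereby break the absorption; the two facts that make this work are the left-linearity of $\xR$ and the assumption $\var{\ell_2} = \var{r_2}$ on the base theory $\xB$.
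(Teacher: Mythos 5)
Your prime case and the appeal to symmetry for part~(2) match the paper, but the non-prime case has a genuine gap, and it is precisely where you anticipated trouble. The target relation $\trid{s}{} \cdot \tridt{s}$ has a rigid width of exactly two indexed steps, both with source $s$. Your plan violates this in two ways. First, for the induced $\xR$-peak you invoke \lemref{cpeak}, which only yields $t \trid{s}{2} v$; composed with $v \tridt{s} u$ this gives a three-step chain, not $\trid{s}{} \cdot \tridt{s}$. The ``absorption into the joinability components'' that you rely on is not available: a step $t \LRb{\pcp{\xR}} v$ is a single rewrite step with a specific prime critical pair and cannot swallow auxiliary $\xR$-reductions while remaining inside $\LRb{\pcp{\xR}}$ (and likewise for $\LRb{\pcppm{\xR,\xB^\pm}}$). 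Second, your induction is ill-posed: after ``peeling off'' a redex the resulting configuration is a local cliff but in general no longer a critical peak, so the induction hypothesis (stated for critical peaks) does not apply; moreover the source term changes, so the hypothesis would deliver relations indexed by the reduct rather than by $s$. You also never actually establish the cliff half $v \tridt{s} u$ in the variable case --- the ``propagation through the non-linear outer pattern'' is described but not connected to either $\Dab{\xR}{\sim}$ or $\LRb{\pcppm{\xR,\xB^\pm}}$.

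The missing idea is the choice of an \emph{innermost} redex: since $s|_p$ has a reducible proper subterm, there is a position $q > p$ such that $s \Ro{\xR}{q} v$ and all proper subterms of $s|_q$ are in $\RbR$-normal form. A single such step suffices, with $v$ as the middle term; no induction and no case split on function versus variable positions of $\ell_1$ is needed. For the cliff $v \Lo{\xR}{q} s \LRo{\xB}{\epsilon} u$, \lemref{accplemma} gives $v \Dab{\xR}{\sim} u$ or a critical-pair step, and any critical peak arising at $q$ is prime because $s|_q$ is innermost; hence $v \tridt{s} u$. For the peak $t \Lo{\xR}{p} s \Ro{\xR}{q} v$, the Critical Pair Lemma gives $t \Db{\xR} v$ or $t \LRb{\cp{\xR}} v$, and in the latter case the peak $v|_p \Lo{\xR}{q \setminus p} s|_p \Ro{\xR}{\epsilon} t|_p$ is prime for the same reason, so $t \trid{s}{} v$ in one step. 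All the delicate bookkeeping you worry about (variable overlaps, non-linearity of $\xB$) is already packaged inside \lemref{accplemma} and the Critical Pair Lemma; the only new ingredient is that innermost choice of $q$, which makes every critical pair that appears prime for free.
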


\begin{proof}
We only prove (1) as the other statement is symmetrical. If all
proper subterms of $s|_p$ are in normal form with respect to $\RbR$,
$t \approx u \in \pcp{\xR,\xB^\pm}$ which establishes
$t \tridt{s} u$. Since also $t \trid{s}{} t$, we obtain the desired
result. Otherwise, there are a position $q > p$ and a term $v$ such
that $s \Ro{\xR}{q} v$ and all proper subterms of $s|_q$ are in
normal form with respect to $\RbR$. Together with
\lemref{accplemma} we obtain $v \Dab{\xR}{\sim} u$ or
$v \LRb{\pcppm{\xR,\xB^\pm}} u$. In both cases $v \tridt{s} u$
holds. A similar case analysis applies to the local peak
$t \Lo{\xR}{p} s \Ro{\xR}{q} v$: By the Critical Pair Lemma, either
$t \Db{\xR} v$ or $t \LRb{\cp{\xR}} v$. In the latter case
\[
v|_p \Lo{\xR}{q \setminus p} s|_p \Ro{\xR}{\epsilon} t|_p
\]
is an instance of a prime critical peak as $q > p$ and all proper
subterms of $s|_q$ are in normal form with respect to
$\RbR$. Closure of rewriting under contexts and substitutions
yields $t \LRb{\pcp{\xR}} v$. Therefore, we have $t \trid{s}{} v$ in
both cases, concluding the proof.
\end{proof}
\noindent 
The following lemma generalizes the previous results of this section
to arbitrary local peaks and cliffs.

\begin{lem}
\label{lem:pcplemma}
Let $\xR$ be a left-linear \textup{TRS}.
\begin{enumerate}
\item
If $t \LbR s \RbR u$ then $t \trid{s}{2} u$.
\item
If $t \LbR s \LRb{\xB} u$ then $t \trid{s}{} \cdot \tridt{s} u$.
\end{enumerate}
\end{lem}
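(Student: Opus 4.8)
The plan is to obtain both statements from the corresponding root-overlap results, \lemref{cpeak} and \lemref{ccliff}, by the standard position-based case analysis underlying the Critical Pair Lemma. I write the two contracted redexes of the given local peak or cliff at positions $p_1$ (the $\xR$-step, producing $t$) and $p_2$ (producing $u$), and distinguish three situations: $p_1 \parallel p_2$ (parallel), and $p_1 \le p_2$ or $p_2 \le p_1$ (nested), where the nested case is split according to whether the relative position of the inner redex ($p_2 \setminus p_1$, respectively $p_1 \setminus p_2$) lies at or below a variable position of the outer left-hand side (\emph{variable overlap}) or in its set of function positions (\emph{function overlap}). The key structural observation, used throughout, is that every relation occurring in the definitions of $\trid{s}{}$ and $\tridt{s}$ — namely $\Db{\xR}$, $\Dab{\xR}{\sim}$, $\LRb{\pcp{\xR}}$ and $\LRb{\pcppm{\xR,\xB^\pm}}$ — together with $\Rab{\xR}{+}$ and $\bsim$, is closed under contexts and substitutions. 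Hence a function overlap, being a context-substitution instance $s = C[s'\theta]$ of a critical peak or cliff with source $s'$, lets me lift the conclusion of \lemref{cpeak} or \lemref{ccliff} from $s'$ to $s$ verbatim.

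For (1) the peak is $t \Lab{\xR}{p_1} s \Rab{\xR}{p_2} u$. Since $\Db{\xR}$ and $\LRb{\pcp{\xR}}$ are symmetric, the relations $\trid{s}{}$ and $\trid{s}{2}$ are symmetric, so I may assume $p_1 \le p_2$ or $p_1 \parallel p_2$. If $p_1 \parallel p_2$ the steps commute and $t \Db{\xR} u$; in a variable overlap the inner reduct becomes reducible again after rewriting the remaining copies of the contracted subterm, which likewise yields $t \Db{\xR} u$. In both cases $s \Rab{\xR}{+} t$ and $s \Rab{\xR}{+} u$ give $t \trid{s}{} u$, and composing with the reflexive step $t \trid{s}{} t$ gives $t \trid{s}{2} u$. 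A function overlap is an instance of a critical peak, so \lemref{cpeak} yields $t' \trid{s'}{2} u'$, which lifts to $t \trid{s}{2} u$.

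For (2) the cliff is $t \Lab{\xR}{p_1} s \LRab{\xB}{p_2} u$, and since the $\xR$- and $\xB$-steps are not interchangeable both orders $p_1 \le p_2$ and $p_2 \le p_1$ must be treated. Whenever $p_1$ and $p_2$ are parallel or form a variable overlap, a direct computation (discussed below) produces $t \Dab{\xR}{\sim} u$; together with $s \Rab{\xR}{+} t$ and $s \bsim u$ this gives $t \tridt{s} u$, so the claim holds with the reflexive left factor $t \trid{s}{} t$. For a function overlap the cliff is an instance of a critical cliff: if the $\xB$-step is outer I apply \lemref{ccliff}(1) and lift, obtaining $t \trid{s}{} \cdot \tridt{s} u$ directly; if the $\xR$-step is outer I apply \lemref{ccliff}(2), obtaining after lifting $u \ttrid{s} \cdot \trid{s}{} t$, and then use the symmetry of $\trid{s}{}$ to rearrange this into $t \trid{s}{} \cdot \tridt{s} u$.

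The main obstacle is the variable-overlap analysis in (2), which is the only place the left-linearity hypothesis is actually used. If the $\xR$-rule $\ell \to r$ is outer and the $\xB$-step takes place inside the substitution at the (by left-linearity unique) variable position of $\ell$, then $\ell$ still matches after the $\xB$-step; applying $\ell \to r$ to $u$ and using that $r\tau \bsim r\tau'$ — since $\bsim$ is closed under contexts and the substitutions $\tau,\tau'$ differ only by a $\bsim$-related binding — yields $t \bsim \cdot \Lab{\xR}{} u$ and hence $t \Dab{\xR}{\sim} u$. Without left-linearity the altered copy would destroy the match, which is precisely the variable-overlap obstruction excluded by the hypothesis. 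If instead the (possibly non-linear) $\xB$-equation is outer and the $\xR$-step lies inside its left-hand side at a variable position, I first rewrite \emph{all} copies of the contracted subterm with $\xR$, so that the term becomes a $\xB$-instance again, and only then apply the $\xB$-equation — legitimate because $\var{\ell} = \var{r}$ for equations of $\xB$ — which again lands in $t \Dab{\xR}{\sim} u$. In every case $s \Rab{\xR}{+} t$ and $s \bsim u$ hold by construction, so $t \tridt{s} u$ and the proof closes.
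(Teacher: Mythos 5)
Your proof is correct and follows essentially the same route as the paper: both reduce the critical (function-overlap) cases to \lemref{cpeak} and \lemref{ccliff} via closure under contexts and substitutions, including the same use of the symmetry of $\trid{s}{}$ to rearrange the conclusion of \lemref{ccliff}(2) into $t \trid{s}{} \cdot \tridt{s} u$. The only difference is that you re-derive the parallel and variable-overlap cases by an explicit position analysis, whereas the paper dispatches them (and all of part~(1)) by citing \lemref{accplemma} and the Critical Pair Lemma from prior work.
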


\begin{proof}
We only prove (2) as the proof of (1) (which depends on the Critical
Pair Lemma) can be found in \cite[Lemma 2.16]{HMSW19}. Let
$t \LbR s \LRb{\xB} u$. From \lemref{accplemma} we obtain
$t \Dab{\xR}{\sim} u$ or $t \LRb{\cppm{\xR,\,\xB^\pm}} u$. In the
former case we are done as $t \trid{s}{} u \trid{s}{} u$. For the
latter case we further distinguish between the two subcases
$t \Rb{\cp{\xR,\,\xB^\pm}} u$ and $u \Rb{\cp{\xB^\pm,\,\xR}} t$.
Note that this list of subcases is exhaustive due to the
direction of the local cliff. If $t \Rb{\cp{\xR,\,\xB^\pm}} u$,
$t \trid{s}{} \cdot \tridt{s} u$ follows from \lemref{ccliff}(1) and
closure of rewriting under contexts and substitutions. If
$u \Rb{\cp{\xB^\pm,\,\xR}} t$, $u \ttrid{s} \cdot \trid{s}{} t$ and
therefore $t \trid{s}{} \cdot \tridt{s} u$ follows from
\lemref{ccliff}(2) as well as closure of rewriting under contexts
and substitutions.
\end{proof}

Now, we are able to prove the main result of this section, a novel
necessary and sufficient condition for the Church--Rosser property
modulo an equational theory $\xB$ which strengthens the original
result from \cite{H80} to prime critical pairs.

\begin{thm}
\label{thm:acpcp}
A left-linear \textup{TRS} $\xR$ which is terminating modulo $\xB$
is Church--Rosser modulo $\xB$ if and only if
$\pcp{\xR} \cup \pcppm{\xR,\xB^\pm} \subseteq {\Dab{\xR}{\sim}}$.
\end{thm}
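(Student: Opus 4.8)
The plan is to dispose of the ``only if'' direction in one line and to concentrate on the ``if'' direction through source decreasingness. For the forward implication I would simply note that $\pcp{\xR} \subseteq \cp{\xR}$ and $\pcppm{\xR,\xB^\pm} \subseteq \cppm{\xR,\xB^\pm}$, so if $\xR$ is Church--Rosser modulo $\xB$ then \thmref{accp} already yields $\cp{\xR} \cup \cppm{\xR,\xB^\pm} \subseteq \Dab{\xR}{\sim}$ and a fortiori the inclusion restricted to the prime subset.

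For the converse I would regard $\xR$ as the ARS $\langle \xT(\xF,\xV), \RbR \rangle$ with equivalence $\bsim$ and prove it source decreasing modulo $\bsim$; \corref{sd} then gives Church--Rosser modulo $\xB$. As the $\bsim$-compatible well-founded order $>$ I take the transitive closure of $\RmB$ (so $s > t$ iff $s$ reduces to $t$ by a nonempty $\RmB$-sequence): it is well founded since $\xR$ is terminating modulo $\xB$, and $\bsim \cdot \RmB \cdot \bsim = \RmB$ makes it $\bsim$-compatible. Labelling each step by its source, a step is ``below $a$'' exactly when its source is $< a$. I will use repeatedly that $s \Rab{\xR}{+} t$ implies $s > t$, and that $d \bsim d'$ with $d' < a$ forces $d < a$ by $\bsim$-compatibility.

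The core of the proof is a bridge from the relations $\trid{s}{}$ and $\tridt{s}$ to conversions built from below-$s$ steps. First I would show that $t \trid{s}{} u$ implies $t \bigLRo{\vee s}{*} u$: by definition $s \Rab{\xR}{+} t$ and $s \Rab{\xR}{+} u$, so $t, u < s$, and the joining is either $t \Db{\xR} u$ or, after invoking the hypothesis $\pcp{\xR} \subseteq \Dab{\xR}{\sim}$, a valley $t \Rab{\xR}{*} x \bsim y \Lab{\xR}{*} u$; every rewrite source lies on a reduct of $t$ or of $u$ and every term in the $\bsim$-chain is $\bsim$-equivalent to such a reduct, so all sources are $< s$. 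Second, and this is the subtle point, I would show $w \tridt{s} c$ implies $w \bigLRo{\vee s}{*} \cdot \Lo{s}{=} c$. Here $s \Rab{\xR}{+} w$ gives $w < s$, but $s \bsim c$ keeps $c$ at the level of $s$; after discharging a possible prime-critical-pair step through $\pcppm{\xR,\xB^\pm} \subseteq \Dab{\xR}{\sim}$ we have $w \Rab{\xR}{*} x \bsim y \Lab{\xR}{*} c$. The sources on the $w$-side and along the $\bsim$-chain are $< s$ as before. The decisive observation is that the sequence $c \Rab{\xR}{*} y$ cannot be empty: $y$ is $\bsim$-equivalent to a reduct $x$ of $w < s$, hence $y < s$, whereas $c = y$ would give both $c < s$ and $c \bsim s$, contradicting irreflexivity of $>$ via $\bsim$-compatibility. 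Thus $c \RbR c_1 \Rab{\xR}{*} y$ with $c_1 < s$, so every step except the single first step $c \RbR c_1$ has source $< s$, while that first step has source $c \bsim s$ and supplies the trailing $\Lo{s}{=}$ step.

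With these bridges the two source-decreasing inclusions follow from \lemref{pcplemma}. For a local peak $b \LbR a \RbR c$, part~(1) gives $b \trid{a}{2} c$, and applying the first bridge to both factors yields $b \bigLRo{\vee a}{*} c$. For a local cliff consisting of $b \LbR a$ and $a \eqstep c$ (that is, $a \LRb{\xB} c$), part~(2) produces a $w$ with $b \trid{a}{} w \tridt{a} c$; the first bridge turns $b \trid{a}{} w$ into $b \bigLRo{\vee a}{*} w$ and the second turns $w \tridt{a} c$ into $w \bigLRo{\vee a}{*} \cdot \Lo{a}{=} c$, and these compose to $b \bigLRo{\vee a}{*} \cdot \Lo{a}{=} c$. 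This establishes source decreasingness modulo $\bsim$, and \corref{sd} concludes. I expect the cliff analysis---justifying the obligatory final $\Lo{a}{=}$ step from the source-equivalence $s \bsim c$---to be the main obstacle, as it is precisely where the peak/cliff asymmetry of rewriting modulo forces a non-valley shape on the joining conversion.
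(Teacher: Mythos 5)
Your proposal is correct and follows essentially the same route as the paper: the only-if direction is immediate, and the if direction establishes source decreasingness modulo $\xB$ with the order ${>} = {\RmB[+]}$, invoking \lemref{pcplemma} for local peaks and cliffs and then \corref{sd}. The only (harmless) difference is presentational: you package the label bookkeeping into two explicit ``bridge'' claims for $\trid{s}{}$ and $\tridt{s}$ and observe that the $\xR$-sequence on the $\bsim$-side of the cliff valley is necessarily nonempty, whereas the paper treats the empty case separately (it is in fact vacuous, as your irreflexivity argument shows).
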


\begin{proof}
The only-if direction is trivial. For a proof of the if
direction, we show that $\xR$ is source decreasing modulo $\xB$; the
Church--Rosser property modulo $\xB$ is then an immediate
consequence of \corref{sd}. From the termination of $\xR$ modulo
$\xB$ we obtain the well-founded order ${>} = {\RmB[+]}$.

Consider an arbitrary local peak $t \LbR s \RbR u$.
\lemref{pcplemma}(1) yields a term $v$ such that
$t \trid{s}{} v \trid{s}{} u$. Together with
$\pcp{\xR} \subseteq {\Dab{\xR}{\sim}}$ we obtain
$t \Dab{\xR}{\sim} v \Dab{\xR}{\sim} u$. By definition,
$s > t, v, u$ so the corresponding condition required by source
decreasingness modulo $\xB$ is fulfilled.

Now consider an arbitrary local cliff $t \LbR s \LRb{\xB} u$.
\lemref{pcplemma}(2) yields a term $v$ such that
$t \trid{s}{} v \tridt{s} u$. Together with
$\pcp{\xR} \cup \pcppm{\xR,\xB^\pm} \subseteq {\Dab{\xR}{\sim}}$ we
obtain $t \Dab{\xR}{\sim} v \Dab{\xR}{\sim} u$. By definition,
$s > t, v$ and $s \sim u$. The conversion between $v$ and $u$ is of
the form $v \Rab{\xR}{*} \cdot \sim \cdot \Lab{\xR}{k} u$ for some
$k$. If $k = 0$ then all steps between $v$ and $u$ can be labeled with
terms which are smaller than $s$. If $k > 0$ then there exists a
$w < s$ such that
$v \Rab{\xR}{*} \cdot \sim \cdot \Lab{\xR}{k-1} w \LbR u$. In
this case all steps of the conversion are labeled with terms which
are smaller than $s$ except for the rightmost step which we may
label with $s$. Hence, the corresponding condition required by
source decreasingness modulo $\xB$ is fulfilled in all cases.
\end{proof}

\begin{exa}[continued from \exaref{acpcp}]
One can verify the termination of $\xR/\xB$ and the inclusion
$\pcp{\xR} \cup \pcppm{\xR,\xB^\pm} \subseteq {\Dab{\xR}{\sim}}$.
By \thmref{acpcp} the Church--Rosser modulo property holds.
\end{exa}

Finally, we show that the previous result does not hold if we just
demand termination of $\xR$. The counterexample shows this for the
concrete case of AC and is based on Example 4.1.8 from \cite{A95}
which uses an ARS. Note that the usage of prime critical pairs instead
of critical pairs has no effect.

\begin{exa}
\label{exa:acterm}
Consider the TRS $\xR$ consisting of the rules
\begin{align*}
(\m{b}+\m{a})+\m{a} &\cR \m{a}+(\m{a}+\m{b}) &
(\m{a}+\m{b})+\m{a} &\cR \m{a}+(\m{a}+\m{b}) &
(\m{a}+\m{a})+\m{b} &\cR \m{a}+(\m{a}+\m{b}) \\
\m{a}+(\m{a}+\m{b}) &\cR \m{b}+(\m{a}+\m{a}) &
\m{b}+(\m{a}+\m{a}) &\cR \m{c} & & \\ 
\m{a}+(\m{a}+\m{b}) &\cR \m{a}+(\m{b}+\m{a}) &
\m{a}+(\m{b}+\m{a}) &\cR \m{d} & &
\end{align*}
where $+$ is an AC function symbol. Clearly, the (prime) critical
pairs of $\xR$ are joinable modulo AC because
$\m{b} + (\m{a} + \m{a}) \acsim \m{a} + (\m{b} + \m{a})$. For
$\pcppm{\xR,\AC^\pm}$ we only have to consider the rules which
rewrite to $\m{c}$ and $\m{d}$ respectively since all other rules
only involve AC equivalent terms. Modulo symmetry, these (prime)
critical pairs are:
\begin{align*}
\m{c} &\cE \m{b}+(\m{a}+\m{a}) &
\m{c} &\cE (\m{a}+\m{a})+\m{b} &
\m{c} &\cE (\m{b}+\m{a})+\m{a} \\
\m{c}+x &\cE \m{b}+((\m{a}+\m{a})+x) &
x+\m{c} &\cE (x+\m{b})+(\m{a}+\m{a}) \\
\m{d} &\cE \m{a}+(\m{a}+\m{b}) &
\m{d} &\cE (\m{b}+\m{a})+\m{a} &
\m{d} &\cE (\m{a}+\m{b})+\m{a} \\
\m{d}+x &\cE \m{a}+((\m{b}+\m{a})+x) &
x+\m{d} &\cE (x+\m{a})+(\m{b}+\m{a})
\end{align*}
Removing the joinable (prime) critical pairs leaves us with
\begin{align*}
\m{c}+x &\cE \m{b}+((\m{a}+\m{a})+x) &
x+\m{c} &\cE (x+\m{b})+(\m{a}+\m{a}) \\
\m{d}+x &\cE \m{a}+((\m{b}+\m{a})+x) &
x+\m{d} &\cE (x+\m{a})+(\m{b}+\m{a})
\end{align*}
which are not joinable at the moment. However, we can extend $\xR$
by the rewrite rules
\begin{align*}
\m{b} +((\m{a}+\m{a})+x) &\cR (\m{b}+(\m{a}+\m{a}))+x &
(x+\m{b})+(\m{a}+\m{a}) &\cR x+(\m{b}+(\m{a}+\m{a})) \\
\m{a} +((\m{b}+\m{a})+x) &\cR (\m{a}+(\m{b}+\m{a}))+x &
(x+\m{a})+(\m{b}+\m{a}) &\cR x+(\m{a}+(\m{b}+\m{a}))
\end{align*}
in order to make them joinable. Note that the additional (prime)
critical pairs in $\pcp{\xR} \cup \pcppm{\xR,\AC^\pm}$ caused by
adding the new rules are trivially joinable modulo AC as all of
these new critical pairs are AC equivalent. To sum up,
$\pcp{\xR} \cup \pcppm{\xR,\AC^\pm} \subseteq {\Dab{\xR}{\sim}}$.
Termination of $\xR$ can be checked by
e.g.~the termination tool
\href{\termexamplelink}{\TTTT}~\cite{KSZM09}, but the loop
\[
\m{a} + (\m{a} + \m{b}) \,\RbR\, \m{a} + (\m{b} + \m{a}) \,\acsim\,
\m{a} + (\m{a} + \m{b})
\]
shows that $\xR$ is not AC terminating. We have
$\m{c} \bigLRa{*} \m{d}$ but not $\m{c} \Dab{\xR}{\sim} \m{d}$ as
the terms are normal forms and not AC equivalent. Hence, $\xR$ is
not Church--Rosser modulo AC.
\end{exa}

\section{Avenhaus' Inference System}
\label{sec:avenhaus}

The idea of completion modulo an equational theory $\xB$ for
left-linear systems where the normal rewrite relation can be used to
decide validity problems has been put forward by Huet \cite{H80}. To
the best of our knowledge, inference systems for this approach are
only presented in the books by Avenhaus \cite{A95} and Bachmair
\cite{B91}. This section presents a new correctness proof of a version
of Avenhaus' inference system for finite runs in the spirit of
\cite{HMSW19} which does not rely on proof orderings. Correctness of
Bachmair's system is established by a simulation result in
\secref{bachmair}.

\subsection{Inference System}

\begin{defi}
\label{def:ia}
The inference system \ia is parameterized by a fixed
$\xB$-compatible reduction order $>$. It transforms pairs
consisting of an ES $\xE$ and a TRS $\xR$ over the common signature
$\xF$ according to the following inference rules where
$s \approxpm t$ denotes either
$s \approx t$ or $t \approx s$:
\[
\begin{array}{r@{\quad}c@{\quad}l@{\qquad}r@{\quad}c@{\quad}l@{}}
\ded & \ds \frac{\xE,\xR}{\xE \cup \SET{s \approx t},\xR}
& \text{if $s \LbR \cdot \RbR t$} & \ori
& \ds \frac{\xE \uplus \SET{s \approxpm t},\xR}{\xE,\xR \cup \SET{s \R t}}
& \text{if $s > t$} \\[0.6cm]
\ded & \ds \frac{\xE,\xR}{\xE,\xR \cup \SET{t \R s}}
& \text{if $s \LbR \cdot \LRb{\xB} t$} & \del &
\ds \frac{\xE \uplus \SET{s \approx t},\xR}{\xE,\xR}
& \text{if $s \bsim t$} \\[0.6cm]
\sip & \ds
\frac{\xE \uplus \SET{s \approxpm t},\xR}{\xE \cup \SET{u \approx t},\xR}
& \text{if $s \RmB u$} & \col &
\ds \frac{\xE,\xR \uplus \SET{s \to t}}{\xE \cup \SET{u \approx t},\xR}
& \text{if $s \RbR u$} \\[0.6cm]
& & & \com &
\ds \frac{\xE,\xR \uplus \SET{s \R t}}{\xE,\xR \cup \SET{s \R u}}
& \text{if $t \RmB u$}
\end{array}
\]
\end{defi}

A step in an inference system \infsys from an ES $\xE$ and a TRS $\xR$
to an ES $\xE'$ and a TRS $\xR'$ is denoted by
\smash{$(\xE,\xR) \seq{\infsys} (\xE',\xR')$}. The parentheses of the
pairs are only used when the expression is surrounded by text in order to
increase readability.

\begin{defi}
\label{def:fairness}
Let $\xE$ be an ES. A finite sequence
\[
\xE_0,\xR_0 \,\iA\, \xE_1,\xR_1 \,\iA\, \cdots \,\iA\, \xE_n,\xR_n
\]
with $\xE_0 = \xE$ and $\xR_0 = \varnothing$ is a \emph{run} for
$\xE$. If $\xE_n \neq \varnothing$, the run \emph{fails}. The run is
\emph{fair} if $\xR_n$ is left-linear and the following inclusions hold:
\begin{align*}
\pcp{\xR_n} &\,\subseteq\,
{{\Dab{\xR_n}{\sim}} \cup {\bigcup_{i=0}^n {\LRb{\xE_i \cup \xR_i}}}} &
{\pcppm{\xR_n, \xB^\pm}} &\,\subseteq\,
{{\Dab{\xR_n}{\sim}} \cup {\bigcup_{i=0}^n {\LRb{\xR_i}}}}
\end{align*}
\end{defi}

Intuitively, fair and non-failing runs yield a $\xB$-complete
presentation $\xR_n$ of the initial set of equations $\xE$, i.e.,
${\LRab{\xE \cUp \xB}{*}} = {\LRab{\xR_n \cUp \xB}{*}} \subseteq
{\Dab{\xR_n}{\sim}}$. In particular, the inference rules are designed
to preserve the equational theory augmented by $\xB$.

\begin{exa}
\label{exa:a95_ex_4_2_15b}
In this example we illustrate a successful run for the ES $\xE$
consisting of the equations
\begin{align*}
\m{f}(x + y) &\nE{1} \m{f}(x) + \m{f}(y) &
\m{f}(\m{0}) &\nE{2} \m{0} &
x + \m{0} &\nE{3} x
\end{align*}
where $+$ is an AC function symbol. This example is taken from
\cite[Example 4.2.15(b)]{A95}. As suggested by \defref{fairness} we
only consider prime critical pairs. As AC-compatible reduction
order we use the polynomial interpretation \cite{BL87}
\begin{align*}
+_{\mathbb{N}}(x,y) &\,=\, x+y+1 &
\m{f}_{\mathbb{N}}(x) &\,=\, x^2 + x &
\m{0}_{\mathbb{N}} &\,=\, 1
\end{align*}
and start by orienting equations
2 and 3
into the rules
\begin{align*}
\m{f}(\m{0}) &\nR{2'} \m{0} &
x + \m{0} &\nR{3'} x
\intertext{which only leads to prime critical pairs between rule 3$'$
and $\AC^\pm$. We add these prime critical pairs as rules by
applying \ded:}
\m{0} + x &\nR{4} x &
x + (y + \m{0}) &\nR{6} x + y \\
x + (\m{0} + y) &\snR{5} x + y &
(x + y) + \m{0} &\snR{7} x + y
\end{align*}
Keeping rule 4
enables us to collapse the remaining three rules to the equations
\begin{align*}
x + y &\nE{5'} x + y &
x + y &\nE{6'} x + y &
x + y &\nE{7'} x + y
\intertext{
which can all be deleted. Next, we deduce the prime critical pair
stemming from rules 3$'$
and 4
which is just the
trivial equation $\m{0} \approx \m{0}$ and therefore can be
deleted. We continue by deducing the prime critical pairs between
rule 4
and $\AC^\pm$ which adds the new rules}
\m{0} + (x + y) &\nR{8} x + y &
(\m{0} + x) + y &\nR{9} x + y &
(x + \m{0}) + y &\nR{10} x + y
\intertext{which can all be collapsed to trivial equations}
x + y &\nE{8'} x + y &
x + y &\nE{9'} x + y &
x + y &\nE{10'} x + y
\intertext{and therefore deleted. Now we orient the only remaining
original equation 1 to}
&& \m{f}(x + y) &\nR{1'} \m{f}(x) + \m{f}(y)
\end{align*}
which gives rise to two prime critical pairs between rules 1$'$ and 3$'$
as well as rule 4
\begin{align*}
\m{f}(x) + \m{f}(\m{0}) &\nE{11} \m{f}(x) &
\m{f}(\m{0}) + \m{f}(x) &\nE{12} \m{f}(x)
\intertext{which can be simplified to the trivial equations}
\m{f}(x) &\nE{11'} \m{f}(x) &
\m{f}(x) &\nE{12'} \m{f}(x)
\end{align*}
and therefore deleted. Finally, we deduce rules corresponding to the
prime critical pairs between rule 1$'$
and $\AC^\pm$:
\begin{align*}
\m{f}(y + x) &\nR{13} \m{f}(x) + \m{f}(y) &
\m{f}(x + (y + z)) &\nR{14} \m{f}(x + y) + \m{f}(z) \\
&& \m{f}((x + y) + z) &\snR{15} \m{f}(x) + \m{f}(y + z)
\intertext{Applications of \col and \sip transform these rules to AC
equivalent equations}
\m{f}(y) + \m{f}(x) &\nE{13'} \m{f}(x) + \m{f}(y) &
\m{f}(x) + (\m{f}(y) + \m{f}(z)) &\nE{14'}
(\m{f}(x) + \m{f}(y)) + \m{f}(z) \\[.5ex]
&& (\m{f}(x) + \m{f}(y)) + \m{f}(z) &\snE{15'}
\m{f}(x) + (\m{f}(y) + \m{f}(z))
\end{align*}
which can be deleted. Thus, the TRS consisting of the rules
1$'$, 2$'$, 3$'$ and 4
is the result of a fair and non-failing run which is an AC complete
presentation of the original equations as we will show in the
correctness proof.
\end{exa}

The next example shows that deducing local cliffs as rules instead
of equations as well as the restriction to $\RbR$ in the \col rule are
crucial properties of the inference system.

\begin{exa}
\label{exa:cliffscol}
Consider the ES $\xE$ consisting of the single equation
\[
x + \m{0} \cE x
\]
where $+$ is an AC function symbol. We clearly have
$\m{0} + x \LRab{\xE \cUp \AC}{*} x$, so an AC complete system $\xC$
representing $\xE$ has to satisfy $\m{0} + x \Dab{\xC}{\sim}
x$. There is just one way to orient the only equation in $\xE$ which
results in the rule $x + \m{0} \R x$. Since we want our run to be
fair, we add the rules stemming from the prime critical pairs
between $x + \m{0} \R x$ and $\AC^\pm$:
\begin{align*}
\m{0} + x &\cR x &
x + (\m{0} + y) &\cR x + y &
x + (y + \m{0}) &\cR x + y &
(x + y) + \m{0} &\cR x + y
\end{align*}
If collapsing with $\RmAC$ is allowed, all these rules become
trivial equations and can therefore be deleted. Thus, the modified
inference system allows for a fair run which is not complete as
$\m{0} + x \Dab{\xR}{\sim} x$ does not hold for
$\xR = \SET{x + \m{0} \R x}$. Furthermore, if we add pairs of terms
stemming from local cliffs as equations, we get the same result by
applications of \sip.
\end{exa}

The inference system presented in \defref{ia} is almost the same
as the one presented by Avenhaus in \cite{A95}. However, since we only
consider finite runs, the encompassment condition for the \col rule
has been removed in the spirit of \cite{ST13}. (The original side
condition is $s \RbR u$ with $\ell \R r \in \xR$ and $s \enc l$.)
The following example shows that this can lead to smaller
$\xB$-complete systems.

\begin{exa}
\label{exa:encompassment}
Consider the ES $\xE$ consisting of the single equation
\[
\m{f}(x + y) \cE \m{f}(x) + \m{f}(y)
\]
where $+$ is an AC function symbol. The inference system presented
in \cite{A95} produces the AC complete system
\begin{align*}
\m{f}(x + y) &\cR \m{f}(x) + \m{f}(y) &
\m{f}(y + x) &\cR \m{f}(x) + \m{f}(y)
\end{align*}
in which either of the rules could be collapsed if it was allowed to
collapse with the other rule. In \cite{A95} this is prevented by an
encompassment condition which essentially forbids to collapse at the
root position with a rewrite rule whose left-hand side is a variant
of the left-hand side of the rule which should be
collapsed. However, this is possible with the system presented in
this article, so for an AC complete representation just one of the
two rules suffices.
\end{exa}

\subsection{Correctness Proof}

We show that every fair and non-failing finite run results in a
$\xB$-complete presentation. To this end, we first verify that
inference steps in \ia preserve convertibility. We abbreviate
$\xE \cup \xR \cup \xB$ to $\xERB$ and $\xE' \cup \xR' \cup \xB$ to
$\xERBp$.

\begin{lem}
\label{lem:singleacllstep}
If $(\xE,\xR) \iA (\xE',\xR')$ then the following inclusions hold:
\begin{enumerate}
\item
${\Ro{\xERB}{}} \,\subseteq\, {\Ro{{\xR'}/{\xB}}{=} \cdot
\mathrel{(\Ro{\xERp}{=} \cup \LRo{\xB\h}{*})} \cdot \Lo{{\xR'}/{\xB}}{=}}$
\item
${\Ro{\xERBp}{}} \,\subseteq\, {\LRo{\xERB\h}{*}}$
\end{enumerate}
\end{lem}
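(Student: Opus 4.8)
The goal is to show that an inference step preserves convertibility in two directions, captured by the two inclusions of \lemref{singleacllstep}. The plan is to proceed by a case analysis on which inference rule of \ia was applied in the step $(\xE,\xR) \iA (\xE',\xR')$, verifying both inclusions rule by rule. The overall strategy is that each rule either moves a pair between $\xE$ and $\xR$, adds a pair derivable from existing convertibility, or replaces a term by a $\xB$-rewrite-related one; in every case the newly available conversions are subsumed by the old ones (for part (2)) and the removed conversions can be recovered through the new relation $\xR'/\xB$ together with at most one step of $\xERp$ or a $\xB$-conversion (for part (1)).

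For part (2), the direction ${\Ro{\xERBp}{}} \subseteq {\LRo{\xERB\h}{*}}$, I would observe that it suffices to show every rule/equation added to $\xE'$ or $\xR'$ is already convertible with respect to $\xERB$. Rules like \ori and \del merely relocate a pair already present, so the inclusion is immediate; for \ded the added pair arises from a peak $s \LbR \cdot \RbR t$ or a cliff $s \LbR \cdot \LRb{\xB} t$, hence $s \LRo{\xERB\h}{*} t$ directly. For \sip, \col, and \com, the added pair replaces a term $s$ by $u$ where $s \RmB u$, i.e.\ $s \bsim \cdot \RbR \cdot \bsim u$, so the old pair together with this $\xERB$-conversion yields convertibility of the new pair. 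Since $\xB$ is unchanged, all base-theory steps transfer verbatim.

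Part (1) is the harder and more delicate direction, since it asserts a specific shape for the recovered conversion: an $\xERB$-step factors as an optional ${\xR'}/{\xB}$-step, followed by an optional $\xERp$-step or a $\xB$-conversion, followed by an optional ${\xR'}/{\xB}$-step. The main obstacle is the bookkeeping for the rules that genuinely change a term, namely \ori, \sip, \col, and \com, where one must show the removed step can be simulated. For instance, in \col a rule $s \to t$ is replaced by an equation $u \approx t$ with $s \RbR u$; a use of the old rule $s \to t$ must be re-expressed using $s \RbR u$ (which becomes a ${\xR'}/{\xB}$-step since $s \RbR u$ and the simplified rule is either in $\xR'$ or its collapse is tracked) composed with the new equation $u \approx t \in \xERp$. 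The trivial cases are the steps coming from $\xB$ itself and from pairs that persist unchanged into $\xERp$: there the leftmost and rightmost ${\xR'}/{\xB}$-steps are taken to be the identity (the $(=)$ reflexive closures), and the middle relation absorbs the step directly.

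Concretely, I would organize the proof of (1) as a table indexed by the seven inference rules, for each giving the factorization witnessing membership in ${\Ro{{\xR'}/{\xB}}{=} \cdot \mathrel{(\Ro{\xERp}{=} \cup \LRo{\xB\h}{*})} \cdot \Lo{{\xR'}/{\xB}}{=}}$. The steps I expect to require care are \col and \com, where a rewrite step of the old system must be decomposed into a ${\xR'}/{\xB}$-step to the simplified right- or left-hand side plus a single step of the newly added equation or rule; here one exploits that $\RbR \subseteq \Ro{{\xR'}/{\xB}}{=}$ modulo the pair being simplified, and that the image term $u$ becomes one endpoint of the $\xERp$-step. The cases \ded and \del contribute no term-changing rewrite steps on the relevant side and are handled by observing the new relation contains the old pair or that the deleted equation was $\bsim$ and hence recoverable as a $\xB$-conversion in the middle component.
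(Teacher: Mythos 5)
Your proposal is correct and follows essentially the same route as the paper: a rule-by-rule case analysis establishing, for each of the seven inference rules, an inclusion of the old pairs into the required factorization through ${\xR'}/{\xB}$ and $\xERp$ (for part (1)) and of the new pairs into $\xERB$-convertibility (for part (2)), then lifting to the rewrite relations by closure under contexts and substitutions. Your treatment of the delicate cases (\col, \com, \del) matches the paper's table of inclusions.
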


\begin{proof}
By inspecting the inference rules of \ia we obtain the following
inclusions: \bigskip \\
\ded
\begin{align*}
\xE \cup \xR &\,\subseteq\, \xE' \cup \xR' &
\MR[20]{$\xE' \cup \xR' \,\subseteq\,
\xE \cup \xR \cup {\Lo{\xR}{} \cdot \Ro{\xR}{}}$}
&\cup {\LRo{\xB\h}{} \cdot \Ro{\xR}{}} \cup
{\Lo{\xR}{} \cdot \LRo{\xB\h}{}}
\intertext{\ori}
\xE \cup \xR &\,\subseteq\, \xE' \cup \xR' \cup (\xR')^{-1} &
\xE' \cup \xR' &\,\subseteq\, \xE \cup (\xE)^{-1} \cup \xR
\intertext{\del}
\xE \cup \xR &\,\subseteq\, \xE' \cup \xR' \cup {\bsim} &
\xE' \cup \xR' &\,\subseteq\, \xE \cup \xR
\intertext{\com}
\xE \cup \xR &\,\subseteq\, \xE' \cup \xR' \cup
{\Ro{\xR'}{} \cdot \Lo{{\xR'}/{\xB}}{}} &
\xE' \cup \xR' &\,\subseteq\, \xE \cup \xR \cup
{\Ro{\xR}{} \cdot \Ro{{\xR}/{\xB}}{}}
\intertext{\col}
\xE \cup \xR &\,\subseteq\,
\xE' \cup \xR' \cup {\Ro{\xR'}{} \cdot \Ro{\xE'}{}} &
\xE' \cup \xR' &\,\subseteq\, \xE \cup \xR \cup
{\Lo{\xR}{} \cdot \Ro{\xR}{}}
\intertext{\sip}
\xE \cup \xR &\,\subseteq\, \ML{$\xE' \cup \xR' \cup
{\Ro{{\xR'}/\xB}{} \cdot \Ro{\xE'}{}} \cup
{\Ro{\xE'}{} \cdot \Lo{{\xR'}/{\xB}}{}}$} \\
\xE' \cup \xR' &\,\subseteq\, \ML{$\xE \cup \xR \cup
{\Lo{{\xR}/\xB}{} \cdot \Ro{\xE}{}} \cup
{\Ro{\xE}{} \cdot \Ro{{\xR}/\xB}{}}$}
\end{align*}
Then, inclusion (2) follows directly from the closure of rewrite
relations under contexts and substitutions. Statement (1) holds
since it is a generalization that all cases have in common.
\end{proof}

\begin{cor}
\label{cor:acllrepr}
If $(\xE,\xR) \iA^* (\xE',\xR')$ then
${\LRo{\xERB\h}{*}} = {\LRo{\xERBp}{*}}$.
\end{cor}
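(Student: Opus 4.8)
The plan is to prove the equality of the two conversion relations by induction on the number of inference steps in the derivation $(\xE,\xR) \iA^* (\xE',\xR')$, using \lemref{singleacllstep} to dispatch the single-step case. The base case (zero steps) is trivial, since then $(\xE',\xR') = (\xE,\xR)$ and both sides coincide syntactically. For the inductive step it suffices to establish the claim for a single step $(\xE,\xR) \iA (\xE',\xR')$, because equality of conversion relations is transitive: if the claim holds for one step and, by the induction hypothesis, for the preceding prefix of the run, then chaining the two equalities yields the result.

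So the heart of the matter is the single-step equality ${\LRo{\xERB\h}{*}} = {\LRo{\xERBp}{*}}$, which I would prove by two inclusions. For the inclusion from left to right I would use \lemref{singleacllstep}(1). Observe that the right-hand side of that inclusion, namely ${\Ro{{\xR'}/{\xB}}{=} \cdot (\Ro{\xERp}{=} \cup \LRo{\xB\h}{*}) \cdot \Lo{{\xR'}/{\xB}}{=}}$, is built entirely from steps over $\xE'$, $\xR'$ and $\xB$; since ${\Ro{{\xR'}/{\xB}}{}} = {\bsim \cdot \Ro{\xR'}{} \cdot \bsim}$ again involves only $\xR'$- and $\xB$-steps, the whole composition is contained in the conversion ${\LRo{\xERBp}{*}}$. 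Hence ${\Ro{\xERB}{}} \subseteq {\LRo{\xERBp}{*}}$. Because ${\LRo{\xERBp}{*}}$ is already reflexive, symmetric and transitive, the symmetric, reflexive and transitive closure of the smaller relation stays inside it, giving ${\LRo{\xERB\h}{*}} \subseteq {\LRo{\xERBp}{*}}$. The reverse inclusion is even more direct: \lemref{singleacllstep}(2) states ${\Ro{\xERBp}{}} \subseteq {\LRo{\xERB\h}{*}}$, and the same closure argument yields ${\LRo{\xERBp}{*}} \subseteq {\LRo{\xERB\h}{*}}$.

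I do not expect a genuine obstacle here, as the combinatorial content has already been absorbed into \lemref{singleacllstep}. The only point requiring care is the first inclusion: one has to confirm that the somewhat intricate right-hand side of part~(1) really collapses into a single conversion over $\xE' \cup \xR' \cup \xB$, i.e.\ that no constituent step escapes the signature of $\xERBp$. This is immediate once one unfolds the definitions of ${\Ro{{\xR'}/{\xB}}{}}$ and $\xERp$, so the whole argument reduces to the standard observation that a relation contained in an equivalence relation has its conversion closure contained in that same equivalence relation.
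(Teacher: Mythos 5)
Your proposal is correct and matches the paper's intent: the corollary is stated without an explicit proof precisely because it follows from \lemref{singleacllstep} by the induction-on-run-length and closure argument you describe. Both inclusions are handled exactly as you say, so there is nothing to add.
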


\begin{lem}
\label{lem:aclltermination}
If $(\xE,\xR) \iA^* (\xE',\xR')$ and $\xR \subseteq {>}$ then
$\xR' \subseteq {>}$.
\end{lem}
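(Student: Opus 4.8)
The plan is to argue by induction on the length of the run $(\xE,\xR) \iA^* (\xE',\xR')$. The base case is trivial, and for the inductive step it suffices to show that a single step $(\xE,\xR) \iA (\xE',\xR')$ turns $\xR \subseteq {>}$ into $\xR' \subseteq {>}$; one then applies the induction hypothesis to the initial segment of the run. Before the case analysis I would record two consequences of the assumption $\xR \subseteq {>}$: since $>$ is closed under contexts and substitutions we have ${\RbR} \subseteq {>}$, and since $>$ is $\xB$-compatible (${\bsim \cdot {>} \cdot \bsim} \subseteq {>}$) this lifts to ${\RmB} \subseteq {>}$. Together with transitivity of $>$, these are the only facts the proof requires.

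First I would dispose of the easy inference rules. The equation-producing variant of \ded as well as \del and \sip leave $\xR$ unchanged, while \col only removes a rule, so in each of these cases $\xR' \subseteq \xR \subseteq {>}$. The rule \ori adds a rule $s \R t$ whose defining side condition $s > t$ immediately places it in $>$. This leaves the two genuinely interesting cases, \com and the rule variant of \ded, which create a rule whose orientation is not dictated by an explicit side condition.

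For the rule variant of \ded a rule $t \R s$ is added from a local cliff $s \LbR w \LRb{\xB} t$ for some $w$. Here $w \RbR s$ gives $w > s$, and $w \bsim t$ then yields $t > s$ by $\xB$-compatibility, so $t \R s \in {>}$. For \com the rule $s \R t$ is replaced by $s \R u$ where $t \RmB u$; from $s \R t \in \xR \subseteq {>}$ we get $s > t$, from ${\RmB} \subseteq {>}$ we get $t > u$, and transitivity gives $s > u$, so $s \R u \in {>}$. In both cases the remaining rules are unchanged, hence $\xR' \subseteq {>}$.

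The main (and really only) obstacle is precisely these last two cases: unlike \ori, the orientation of the freshly generated rule must be inferred from the shape of the underlying local peak or cliff rather than read off a side condition, and this is exactly where the inductive hypothesis $\xR \subseteq {>}$ enters, through ${\RbR} \subseteq {>}$ and the $\xB$-compatibility of the reduction order.
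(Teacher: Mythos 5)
Your proof is correct and follows essentially the same route as the paper's: induction on the length of the run, a case analysis on the applied inference rule, with the \ded-cliff and \com cases resolved via ${\RbR} \subseteq {>}$ (closure under contexts and substitutions), $\xB$-compatibility, and transitivity. The only cosmetic difference is that you peel off the first step of the run where the paper peels off the last; the substance is identical.
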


\begin{proof}
According to the assumption we have $(\xE,\xR) \iA^n (\xE',\xR')$
for some natural number $n$. We proceed by induction on $n$. If
$n = 0$, the statement is trivial since $\xR = \xR'$. Let $n > 0$
and consider $(\xE,\xR) \iA^{n-1} (\xE'',\xR'') \iA (\xE',\xR')$.
The induction hypothesis yields $\xR'' \subseteq {>}$. We continue
with a case analysis on the rule applied in the step
$(\xE'',\xR'') \iA (\xE',\xR')$. For the rules \del and \sip there
is nothing to show as the set of rewrite rules is not changed. If
\ded is applied to a local peak there is nothing to show. Otherwise,
we have $\xR' = \xR'' \cup \SET{s \R t}$ where
$s \LRb{\xB} \cdot \RbR t$. From the fact that $>$ is
$\xB$-compatible we obtain $s > t$ and therefore
$\xR' \subseteq {>}$. For \ori we have
$\xR' = \xR'' \cup \SET{s \R t}$ with $s > t$, so
$\xR' \subseteq {>}$. In the case of \com we have
$\xR' = (\xR'' \setminus \SET{s \R t}) \cup \SET{s \R u}$ with
$t \RmB u$. Since $>$ is a $\xB$-compatible reduction order,
$t \RmB u$ implies $t > u$. From the induction hypothesis we
obtain $s > t$. Now $\xR' \subseteq {>}$ follows by the transitivity
of $>$. Finally, for \col we have
$\xR' \subsetneq \xR'' \subseteq {>}$ which establishes
$\xR' \subseteq {>}$.
\end{proof}

In order to use peak-and-cliff decreasingness in the correctness proof,
we have to define an appropriate notion of labeled rewriting. Intuitively,
we want to annotate a step $s \RbR t$ or $s \bsim t$ with a collection of
terms in such a way that the collection contains terms which are
$\xB$-equivalent or rewrite in a positive number of steps using $\RmB$
to $s$ and $t$, respectively.

\begin{defi}
Let $\LR$ be a rewrite relation or equivalence relation, $M$ a
finite multiset of terms and $>$ a $\xB$-compatible reduction
order. We write $\smash{s \LRo{}{M} t}$ if $s \LR t$ and there exist
terms $s', t' \in M$ such that $s' \gtrsim s$ and $t' \gtrsim t$ for
${\gtrsim} = {{>} \cup {\bsim}}$.
\end{defi}

We follow the convention that if a conversion is labeled with $M$, all
single steps can be labeled with $M$.

\begin{exa}
Consider the TRS $\xR$ consisting of the rules
\begin{align*}
\m{0} + y &\R y & \m{s}(x) + y &\R \m{s}(x + y)
\end{align*}
and the equational theory $\xB = \SET{x + y \approx y + x}$. 
Let ${>} = {\RmB}$ which is a $\xB$-compatible reduction order by
definition as $\xR$ is $\xB$-terminating. We have
\[
\m{s}(\m{0}) + x \Ro{\xR}{\SET{x + \m{s}(\m{0})}} \m{s}(\m{0} + x)
\quad\text{and}\quad
\m{s}(\m{0} + x) \Ro{\xR}{\SET{\m{s}(\m{0} + x)}} \m{s}(x)
\]
as well as
\[
\m{s}(\m{0}) + x \Rao{\xR}{\SET{x + \m{s}(\m{0})}}{*} \m{s}(x)
\quad\text{but not}\quad
\m{s}(\m{0}) + x \Rao{\xR}{\SET{\m{s}(\m{0} + x)}}{*} \m{s}(x).
\]
\end{exa}

\begin{lem}
\label{lem:labelpreserve}
Let $(\xE,\xR) \iA (\xE',\xR')$ and $\xR' \subseteq {>}$.
\begin{enumerate}
\item
For any finite multiset $M$ we have
$\smash{{\LRao{\xERB\h}{M}{*}} \subseteq {\LRao{\xERBp}{M}{*}}}$.
\item
If $s \Ro{\xR}{M} t$ then $s \Rao{\xR'}{M}{=} \cdot \LRao{\xERBp}{N}{*} t$
with $\SET{s} \gtmul N$.
\end{enumerate}
\end{lem}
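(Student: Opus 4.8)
The plan is to prove both statements by a case analysis on the inference rule applied in $(\xE,\xR) \iA (\xE',\xR')$, feeding on the step-wise factorisations already recorded in \lemref{singleacllstep} and on two elementary facts about ${\gtrsim} = {{>} \cup {\bsim}}$: it is transitive, and by $\xB$-compatibility of $>$ the relations $>$ and $\bsim$ absorb one another (so $x \gtrsim y \gtrsim z$ gives $x \gtrsim z$, and $s > y \bsim z$ gives $s > z$). The upshot is that a label witness for the source of a step $\gtrsim$-dominates every term lying $\gtrsim$-below that source, which is exactly what lets one carry a multiset label through a detour.

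For part~(1) it suffices, by transitivity and symmetry of $\xERBp$-conversions, to show that a single labelled conversion step $b \mathrel{\LRo{\xERB\h}{M}} c$ lies in an $\xERBp$-conversion each of whose steps carries the label $M$; the reversed step is symmetric since a label depends only on the two endpoints. Fix $s',t' \in M$ with $s' \gtrsim b$ and $t' \gtrsim c$ and apply \lemref{singleacllstep}(1) to factor the step into a leading $\Ro{{\xR'}/{\xB}}{=}$ part from $b$ to some $b_1$, a middle $\Ro{\xERp}{=} \cup \LRo{\xB\h}{*}$ part from $b_1$ to some $c_1$, and a trailing $\Lo{{\xR'}/{\xB}}{=}$ part from $c_1$ to $c$, each piece being an $\xERBp$-conversion. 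Since $\xR' \subseteq {>}$, every micro-step of the leading part keeps the term $\gtrsim$-below $b$ (a $\bsim$-step preserves the class, an $\xR'$-step strictly decreases), hence $\gtrsim$-below $s'$; symmetrically the trailing part stays $\gtrsim$-below $t'$, and the middle $\xB$-block keeps all terms $\bsim$-equivalent to $b_1$ and so $\gtrsim$-below $s'$. Thus both endpoints of every micro-step are dominated by an element of $M$, so the whole factorisation is an $\xERBp$-conversion labelled $M$.

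For part~(2), if the rule used by $s \Ro{\xR}{M} t$ still belongs to $\xR'$ --- which is automatic unless the applied inference rule is \com or \col --- then the step is already an $\xR'$-step labelled $M$ and we finish with the empty trailing conversion, taking $N = \varnothing$ (so that $\SET{s} \gtmul N$). It remains to treat \com and \col when the step contracts the modified respectively removed rule. For \col the inclusion recorded for \col in the proof of \lemref{singleacllstep} rewrites the step as $s \Rb{\xR'} C[u\theta] \LRb{\xE'} t$; we take $s \Rb{\xR'} C[u\theta]$ as the leading $\Rao{\xR'}{M}{=}$ step, which carries $M$ because $s > C[u\theta]$ and $s' \gtrsim s$ for some $s' \in M$, and $C[u\theta] \LRb{\xE'} t$ as the trailing $\LRao{\xERBp}{N}{*}$ conversion. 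The \com case is analogous, with $C[u\theta]$ replaced by the instance $C[e\theta]$ of the composed right-hand side and the single $\xE'$-step replaced by the $\Lo{{\xR'}/{\xB}}{}$-conversion supplied by the corresponding inclusion of \lemref{singleacllstep}.

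It remains to exhibit $N$ with $\SET{s} \gtmul N$, i.e.\ with every member of $N$ strictly below $s$; we take $N$ to be the multiset of terms occurring on the trailing conversion. That conversion starts at $C[u\theta] < s$ (resp.\ $C[e\theta] < s$) and is built from $\bsim$- and $\xR'$-steps, so by transitivity of $\gtrsim$ and $\xB$-compatibility the only genuinely new demand is $t < s$. This is the crux of the argument: $t$ is produced from $s$ by contracting the rule removed by \col (modified by \com), hence $t < s$ holds precisely when that rule is oriented by $>$. This is not delivered by $\xR' \subseteq {>}$ alone, but it is the standing invariant of every \ia-run issued from $\xR_0 = \varnothing$, guaranteed by \lemref{aclltermination}; granting it, $t < s$ follows from closure of $>$ under contexts and substitutions, and $\SET{s} \gtmul N$ is secured.
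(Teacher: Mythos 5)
Your proposal is correct and follows essentially the same route as the paper's proof: part~(1) via the factorisation from \lemref{singleacllstep}(1) together with the observation that every intermediate term stays $\gtrsim$-below a witness in $M$, and part~(2) by case analysis where only \com and \col are nontrivial and the corresponding decompositions supply the leading $\xR'$-step and the trailing conversion. Your remark that $s > t$ for the collapsed or composed rule is not a consequence of $\xR' \subseteq {>}$ alone but of the standing run invariant from \lemref{aclltermination} identifies a fact the paper's proof uses silently, so your treatment is, if anything, slightly more explicit on that point.
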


\begin{proof}
For (1) it suffices to show that
${\Ro{\xERB}{M}} \subseteq {\LRao{\xERBp}{M}{*}}$. Let
$s \Ro{\xERB}{M} t$. By definition, there exist terms $s', t' \in M$
with $s' \gtrsim s$ and $t' \gtrsim t$. According to
\lemref{singleacllstep} there exist terms $u$ and $v$ such that
\[
s \Ro{{\xR'}/{\xB}}{=} u \mathrel{(\Ro{\xE' \cUp \xR'}{=} \cup
\LRo{\xB\h}{*})} v \Lo{{\xR'}/{\xB}}{=} t
\]
Since $\xR' \subseteq {>}$ we have $s \gtrsim u$ and $t \gtrsim v$
and therefore $s' \gtrsim u$ and $t' \gtrsim v$. Hence, every
non-empty step can be labeled with $M$ and we obtain
$s \LRao{\xERBp}{M}{*} t$ as desired.

\smallskip

For a proof of (2), let $s \Rab{\xR}{M} t$.
By definition, there exists an $s' \in M$
such that $s' \gtrsim s$. We proceed by case analysis on the rule
applied in the inference step. For \ded, \ori, \del and \sip there
is nothing to show since $\xR \subseteq \xR'$.

\smallskip

Suppose the step is an application of \com. If the rule used in the
step $s \Rab{\xR}{M} t$ is not altered, we are done. Otherwise, the
step was performed with a rule ${\ell \to r} \in \xR$ which is
changed to ${\ell \to r'} \in \xR'$ with $r \Rb{{\xR'}/{\xB}}{} r'$.
There exist a substitution $\sigma$ and a position $p$ such
that $s = s[\ell\sigma]_p$ and $t = s[r\sigma]_p$. The new step is
$s \Rab{\xR'}{M} t'$ where $t' = s[r'\sigma]_p$. Since $>$ is a
$\xB$-compatible reduction order, we have $s' > t'$ and therefore the
label $M$ is still valid. From $t'$ we can reach $t$ with
\[
t' \LRao{\xB}{\SET{t}}{*} \cdot \Lo{\xR'}{\SET{t}} \cdot
\LRao{\xB}{\SET{t}}{*} t
\]
From $s > t$ we obtain $\SET{s} \gtmul \SET{t}$ which means that the new
conversion between $s$ and $t$ is of the desired form.

\smallskip

Finally, suppose the step is an application of \col. If the rule
used in the step $s \Rab{\xR}{M} t$ is not altered, we are done
immediately. Otherwise, the step was performed with a rule
${\ell \to r} \in \xR$ which is changed to an equation
${\ell' \approx r} \in \xE'$ with $\ell \Rb{\xR'} \ell'$. There
exist a substitution $\sigma$ and a position $p$ such that
$s = s[\ell\sigma]_p$ and $t = s[r\sigma]_p$. The new step is
$s \Rab{\xR'}{M} t'$ where $t' = s[\ell'\sigma]_p$. Since $>$ is a
$\xB$-compatible reduction order we have $s' > t'$ and therefore the
label $M$ is still valid. From $t'$ we can reach $t$ with
$t' \Rab{\xE'}{N} t$ where $N = \SET{t',t}$. From $s > t'$ and $s > t$
we obtain $\SET{s} \gtmul N$ which means that the new conversion
between $s$ and $t$ is of the desired form.
\end{proof}

Finally, we are able to prove the correctness result for \ia,
i.e., all finite fair and non-failing runs produce a $\xB$-complete
TRS which represents the original set of equations. In contrast to
\cite{A95} and \cite{B91}, the proof shows that it suffices to
consider prime critical pairs. This is achieved by
showing peak-and-cliff decreasingness and  using \thmref{pcd} instead
of directly using the main theorem of \secref{cc} (\thmref{acpcp}).
The usage of peak-and-cliff decreasingness makes the proof more
modular and easier to formalize than the proof in \cite{A95} because
it is split up into the preceding auxiliary lemmata of this
section which are mostly independent from each other and can use
different suitable proof methods. This is very different from
the approach in \cite{A95} where all the necessary information and
induction hypotheses are incorporated into one large proof ordering.
The results presented in \cite{HMSW19} for standard rewriting suggest
that our approach still has merits when compared to the original proofs
in \cite{A95} and \cite{B91} if infinite runs are considered.

\begin{thm}
\label{thm:correctness}
Let $\xE$ be an \textup{ES}. For every fair and non-failing run
\[
\xE_0,\xR_0 \,\iA\, \xE_1,\xR_1 \,\iA\, \cdots \,\iA\, \xE_n,\xR_n
\]
for $\xE$, the \textup{TRS} $\xR_n$ is a $\xB$-complete
representation of $\xE$.
\end{thm}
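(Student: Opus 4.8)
The plan is to verify the three ingredients of a $\xB$-complete representation one at a time: that $\xR_n$ represents $\xE$ modulo $\xB$, that it is terminating modulo $\xB$, and that it is Church--Rosser modulo $\xB$. The first two follow directly from what has already been established. Because $\xE_0 = \xE$, $\xR_0 = \varnothing$ and the run does not fail (so $\xE_n = \varnothing$), \corref{acllrepr} gives ${\LRab{\xE \cUp \xB}{*}} = {\LRab{\xR_n \cUp \xB}{*}}$, which is the representation property. Starting from $\xR_0 = \varnothing \subseteq {>}$, \lemref{aclltermination} yields $\xR_n \subseteq {>}$; since $>$ is a $\xB$-compatible reduction order this makes $\xR_n$ terminating modulo $\xB$, and the same $>$ serves as the order for the labelled rewriting defined before \lemref{labelpreserve}.

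Everything of substance lies in the Church--Rosser property, which I would deduce from \thmref{pcd} by showing that $\xR_n$ is peak-and-cliff decreasing. As index set I take the finite multisets of terms ordered by $\gtmul$, which is well-founded, and I label every $\xR_n$-step and every $\xB$-step by a dominating multiset in the sense of the definition preceding \lemref{labelpreserve}. The guiding observation is that whenever the two steps of a local peak or cliff issuing from a common source $s$ carry valid labels $\alpha$ and $\beta$, each of $\alpha$ and $\beta$ contains a term $\gtrsim s$; hence, by $\xB$-compatibility, any conversion between the two reducts of $s$ all of whose steps are labelled by multisets of terms strictly below $s$ already lies in $\bigLRo{\vee\alpha\beta}{*}$. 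It therefore suffices to produce, for every local peak, a joining conversion all of whose labels are below $\SET{s}$, and for every local cliff such a conversion followed by at most one backward step labelled $\beta$.

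For a local peak $t \LbR s \RbR u$, \lemref{pcplemma}(1) supplies a $v$ with $t \trid{s}{} v \trid{s}{} u$, and all of $t,v,u$ are strict reducts of $s$, hence below $s$. Each $\trid{s}{}$ is either a valley $\Db{\xR_n}$ or a prime critical pair step $\LRb{\pcp{\xR_n}}$; by the first fairness inclusion of \defref{fairness} the latter is itself either a valley $\Dab{\xR_n}{\sim}$ or a step $\LRb{\xE_i \cup \xR_i}$ for some $i \leq n$. In the valley cases all intermediate terms are reducts of $t$ or $u$ and are therefore strictly below $s$ (a $\xB$-step in the valley connects two such terms, again by $\xB$-compatibility), so every step can be labelled below $\SET{s}$. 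A step $\LRb{\xE_i \cup \xR_i}$ joins two terms below $s$; labelling it by the multiset of its two endpoints and iterating \lemref{labelpreserve}(1) up to $n$ lifts it to a conversion in $\xR_n \cup \xB$ with the same label, again below $\SET{s}$. Concatenating these pieces gives the peak condition.

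For a local cliff $t \LbR s \LRb{\xB} u$, \lemref{pcplemma}(2) supplies a $v$ with $t \trid{s}{} v \tridt{s} u$; the part $t \trid{s}{} v$ is handled exactly as above. For $v \tridt{s} u$ the second fairness inclusion reduces to a valley $v \Dab{\xR_n}{\sim} u$ or a step $v \LRb{\xR_i} u$, and here the asymmetry $s \bsim u$ of a cliff matters. Mirroring the proof of \thmref{acpcp}, a valley $v \Rab{\xR_n}{*} \cdot \bsim \cdot \Lab{\xR_n}{k} u$ has all of its steps below $s$ except, when $k > 0$, the final backward step issuing from $u$, which we label with $\beta$. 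A step $v \LRb{\xR_i} u$ is necessarily oriented $u \Rb{\xR_i} v$, since $v < s \bsim u$; labelling it $\beta$ and repeatedly applying \lemref{labelpreserve}(2) lifts it to $\xR_n$, peeling off a single $\beta$-labelled step emanating from $u$ while forcing every other newly created step below $\SET{s}$ (using $\xB$-compatibility to turn ``below $u$'' into ``below $s$''). This last lifting is the step I expect to be the main obstacle: because $u$ is only $\bsim$-equivalent to $s$ rather than strictly below it, exactly one distinguished step must retain the label $\beta$ through the iteration while all other steps it introduces land strictly below $\SET{s}$, and tracking this single step together with the tail conversions produced along the way is the delicate bookkeeping on which the argument turns. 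In both cases the cliff condition $\bigLRo{\vee\alpha}{*} \cdot \Lo{\beta}{=}$ is met, and \thmref{pcd} then completes the Church--Rosser part, hence the theorem.
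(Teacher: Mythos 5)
Your proposal is correct and follows essentially the same route as the paper's proof: representation via \corref{acllrepr}, termination via \lemref{aclltermination}, and the Church--Rosser property by establishing peak-and-cliff decreasingness with multiset labels, using \lemref{pcplemma} together with fairness and \lemref{labelpreserve} to push all labels of the joining conversions below the source, with the single $\beta$-labelled step retained in the cliff case. The ``delicate bookkeeping'' you flag at the end is exactly what \lemref{labelpreserve}(2) is designed to deliver, so no further work is needed there.
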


\begin{proof}
Let $>$ be the $\xB$-compatible reduction order used in the run.
From fairness we obtain $\xE_n = \varnothing$ as well as the fact
that $\xR_n$ is left-linear. \corref{acllrepr} establishes
${\LRab{\xE \cUp \xB}{*}} = {\LRab{\xR_n \cUp \xB}{*}}$ and
termination modulo $\xB$ of $\xR_n$ follows from
\lemref{aclltermination}. It remains to prove that $\xR_n$ is
Church--Rosser modulo $\xB$ which we do by showing peak-and-cliff
decreasingness. So consider a labeled local peak
$t \Lab{\xR_n}{M_1} s \Rab{\xR_n}{M_2} u$. \lemref{pcplemma}(1)
yields $t \trid{s}{2} u$. Let $v \trid{s}{} w$ appear in this
sequence (so $v = t$ or $w = u$). By definition, $v \Db{\xR_n} w$ or
$v \LRb{\pcp{\xR_n}} w$. Together with fairness, the fact that
$\bsim$ is reflexive as well as closure of rewriting under contexts
and substitutions we obtain $v \Dab{\xR_n}{\sim} w$ or
$(v,w) \in {\bigcup_{i=0}^n {\LRb{\xE_i \cUp \xR_i}}}$. In both
cases, it is possible to label all steps between $v$ and $w$ with
$\SET{v,w}$. Since $s > v$ and $s > w$ we have $M_1 \gtmul \SET{v,w}$
and $M_2 \gtmul \SET{v,w}$. Repeated applications of
\lemref{labelpreserve}(1) therefore yield a conversion in
$\xR_n \cup \xB$ between $v$ and $w$ where every step is labeled
with a multiset that is smaller than both $M_1$ and $M_2$. Hence,
the corresponding condition required by peak-and-cliff
decreasingness is fulfilled.

Next consider a labeled local cliff
$t \Lab{\xR_n}{M_1} s \LRab{\xB}{M_2} u$. From \lemref{pcplemma}(2)
we obtain a term $v$ such that $t \trid{s}{} v \tridt{s} u$. As in
the case for local peaks we obtain a conversion between $t$ and $v$
where each step can be labeled with $\SET{t,v} \ltmul M_1$. Together
with fairness, $v \tridt{s} u$ yields $v \Dab{\xR_n}{\sim} u$ or
$(v,u) \in {\bigcup_{i = 0}^n {\LRb{\xR_i}}}$. In the former case
there exists a $k$ such that
$v \Rab{\xR_n}{*} \cdot \bsim^{} \cdot \Lab{\xR_n}{k} u$. If $k = 0$
we can label all steps with $\SET{v}$. If $k > 0$ the conversion is of
the form
$v \Rab{\xR_n}{*} \cdot \bsim^{} \cdot \Lab{\xR_n}{k-1} w \Lb{\xR_n}
u$. We can label the rightmost step with $M_2$ and the remaining
steps with $\SET{v,w}$. Note that $s > v$. Since $>$ is a
$\xB$-compatible reduction order we also have $s > w$. Thus,
$M_1 \gtmul \SET{v,w}$ which establishes the corresponding condition
required by peak-and-cliff decreasingness for all $k$. In the
remaining case we have
$(v,u) \in {\bigcup_{i = 0}^n {\LRb{\xR_i}}}$, so there is some
$i \leq n$ such that $v \LRb{\xR_i} u$. Actually, we know that
$u \Rab{\xR_i}{M_2} v$ since otherwise we would have both $s > v$
and $v > s$ by the $\xB$-compatibility of $>$. Repeated applications
of \lemref{labelpreserve}(1,2) therefore yield a conversion between
$u$ and $v$ of the form
\[
u \Rao{\xR_n}{M_2}{=} \cdot \LRao{\xR_n \cUp \xB}{N}{*} v
\]
where $\SET{u} \gtmul N$. By definition, $s' \gtrsim u$ for some
$s' \in M_1$ and therefore $M_1 \gtmul N$, which means that the
corresponding condition required by peak-and-cliff decreasingness is
fulfilled. Overall, it follows that $\xR_n$ is peak-and-cliff
decreasing and therefore Church--Rosser modulo $\xB$.
\end{proof}
\noindent 
Note that the proofs of the previous theorem and \thmref{pcd} do not
require multiset orders induced by quasi-orders but use multiset
extensions of proper $\xB$-compatible reduction orders which are
easier to work with. This could be achieved by defining peak-and-cliff
decreasingness in such a way that well-founded orders suffice for the
abstract setting. However, the usage of multiset orders based on
$\xB$-compatible reduction orders as well as a notion of labeled
rewriting which allows us to label steps with $\xB$-equivalent terms
are crucial in order to establish peak-and-cliff decreasingness for
TRSs.

As we have established correctness of \ia, it is natural to
ask the question whether \ia is also \emph{complete}, i.e., 
can \ia generate a complete presentation whenever there exists one.
Contrary to ordered completion where there are known completeness
results (see e.g.\ \cite{BDP89,D91}),
this is not possible in our setting.
Consider the ES $\xE$ \cite{BD94} consisting of the three equations
\begin{align*}
\m{1} \cdot (-x + x) &\approx \m{0} &
\m{1} \cdot (x + -x) &\approx x + -x &
-x + x &\approx y + -y
\end{align*}
which admits the following complete presentation $\xR$:
\begin{align*}
\m{1} \cdot \m{0} &\R \m{0} & x + -x &\R \m{0} & -x + x &\R \m{0}
\end{align*}
In standard completion, only the first two equations can be
oriented from left to right but no further step is possible. In \ia,
we have the same situation if $\xB = \varnothing$, but we can choose
$\xB$ as some nonempty subset of $\xE$. Note that the first and
third equation are not eligible since they violate the assumption
$\var{\ell} = \var{r}$. Hence,
$\xB = \SET{\m{1} \cdot (x + -x) \approx x + -x}$ is the only other option.
After orienting the first equation, again no further step is possible
since there is also no overlap between $\m{1} \cdot (-x + x)$ and
$x + -x$.

\section{Bachmair's Inference System}
\label{sec:bachmair}

As already mentioned, the inference system proposed by Avenhaus
\cite{A95} is essentially the same as \ia. The only other inference
system for $\xB$-completion for left-linear TRSs is due to Bachmair
\cite{B91}. We investigate a slightly modified version of this
inference system where arbitrary local peaks are deducible and the
encompassment condition from the \col rule is removed as we only
consider finite runs. The resulting system will be called \ib.
Note that the purpose of the change in the \ded rule of Bachmair's
system is to eliminate this unimportant difference to \ia. The following
results would still hold if we limited \ded in \ib to (prime) critical
pairs but aligning \ia and \ib would be unnecessarily complicated.
Furthermore, deducing arbitrary local peaks offers a simpler definition
of \ded~which gives implementations more freedom. However, we are not
aware of any work that has investigated whether deducing non-critical
peaks can be beneficial in completion.

The main difference between \ia and \ib is that in \ib one may only
use the standard rewrite relation $\RbR$ for simplifying equations
and composing rules. This allows us to deduce local cliffs as
equations. The goal of this section is to establish correctness of
\ib via a simulation by \ia.

\begin{defi}
\label{def:ib}
The inference system \ib is the same as \ia but with
rewriting in \com and \sip restricted to $\RbR$ and the
following rule which replaces the two deduction rules of \ia:
\[
\ded \quad \ds \frac{\xE,\xR}{\xE \cup \SET{s \approx t},\xR} \quad
\text{if $s \LbR \cdot \Rb{\xR \cUp \xB^\pm} t$}
\]
\end{defi}

\begin{exa}
Recall the ES $\xE = \SET{x + \m{0} \approx x}$ of \exaref{cliffscol},
where $+$ is an AC function symbol.  AC completion based on \ib
proceeds as follows:
\[
(\xE,\varnothing)
\,\iB\, (\varnothing,\SET{x + \m{0} \R x}) 
\,\iB\, (\SET{\m{0} + x \approx x},\SET{x + \m{0} \R x})
\,\iB\, \cdots
\]
The second step obtained by \ded reveals a
main difference between \ia and \ib.  If \ded in \ia is employed, 
\(
(\varnothing,\SET{x + \m{0} \R x}) \iA
(\varnothing,\SET{\m{0} + x \to x,\; \m{0} + x \R x})
\)
is obtained.
\end{exa}

\begin{defi}
\label{def:fairnessb}
Let $\xE$ be an ES. A finite sequence
\[
\xE_0,\xR_0 \,\iB\, \xE_1,\xR_1 \,\iB\, \cdots \,\iB\, \xE_n,\xR_n
\]
with $\xE_0 = \xE$ and $\xR_0 = \varnothing$ is a \emph{run} for
$\xE$. If $\xE_n \neq \varnothing$, the run \emph{fails}. The run
is \emph{fair} if $\xR_n$ is left-linear and the following inclusion
holds:
\[
\pcp{\xR_n} \cup {\pcppm{\xR_n,\xB^\pm}} \,\subseteq\,
{{\Dab{\xR_n}{\sim}} \cup {\bigcup_{i=0}^n {\LRb{\xE_i}}}}
\]
\end{defi}

In contrast to \defref{fairness}, the fairness condition is the same
for all prime critical pairs since the inference rule \ded of
\ib never produces rewrite rules. In that sense, \ib is closer to
known completion procedures, but as shown before, this comes at the expense
of not being allowed to apply \sip, \col and \com with $\RmB$.
If we want to allow this more general kind of simplification as it is the
case for \ia, local cliffs have to be deduced as rules. Note that this
possibly leads to an increase in critical pairs which one has to consider
in practice, but it can also reduce the number of \ori steps one has to
perform. Since \com and \col (without the encompassment condition) can
emulate the behavior of \sip, it is possible to get the best of both
worlds by using \ia and deferring the computation of
critical pairs with rules which stem from local cliffs until it is
needed to proceed in the completion process. We will now show that
for finite runs, \ia is at least as powerful as \ib which is the main
motivation for the focus on \ia instead of \ib in this article. Moreover,
the simulation result actually allows us to reduce correctness of \ib to
correctness of \ia, so we get this property without any additional effort.

In order to prove that fair and non-failing runs in \ib can be simulated
in \ia, we start with the following technical lemma which intuitively
states that a step in \ib can be simulated by at most one step in
\ia in such a way that the results only differ in the number of oriented
equations.
We denote an application of the rule \ori in an inference system \infsys
by \smash{$\seqao{\infsys}{\phantom{*}}{\textsf{o}}$}.

\begin{lem}
\label{lem:simaux1}
If $(\xE_1,\xR_1) \iB (\xE_2, \xR_2)$ and
$(\xE_1,\xR_1) \seqao{\ib}{*}{\textsf{o}} (\xE_1',\xR_1')$ then
there exists a pair $(\xE_2',\xR_2')$ such that
$(\xE_1',\xR_1') \iA^= (\xE_2',\xR_2')$ and
\smash{$(\xE_2,\xR_2) \seqao{\ib}{*}{\textsf{o}} (\xE_2',\xR_2')$}.
In a picture:
\begin{center}
\begin{tikzpicture}[pair/.style={execute at begin node={\strut},
   execute at end node={\strut},},node distance=1cm]
\node[pair] (1) {$\xE_1,\xR_1$};
\node[pair] (1p) [right=of 1] {$\xE_2,\xR_2$};
\node[pair] (2) [below=of 1] {$\xE_1',\xR_1'$};
\node[pair] (2p) [right=of 2] {$\xE_2',\xR_2'$};
\path (1) -- (1p) node [pos=.5] {$\iB$};
\path (2) -- (2p) node [pos=.5] {$\iA^=$};
\path (1) -- (2) node [pos=.5]
  {\rotatebox{-90}{$\seqao{\ib}{*}{\textsf{o}}$}};
\path (1p) -- (2p) node [pos=.5]
  {\rotatebox{-90}{$\seqao{\ib}{*}{\textsf{o}}$}};
\end{tikzpicture}
\end{center}
\end{lem}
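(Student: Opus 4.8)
The plan is to proceed by a case analysis on the inference rule applied in the single \ib step $(\xE_1,\xR_1) \iB (\xE_2,\xR_2)$, while exploiting a clean description of \ori-sequences: we have $(\xE,\xR) \seqao{\ib}{*}{\textsf{o}} (\xE',\xR')$ exactly when $\xE' = \xE \setminus A$ and $\xR' = \xR \cup O$, where $A$ is a sub-multiset of orientable equations of $\xE$ and $O$ consists of their oriented versions. Since the side condition $s > t$ of \ori depends only on the fixed reduction order and not on the current TRS, orientability and the orientation direction of an equation are fixed once and for all, so the individual \ori steps may be freely reordered. Throughout I would use the invariant that every TRS occurring in a \ib-run is contained in $>$ (the analogue of \lemref{aclltermination}), which pins down the direction in which deduced or simplified equations get oriented. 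Writing $\xE_1' = \xE_1 \setminus A$ and $\xR_1' = \xR_1 \cup O$ for the left \ori-pass, the task in each case is to name one (or no) \ia step on the bottom and a multiset of equations for the right \ori-pass so that both routes reach the same pair $(\xE_2',\xR_2')$, which I would confirm by a short multiset computation on the $\xE$- and $\xR$-components.

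I would first treat the steps that do not interfere with the orientations in $A$. If the top step is \ded producing a local peak $s \LbR \cdot \RbR t$, the same peak is available from $(\xE_1',\xR_1')$ because $\xR_1 \subseteq \xR_1'$, so the peak-variant of \ded adds $s \approx t$ on the bottom while the right pass orients $A$; the freshly added equation survives both passes and the pairs agree. The cases \col and \com are similar: the rule they act upon lies in $\xR_1 \subseteq \xR_1'$ and the rewrite they use is an $\RbR$-step, hence admissible in \ia as well (using $\RbR \subseteq \RmB$ for \com), so the bottom reuses the identical \col resp.\ \com step and the right orients $A$. For \del the deleted equation satisfies $s \bsim t$ and is therefore not orientable by $\xB$-compatibility of $>$, so it cannot belong to $A$ and the bottom simply reuses \del.

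The interesting cases are those in which the top step touches an equation that the left pass may orient. For a top \ori step orienting $e$, the bottom reuses \ori and the right orients $A$ when $e \notin A$, whereas it takes no step and the right orients $A \setminus \SET{e}$ when $e \in A$ (here the reflexive closure in $\iA^=$ is needed). For a top \ded step producing a local cliff $s \LbR \cdot \LRb{\xB} t$, the cliff-variant of \ded adds the rule $t \R s$ on the bottom---valid since $t > s$ follows from $\xB$-compatibility---while the right orients $A$ and additionally orients the newly deduced equation $s \approx t$, which is oriented into exactly $t \R s$. The genuinely delicate case is \sip acting on $e = s \approxpm t$ via $s \RbR u$: when $e \notin A$ the bottom simply reuses \sip (legitimate as $\RbR \subseteq \RmB$), but when $e \in A$ the equation has already become a rule in $\xR_1'$, and the key observation is that \sip can then be \emph{emulated by \col or \com}. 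If $e$ was oriented as $s \R t$ (the case $s > t$), the step $s \RbR u$ rewrites its left-hand side, so \col produces the equation $u \approx t$ on the bottom and the right orients $A \setminus \SET{e}$; if $e$ was oriented as $t \R s$ (the case $t > s$), the same step rewrites its right-hand side, so \com produces the rule $t \R u$ on the bottom and the right orients $A \setminus \SET{e}$ together with the new equation $u \approx t$ (oriented into $t \R u$ because $t > s > u$).

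I expect the \sip-against-an-already-oriented-equation case to be the main obstacle: one must recognise that \col and \com recover the effect of \sip and then check, by explicit multiset bookkeeping of the equation and rule components along the bottom \ia step and the right \ori-pass, that the two resulting pairs genuinely coincide (including that the collapsed or composed rule matches the oriented version of the simplified equation). Once the description of \ori-sequences and the freedom to reorder orientations are available, the remaining cases are routine verifications.
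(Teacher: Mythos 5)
Your proposal is correct and follows essentially the same route as the paper's proof: the same case analysis on the \ib step, the same observation that \del cannot touch an oriented equation by $\xB$-compatibility, the same treatment of local cliffs as \ded-rules oriented as $t \R s$, and the same key idea that \sip on an already-oriented equation is emulated by \col (left-hand side) or \com (right-hand side, then re-orient using $t > s > u$). The only difference is presentational: you make the \ori-pass explicit as a multiset $A$ of oriented equations, where the paper keeps it as an informal ``invariant''.
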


\begin{proof}
Let $>$ be a fixed $\xB$-compatible reduction order which is used in
both \ia and \ib. From
\smash{$(\xE_1,\xR_1) \seqao{\ib}{*}{\textsf{o}} (\xE_1',\xR_1')$} we
obtain $\xE_1' \subseteq \xE_1$, $\xR_1 \subseteq \xR_1'$ and
$\xE_1 \setminus \xE_1' \subseteq \xR_1' \cup \xR_1'^{-1}$. In order
to simplify the formulation, we will refer to the sequence of
\ori steps between a pair and its primed variant as the invariant.
We proceed by a case analysis on the rule applied in the inference
step $(\xE_1,\xR_1) \iB (\xE_2,\xR_2)$.
\begin{itemize}[label=$\triangleright$]
\item
In the case of \ded, we apply the same rule in \ia. For
local peaks $s \Lb{\xR_1} \cdot \Rb{\xR_1} t$ we have
$\xE_2 = \xE_1 \cup \SET{s \approx t}$, $\xR_2 = \xR_1$,
$\xE_2' = \xE_1' \cup \SET{s \approx t}$ and $\xR_2' = \xR_1'$. For
local cliffs $s \Lb{\xR_1} \cdot \LRb{\xB} t$ we have
$\xE_2 = \xE_1 \cup \SET{s \approx t}$, $\xR_2 = \xR_1$ as well as
$\xR_2' = \xR_1' \cup \SET{t \R s}$, $\xE_2' = \xE_1'$. In both
cases, the invariant is preserved. \smallskip
\item
Suppose the inference step in \ib orients an equation $s \approx t$. If
${s \approx t} \in \xE_1'$ we perform the same
step in \ia which preserves the invariant. Otherwise,
${s \approx t} \in \xE_1' \setminus \xE_1$ and hence
${v \R w} \in \xR_1'$ where $\SET{v,w} = \SET{s,t}$. In this case, an
empty step in \ia preserves the invariant. \smallskip
\item
If the inference step in \ib deletes an equation
$s \approx t$, it has to be in $\xE_1'$ which enables us to
perform the same step in \ia while preserving the invariant:
Suppose that ${s \approx t} \in \xE_1 \setminus \xE_1'$. Since
the equation is deleted, we have $s \bsim t$. Neither $s > t$ nor
$t > s$ can hold as $>$ is $\xB$-compatible and irreflexive. Hence, the
equation cannot be oriented which contradicts the assumption
\smash{$(\xE_1,\xR_1) \seqao{\ib}{*}{\textsf{o}} (\xE_1',\xR_1')$}.
\smallskip
\item
If the inference step in \ib is \com or \col, the same
step can be performed in \ia while preserving the invariant as
$\xR_1 \subseteq \xR_1'$. \smallskip
\item
Finally, suppose that the inference step in
\ib simplifies an equation $s \approx t$. Since the
orientation of equations does not matter in completion, we may
assume that the simplification transforms the equation to
$s' \approx t$ without loss of generality. If
$s \approx t \in \xE_1'$, we perform the same step in
\ia which preserves the invariant. Otherwise, there is a rule
${v \R w} \in \xR_1'$ such that $\SET{v,w} = \SET{s,t}$. If $v = s$
and $w = t$, we use \col for the inference step in \ia which
produces the same equation ${s' \approx t} \in \xE_2'$ which means
that the invariant is preserved. If $v = t$ and $w = s$, we use
\com for the inference step in \ia in order to obtain the rule
$t \R s'$. From $t > s$ and $s > s'$ we obtain $t > s'$ and
therefore the equation $s' \approx t$ can be oriented into the
rule $t \R s'$. Thus, the invariant is preserved. \qedhere
\end{itemize}
\end{proof}
\noindent 
For the proof of the simulation result, we need a slightly different
form of the previous lemma. Analogous to the notation for rewrite
relations, the relation \smash{$\seqao{\infsys}{!}{\textsf{o}}$} denotes
the exhaustive application of the inference rule \ori.

\begin{cor}
\label{cor:simaux2}
If $(\xE_1,\xR_1) \iB (\xE_2, \xR_2)$ and
$(\xE_1,\xR_1) \seqao{\ib}{!}{\textsf{o}} (\xE_1',\xR_1')$ then
$(\xE_1',\xR_1') \iA^* (\xE_2',\xR_2')$ where
\smash{$(\xE_2,\xR_2) \seqao{\ib}{!}{\textsf{o}} (\xE_2',\xR_2')$}.
\end{cor}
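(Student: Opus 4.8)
The plan is to obtain the corollary from \lemref{simaux1} by completing the partial orientation that the lemma produces, exploiting the fact that \ori is a rule shared by \ia and \ib.

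First I would observe that the premise $(\xE_1,\xR_1) \seqao{\ib}{!}{\textsf{o}} (\xE_1',\xR_1')$ is in particular an instance of $(\xE_1,\xR_1) \seqao{\ib}{*}{\textsf{o}} (\xE_1',\xR_1')$, so \lemref{simaux1} applies to the given step $(\xE_1,\xR_1) \iB (\xE_2,\xR_2)$. This supplies a pair, which I rename $(\xE_2'',\xR_2'')$, satisfying $(\xE_1',\xR_1') \iA^= (\xE_2'',\xR_2'')$ and $(\xE_2,\xR_2) \seqao{\ib}{*}{\textsf{o}} (\xE_2'',\xR_2'')$.

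Next I would orient $(\xE_2'',\xR_2'')$ exhaustively, choosing $(\xE_2',\xR_2')$ with $(\xE_2'',\xR_2'') \seqao{\ib}{!}{\textsf{o}} (\xE_2',\xR_2')$. Prepending the partial orientation of $(\xE_2,\xR_2)$ then shows $(\xE_2,\xR_2) \seqao{\ib}{!}{\textsf{o}} (\xE_2',\xR_2')$, as demanded by the statement, because a run of \ori steps followed by a maximal run of \ori steps is again maximal. Since \ori is also a rule of \ia, the steps from $(\xE_2'',\xR_2'')$ to $(\xE_2',\xR_2')$ are \ia steps, hence $(\xE_2'',\xR_2'') \iA^* (\xE_2',\xR_2')$. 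Concatenating with the single step from the lemma yields $(\xE_1',\xR_1') \iA^= (\xE_2'',\xR_2'') \iA^* (\xE_2',\xR_2')$ and therefore $(\xE_1',\xR_1') \iA^* (\xE_2',\xR_2')$.

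The invocation of \lemref{simaux1} and the transfer of \ori steps between the two systems are routine. The only point requiring a small argument, and thus the main obstacle, is that appending a maximal sequence of \ori steps to a partial one again reaches a pair with no orientable equations; this is immediate once one notes that orientability of an equation depends solely on the fixed $\xB$-compatible reduction order $>$ and is unaffected by orienting the other equations, since \ori neither creates nor modifies equations.
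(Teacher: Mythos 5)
Your proposal is correct and follows essentially the same route as the paper: invoke \lemref{simaux1} on the partial orientation, then extend the resulting pair by an exhaustive application of \ori (which is the same rule in \ia and \ib, relative to the fixed reduction order), observing that this completes the partial \ib-orientation of $(\xE_2,\xR_2)$ into a maximal one. The paper phrases the final justification as ``the orientable equations in $\xE_2'$ are also included in $\xE_2$'', which is the same observation you make about orientability depending only on $>$.
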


\begin{proof}
Let $(\xE_1,\xR_1) \iB (\xE_2,\xR_2)$ and
$(\xE_1,\xR_1) \seqao{\ib}{!}{\textsf{o}} (\xE_1',\xR_1')$.
\lemref{simaux1} yields $(\xE_1',\xR_1') \iA^= (\xE_2',\xR_2')$ and
\smash{$(\xE_2,\xR_2) \seqao{\ib}{*}{\textsf{o}} (\xE_2',\xR_2')$}. It
follows that the orientable equations in $\xE_2'$ are also included
in $\xE_2$. Hence, we can compute
\smash{$(\xE_2',\xR_2') \seqao{\ia}{!}{\textsf{o}} (\xE_3',\xR_3')$},
satisfying $(\xE_1',\xR_1') \iA^* (\xE_3',\xR_3')$ and
\smash{$(\xE_2,\xR_2) \seqao{\ib}{!}{\textsf{o}} (\xE_3',\xR_3')$} as
desired.
\end{proof}

\begin{thm}
\label{thm:simres}
For every fair run $(\xE,\varnothing) \iB^* (\varnothing,\xR)$
there exists a fair run $(\xE,\varnothing) \iA^* (\varnothing,\xR)$.
\end{thm}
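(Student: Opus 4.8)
The plan is to assemble the \ia run segment by segment from the given \ib run by iterating \corref{simaux2}. Write the fair \ib run as $\xE_0,\xR_0 \iB \cdots \iB \xE_n,\xR_n$ with $(\xE_0,\xR_0) = (\xE,\varnothing)$ and $(\xE_n,\xR_n) = (\varnothing,\xR)$, and for each $i$ fix the pair $(\xE_i^\circ,\xR_i^\circ)$ obtained by orienting $\xE_i$ exhaustively, i.e.\ $(\xE_i,\xR_i) \seqao{\ib}{!}{\textsf{o}} (\xE_i^\circ,\xR_i^\circ)$. Since \ori is shared by both systems, the prefix $(\xE,\varnothing) \iA^* (\xE_0^\circ,\xR_0^\circ)$ consists solely of \ori steps and is a legal \ia run. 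Then, for each $i < n$, \corref{simaux2} applied to the step $(\xE_i,\xR_i) \iB (\xE_{i+1},\xR_{i+1})$ yields $(\xE_i^\circ,\xR_i^\circ) \iA^* (\xE_{i+1}^\circ,\xR_{i+1}^\circ)$. Concatenating these segments produces $(\xE,\varnothing) \iA^* (\xE_0^\circ,\xR_0^\circ) \iA^* \cdots \iA^* (\xE_n^\circ,\xR_n^\circ)$. As $\xE_n = \varnothing$ has nothing to orient, $(\xE_n^\circ,\xR_n^\circ) = (\varnothing,\xR)$, so the endpoints are exactly those demanded by the statement.

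It then remains to verify that this \ia run is fair in the sense of \defref{fairness}. Left-linearity of $\xR$ is inherited directly from fairness of the \ib run. For the two critical-pair inclusions I would compare the convertibility collected along the two runs. The key structural observation is that each equation of $\xE_i$ occurs in the corresponding \ia pair $(\xE_i^\circ,\xR_i^\circ)$ either unchanged in $\xE_i^\circ$ or oriented into a rule of $\xR_i^\circ$; since passing to the symmetric closure forgets the orientation, $\LRb{\xE_i} \subseteq \LRb{\xE_i^\circ \cup \xR_i^\circ}$. Summing over $i$ gives $\bigcup_i \LRb{\xE_i} \subseteq \bigcup \LRb{\widehat{\xE} \cup \widehat{\xR}}$, where the right-hand union ranges over all pairs $(\widehat{\xE},\widehat{\xR})$ visited by the \ia run. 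Combined with fairness of the \ib run, this immediately yields the first required inclusion $\pcp{\xR} \subseteq {\Dab{\xR}{\sim}} \cup \bigcup \LRb{\widehat{\xE} \cup \widehat{\xR}}$, since \ia permits ordinary prime critical pairs to be discharged by equations as well as rules.

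The main obstacle is the second inclusion, $\pcppm{\xR,\xB^\pm} \subseteq {\Dab{\xR}{\sim}} \cup \bigcup \LRb{\widehat{\xR}}$, because \ia insists that cliff critical pairs be covered by \emph{rules}, whereas the \ib run covers them only by \emph{equations} in some $\xE_i$. I would resolve this by exploiting that every prime critical pair $(s,t) \in \pcppm{\xR,\xB^\pm}$ is orientable: tracing its critical peak through the $\xB$-step at the root and the $\xR$-step below it, and using that $\xR \subseteq {>}$ together with the $\xB$-compatibility of $>$ (exactly the reasoning that justifies the \ded rule for cliffs and is recorded in \lemref{aclltermination}), one obtains $s > t$ or $t > s$. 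Hence, whenever such a pair is not already joinable modulo $\bsim$ and is furnished by $\LRb{\xE_i}$ through fairness of the \ib run, the equation witnessing it is orientable and is therefore turned into a rule of $\xR_i^\circ$ during exhaustive orientation, so that $(s,t) \in \LRb{\xR_i^\circ} \subseteq \bigcup \LRb{\widehat{\xR}}$.

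This establishes fairness of the constructed \ia run and completes the proof. I expect the bookkeeping in the first paragraph (matching the oriented pairs and checking the endpoints) to be routine once \corref{simaux2} is in hand, and the orientability argument of the third paragraph to be the genuinely delicate point, as it is precisely where the asymmetry between \ia and \ib---rules for cliffs versus equations for cliffs---must be reconciled. As a by-product, $\xB$-completeness of $\xR$ and hence correctness of \ib follow a posteriori from \thmref{correctness} applied to the fair \ia run just produced.
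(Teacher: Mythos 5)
Your proposal follows essentially the same route as the paper's proof: iterate \corref{simaux2} to build the ladder of exhaustively oriented pairs, note that $\xE_n = \varnothing$ forces the endpoints to coincide, and then verify fairness by observing that each equation of $\xE_i$ survives in $\xE_i^\circ$ or as a rule of $\xR_i^\circ$ and that every cliff critical pair is orientable via $\xB$-compatibility of $>$, so the equations covering $\pcppm{\xR,\xB^\pm}$ land in the rule components. The paper records these two observations as its statements (1) and (2) and concludes exactly as you do.
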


\begin{proof}
Assume $(\xE_0,\xR_0) \iB^n (\xE_n,\xR_n)$ where
$\xR_0 = \xE_n = \varnothing$. By $n$ applications of
\corref{simaux2} we arrive at the following situation:
\begin{center}
\begin{tikzpicture}[pair/.style={execute at begin node={\strut},
execute at end node={\strut},},node distance=1cm]
\node[pair] (0) {$\xE_0,\xR_0$};
\node[pair] (0p) [below=of 0] {$\xE_0',\xR_0'$};
\node[pair] (0m) [left=of 0p] {$\xE_0, \xR_0$};
\node[pair] (1) [right=of 0] {$\xE_1,\xR_1$};
\node[pair] (1p) [below=of 1] {$\xE_1',\xR_1'$};
\node[pair] (dots) [right=of 1] {$\cdots$};
\node[pair] (dotsp) [below=of dots] {$\cdots$};
\node[pair] (n) [right=of dots] {$\xE_n,\xR_n$};
\node[pair] (np) [below=of n] {$\xE_n',\xR_n'$};
\path (0) -- (1) node [pos=.5] {$\seq{\ib}$};
\path (0m) -- (0p) node [pos=.5] {$\seqao{\ia}{!}{\textsf{o}}$};
\path (0p) -- (1p) node [pos=.5] {$\seqa{\ia}{*}$};
\path (1) -- (dots) node [pos=.5] {$\seq{\ib}$};
\path (1p) -- (dotsp) node [pos=.5] {$\seqa{\ia}{*}$};
\path (dots) -- (n) node [pos=.5] {$\seq{\ib}$};
\path (dotsp) -- (np) node [pos=.5] {$\seqa{\ia}{*}$};
\path (0) -- (0p) node [pos=.5]
 {\rotatebox{-90}{$\seqao{\ib}{!}{\textsf{o}}$}};
\path (1) -- (1p) node [pos=.5]
 {\rotatebox{-90}{$\seqao{\ib}{!}{\textsf{o}}$}};
\path (n) -- (np) node [pos=.5]
 {\rotatebox{-90}{$\seqao{\ib}{!}{\textsf{o}}$}};
\end{tikzpicture}
\end{center}
The following two statements hold:
\begin{enumerate}
\item
For $0 \leq i \leq n$, all orientable equations in $\xE_i$ are
in $\xR_i'$ (possibly reversed) and the other equations are in
$\xE_i'$. \smallskip
\item
$\pcppm{\xR_n',\xB^\pm}$ is a set of orientable equations.
\end{enumerate}
Statement (1) is immediate from the simulation relation
\smash{$\seqao{\ib}{!}{\textsf{o}}$} and statement (2) follows from
$\xB$-compatibility of the used reduction order together with the
fact that every (prime) critical pair is connected by one
$\xR_n$-step and one $\xB$-step. Furthermore, $\xE_n = \varnothing$
implies $\xE_n' = \varnothing$ as well as $\xR_n = \xR_n'$. Hence,
we obtain fairness of the run in \ia by showing the following
inclusions:
\begin{align*}
\pcp{\xR_n'} &\,\subseteq\, {{\Dab{\xR_n'}{\sim}} \cup
{\bigcup_{i=0}^n {\LRb{\xE_i' \cUp \xR_i'}}}} &
\pcppm{\xR_n',\xB^\pm} &\,\subseteq\,
{{\Dab{\xR_n'}{\sim}} \cup {\bigcup_{i=0}^n {\LRb{\xR_i'}}}}
\end{align*}
Let ${s \approx t} \in \pcp{\xR_n'}$. By fairness of the run in
\ib we obtain $s \Dab{\xR_n'}{\sim} t$ or $s \LRb{\xE_k} t$ for
some $k \leq n$. In the former case, we are immediately done. In
the latter case we obtain \smash{$s \LRb{\xE_k' \cUp \xR_k'} t$} from (1)
as desired. Now, let ${s \approx t} \in \pcppm{\xR_n',\xB^\pm}$. By
fairness of the run in \ib we obtain \smash{$s \Dab{\xR_n'}{\sim} t$}
or $s \LRb{\xE_k} t$ for some $k \leq n$. Again, we are immediately
done in the former case. In the latter case we have
$s \LRb{\xR_k'} t$ because of (1) and (2). Therefore, the run in
\ia is fair.
\end{proof}
\noindent 
The previous theorem is an important simulation result which justifies
the emphasis on \ia in this article. Moreover, together with
\thmref{correctness} the correctness of the inference system
\ib is an easy consequence.

\begin{cor}
\label{cor:bcorrectness}
Every fair and non-failing run for $\xE$ in \ib produces a
$\xB$-complete presentation of $\xE$.
\end{cor}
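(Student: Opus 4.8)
The plan is to obtain the corollary almost immediately by combining the simulation result \thmref{simres} with the correctness theorem for \ia (\thmref{correctness}); essentially all of the substantive work has already been carried out in establishing those two results.

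First I would unfold the hypothesis. Since the run in \ib is non-failing, its terminal ES is empty by the definition of a failing run, so the run has the shape $(\xE,\varnothing) \iB^* (\varnothing,\xR)$ for some TRS $\xR$. Because the run is moreover fair, \thmref{simres} applies and yields a fair run $(\xE,\varnothing) \iA^* (\varnothing,\xR)$ in \ia producing the very same TRS $\xR$.

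Next I would observe that this \ia-run is not only fair but also non-failing: its terminal ES is again $\varnothing$. Hence it is a fair and non-failing run for $\xE$ in \ia, and \thmref{correctness} tells us that its resulting TRS $\xR$ is a $\xB$-complete representation of $\xE$. Since $\xR$ is precisely the output of the original \ib-run, this is exactly the claim, and the proof is complete.

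The step I expect to require the most care is the small bookkeeping that the \ia-run delivered by \thmref{simres} is non-failing as well as fair; but this is immediate, since \thmref{simres} is stated so that both runs terminate in the same pair $(\varnothing,\xR)$ with empty equational component. Beyond that there is no genuine obstacle, as all the real content — the prime-critical-pair criterion, peak-and-cliff decreasingness, and the step-by-step simulation of \ib by \ia — has already been absorbed into the preceding results.
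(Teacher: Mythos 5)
Your proposal is correct and follows exactly the route the paper intends: the corollary is derived by combining \thmref{simres} (to transport the fair, non-failing \ib-run into a fair \ia-run ending in the same pair $(\varnothing,\xR)$) with \thmref{correctness}. The only bookkeeping point you flag --- that the resulting \ia-run is also non-failing --- is indeed immediate from the shape of the conclusion of \thmref{simres}, so there is nothing to add.
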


\section{Canonicity}
\label{sec:canonicity}

Complete representations resulting from completion may have redundant
rules which do not contribute to the computation of normal forms. The
notion of \emph{canonicity} addresses this issue by defining a minimal
and unique representation of a complete TRS for a given reduction
order. In this section, canonicity results for $\xB$-complete TRSs are
presented. After establishing results for abstract rewriting, means
to compute $\xB$-canonical TRSs and the uniqueness of $\xB$-canonical
TRSs are discussed. The results and proofs in this section closely
follow the presentation of canonicity results for standard rewriting
in \cite{HMSW19} by carefully lifting definitions and results to
rewriting modulo $\xB$. To the best of our knowledge, this section
presents the first account of canonicity results for $\xB$-complete TRSs.

\subsection{Results for Abstract Rewriting Modulo}

In the following we assume that
$\xA_1 = \langle A, \Rb{\xA_1} \rangle$ and
$\xA_2 = \langle A, \Rb{\xA_2} \rangle$ are ARSs 
with the same underlying set $A$
and $\sim$ is an
equivalence relation on
$A$.
The upcoming
definition introduces two different notions of equivalence
between ARSs. Both
are needed in order to relate
different representations of the same equational theory,
in particular the relation between complete presentations
and their canonical forms.

\begin{defi}
The ARSs $\xA_1$ and $\xA_2$ are
\emph{(conversion) equivalent modulo $\sim$} if
${\bigLRab{\xA_1}{*}} = {\bigLRab{\xA_2}{*}}$
and \emph{normalization equivalent modulo $\sim$} if
\smash{${\Rab{\xA_1}{!} \cdot \sim} = {\Rab{\xA_2}{!} \cdot \sim}$}.
\end{defi}

\begin{exa}
Let $A = \SET{\m{a}, \m{b}, \m{c}}$, $\m{a} \sim \m{b}$,
$\Rb{\xA_1} = \SET{(\m{a},\m{c})}$ and $\Rb{\xA_2} = \SET{(\m{b},\m{c})}$.
Then, $\xA_1$ and $\xA_2$ are conversion equivalent modulo
$\sim$ but not normalization equivalent modulo $\sim$:
We have $\m{a} \Rab{\xA_1}{!} \m{c}$ but $\m{a} \in \nf{\xA_2}$
and $\m{a} \not\sim \m{c}$.
\end{exa}

Next we show two fundamental properties of normalization
equivalence modulo $\sim$ which are needed in \thmref{ddot}.

\begin{lem}
\label{lem:normeqmodsim_eqmodsim}
If $\xA_1$ and $\xA_2$
are normalization equivalent modulo $\sim$ and terminating, they are
equivalent modulo $\sim$.
\end{lem}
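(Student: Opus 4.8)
The plan is to prove the two inclusions $\bigLRab{\xA_1}{*} \subseteq \bigLRab{\xA_2}{*}$ and $\bigLRab{\xA_2}{*} \subseteq \bigLRab{\xA_1}{*}$ separately; since the hypotheses are symmetric in $\xA_1$ and $\xA_2$, it suffices to establish the first one. Now $\bigLRab{\xA_2}{*}$ is a symmetric, reflexive and transitive relation that already contains $\sim$, whereas $\bigLRab{\xA_1}{*}$ is the closure of the single steps $\Rb{\xA_1}$, $\Lb{\xA_1}$ and $\sim$. Hence the inclusion follows once we show the single-step containment $\Rb{\xA_1} \subseteq \bigLRab{\xA_2}{*}$: the case of $\Lb{\xA_1}$ is then handled by the symmetry of conversion, and the case of $\sim$ is immediate.

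To prove $\Rb{\xA_1} \subseteq \bigLRab{\xA_2}{*}$, I would take an arbitrary step $a \Rb{\xA_1} b$ and use termination of $\xA_1$ to pick a normal form $c \in \nf{\xA_1}$ with $b \Rab{\xA_1}{!} c$. Crucially, no confluence is needed: from $a \Rb{\xA_1} b \Rab{\xA_1}{*} c$ the same normal form is reachable from $a$, so $a \Rab{\xA_1}{!} c$ as well. Taking the trivial step $c \sim c$, both $a$ and $b$ are therefore related to $c$ by $\Rab{\xA_1}{!} \cdot \sim$. Normalization equivalence modulo $\sim$ now transfers these relationships to $\xA_2$: there exist terms $c'$ and $c''$ with $a \Rab{\xA_2}{!} c'$, $c' \sim c$, $b \Rab{\xA_2}{!} c''$ and $c'' \sim c$. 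Since $\sim$ is an equivalence relation we get $c' \sim c''$, yielding the conversion $a \Rab{\xA_2}{*} c' \sim c'' \Lab{\xA_2}{*} b$, that is, $a \bigLRab{\xA_2}{*} b$.

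The argument is short, and its only delicate point is the observation that termination alone—without confluence—already lets us route both endpoints of a single $\xA_1$-step through one common normal form $c$, so that the definition of normalization equivalence (which speaks about $\Rab{}{!} \cdot \sim$) becomes applicable to $a$ and $b$ simultaneously. Once this is in place, the remaining manipulations with $\sim$ and the reflexive–transitive closures are entirely routine.
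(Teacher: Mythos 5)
Your proposal is correct and follows essentially the same route as the paper: both arguments reduce the claim to a single $\xA_1$-step $a \Rb{\xA_1} b$ (the paper via induction on conversion length, you via the observation that $\bigLRab{\xA_2}{*}$ is an equivalence relation containing $\sim$), and both then route $a$ and $b$ through a common $\xA_1$-normal form of $b$ obtained from termination alone, transferring to $\xA_2$ by normalization equivalence. The delicate point you flag — that no confluence is needed to give $a$ and $b$ the same normal form witness — is exactly the step the paper's proof relies on as well.
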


\begin{proof}
We prove ${\bigLRab{\xA_1}{*}} \subseteq {\bigLRab{\xA_2}{*}}$ by
induction on the length of conversions in $\xA_1$. The claim follows
by symmetry. If $a \bigLRab{\xA_1}{0} b$ then $a = b$ and therefore
also $a \bigLRab{\xA_2}{0} b$. For
$a \bigLRab{\xA_1}{n} a' \bigLRb{\xA_1} b$ the induction hypothesis
yields $a \bigLRab{\xA_2}{*} a'$. If $a' \sim b$, the result follows
immediately. Otherwise, either $a' \Rb{\xA_1} b$ or
$b \Rb{\xA_1} a'$. We only consider the first case by again
exploiting symmetry. Since $\xA_1$ and $\xA_2$ are normalization
equivalent and terminating, there is an element $c$ such that
$b \Rab{\xA_1}{!} \cdot \sim c$ and $b \Rab{\xA_2}{!} \cdot \sim c$
and therefore also $a' \Rab{\xA_1}{!} \cdot \sim c$ and
$a' \Rab{\xA_2}{!} \cdot \sim c$. Hence, we can connect $a'$ and $b$
by a conversion in $\xA_2$ as desired.
\end{proof}

\begin{lem}
\label{lem:canonicityabstractres}
Let $\nf{\xA_2} \subseteq \nf{\xA_1}$ and
${\Rb{\xA_2}} \subseteq {({\Rb{\xA_1}}/{\sim})^{+}}$. If $\xA_1$ is
complete modulo $\sim$ then $\xA_2$ is complete modulo $\sim$ and
normalization equivalent modulo $\sim$ to $\xA_1$.
\end{lem}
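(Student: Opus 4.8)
The plan is to establish the three required facts for $\xA_2$---termination modulo $\sim$, normalization equivalence modulo $\sim$ to $\xA_1$, and the Church--Rosser property modulo $\sim$---in that order, since each later step reuses the earlier ones. Two observations will be used repeatedly: any relation is contained in its closure modulo $\sim$, so termination modulo $\sim$ entails ordinary termination and hence the existence of ordinary normal forms in both systems; and ${\Rab{\xA_2}{*}} \subseteq {\bigLRab{\xA_1}{*}}$, which is immediate from ${\Rb{\xA_2}} \subseteq ({\Rb{\xA_1}}/{\sim})^{+}$.

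First I would show that $\xA_2$ is terminating modulo $\sim$. Since $\sim$ is idempotent under composition, the hypothesis ${\Rb{\xA_2}} \subseteq ({\Rb{\xA_1}}/{\sim})^{+}$ yields ${{\Rb{\xA_2}}/{\sim}} \subseteq ({\Rb{\xA_1}}/{\sim})^{+}$. Consequently an infinite ${\Rb{\xA_2}}/{\sim}$ sequence would induce an infinite $({\Rb{\xA_1}}/{\sim})^{+}$ sequence and thus an infinite ${\Rb{\xA_1}}/{\sim}$ sequence, contradicting termination modulo $\sim$ of $\xA_1$.

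The main obstacle is normalization equivalence, ${\Rab{\xA_1}{!} \cdot \sim} = {\Rab{\xA_2}{!} \cdot \sim}$. The crucial auxiliary fact is that, because $\xA_1$ is Church--Rosser modulo $\sim$, any two $\xA_1$-normal forms reachable from a common term are $\sim$-equivalent, since the joining valley $\Rab{\xA_1}{*} \cdot \sim \cdot \Lab{\xA_1}{*}$ collapses to $\sim$ when both endpoints are normal forms. For ${\Rab{\xA_2}{!} \cdot \sim} \subseteq {\Rab{\xA_1}{!} \cdot \sim}$, given $a \Rab{\xA_2}{!} b$ I use $b \in \nf{\xA_2} \subseteq \nf{\xA_1}$ and $a \bigLRab{\xA_1}{*} b$ to get, via Church--Rosser of $\xA_1$ and $b \in \nf{\xA_1}$, a term $c$ with $a \Rab{\xA_1}{*} c \sim b$; extending $c$ to an $\xA_1$-normal form $d$ and applying Church--Rosser once more to $b \sim c \Rab{\xA_1}{*} d$ forces $b \sim d$, so $a \Rab{\xA_1}{!} d \sim b$. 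For the reverse inclusion I take $a \Rab{\xA_1}{!} b$, choose any $\xA_2$-normal form $b'$ with $a \Rab{\xA_2}{!} b'$ (which exists by the termination just proved), and apply the inclusion already shown to produce an $\xA_1$-normal form reachable from $a$ that is $\sim$-equivalent to $b'$; the auxiliary fact then forces $b \sim b'$, and the trailing $\sim$-steps are absorbed by transitivity.

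Finally I would conclude completeness of $\xA_2$. Normalization equivalence together with ordinary termination of both systems lets me invoke \lemref{normeqmodsim_eqmodsim} to obtain conversion equivalence ${\bigLRab{\xA_1}{*}} = {\bigLRab{\xA_2}{*}}$. For the Church--Rosser property of $\xA_2$, given $a \bigLRab{\xA_2}{*} b$ I rewrite both endpoints to $\xA_2$-normal forms $a'$ and $b'$; these lie in $\nf{\xA_1}$ by hypothesis, and conversion equivalence gives $a' \bigLRab{\xA_1}{*} b'$, which Church--Rosser modulo $\sim$ of $\xA_1$ collapses to $a' \sim b'$. Hence $a \Rab{\xA_2}{*} a' \sim b' \Lab{\xA_2}{*} b$, that is, $a \Dab{\xA_2}{\sim} b$. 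Together with termination modulo $\sim$, this shows that $\xA_2$ is complete modulo $\sim$, finishing the proof.
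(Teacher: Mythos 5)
Your proof is correct and follows essentially the same route as the paper's: termination modulo $\sim$ from the inclusion into $({\Rb{\xA_1}}/{\sim})^{+}$, the forward normal-form inclusion from the Church--Rosser property of $\xA_1$ together with $\nf{\xA_2} \subseteq \nf{\xA_1}$, the reverse inclusion from termination of $\xA_2$ and uniqueness of $\xA_1$-normal forms up to $\sim$, and finally the Church--Rosser property of $\xA_2$. The only cosmetic difference is that you obtain the last step via \lemref{normeqmodsim_eqmodsim} and $\xA_2$-normal forms, whereas the paper chains the inclusions ${\bigLRab{\xA_2}{*}} \subseteq {\bigLRab{\xA_1}{*}} \subseteq {\Rab{\xA_1}{!} \cdot \sim \cdot \Lab{\xA_1}{!}} \subseteq {\Rab{\xA_2}{!} \cdot \sim \cdot \Lab{\xA_2}{!}}$ directly.
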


\begin{proof}
From the inclusion
${\Rb{\xA_2}} \subseteq {({\Rb{\xA_1}}/{\sim})^{+}}$ as well as the
termination of $\xA_1$ modulo $\sim$ we obtain that $\xA_2$ is
terminating modulo $\sim$. Next, we show
\smash{${\Rab{\xA_2}{!} \cdot \sim} \subseteq
{\Rab{\xA_1}{!} \cdot \sim}$}. The inclusion
${\Rb{\xA_2}} \subseteq {({\Rb{\xA_1}}/{\sim})^{+}}$ gives rise to
\smash{${\Rab{\xA_2}{!}} \subseteq {\bigLRab{\xA_1}{*}}$}. Since $\xA_1$
is Church--Rosser modulo $\sim$ and $\nf{\xA_2} \subseteq \nf{\xA_1}$
we have \smash{${\Rab{\xA_2}{!}} \subseteq {\Rab{\xA_1}{!} \cdot \sim}$}
and therefore
\smash{${\Rab{\xA_2}{!} \cdot \sim} \subseteq {\Rab{\xA_1}{!} \cdot \sim}$}
as desired. For the reverse inclusion
consider \smash{$a \Rab{\xA_1}{!} b' \sim b$}. The termination of $\xA_2$
modulo $\sim$ implies the termination of $\xA_2$, so there is some
$c$ such that $a \Rab{\xA_2}{!} \cdot \sim c$ and thus
$a \Rab{\xA_1}{!} c' \sim c$. Now, the Church--Rosser modulo $\sim$
property of $\xA_1$ yields $b' \sim c'$. We obtain $b \sim c$ and
therefore $a \Rab{\xA_2}{!} \cdot \sim b$. Hence, $\xA_1$ and
$\xA_2$ are normalization equivalent modulo $\sim$. Finally, the
Church--Rosser modulo $\sim$ property of $\xA_2$ follows from the
sequence of inclusions
\[
{\bigLRab{\xA_2}{*}} \,\subseteq\, {\bigLRab{\xA_1}{*}}
\,\subseteq\, {\Rab{\xA_1}{!} \cdot \sim \cdot \Lab{\xA_1}{!}}
\,\subseteq\, {\Rab{\xA_2}{!} \cdot \sim \cdot \Lab{\xA_2}{!}}
\]
which are justified by
${\Rb{\xA_2}} \subseteq {({\Rb{\xA_1}}/{\sim})^{+}}$, the fact that
$\xA_1$ is complete modulo $\sim$ and normalization equivalence
modulo $\sim$ of $\xA_1$ and $\xA_2$, respectively.
\end{proof}

\subsection{Computing \texorpdfstring{$\xB$}{B}-Canonical Term Rewrite
Systems}

Intuitively, $\xB$-complete TRSs need to have a variety of left-hand
sides of rules in order to match every possible $\xB$-equivalent term
of a reducible term. However, for the right-hand sides of rules only
one representative of the $\xB$-equivalence class suffices. This
rationale is reflected in the following series of definitions.

\begin{defi}
Two rules $\ell \R r$ and $\ell' \R r'$ are
\emph{right-$\xB$-equivalent variants} if there exists a renaming
$\sigma$ such that $\ell\sigma = \ell'$ and $r\sigma \bsim r'$. We
write $\xR_1 \llrbsim \xR_2$ if every rule of $\xR_1$ has a
right-$\xB$-equivalent variant in $\xR_2$ and vice versa. For any
TRS $\xR$, the TRS $\xR_{\llrbsim}$ denotes a right-$\xB$-equivalent
variant-free version, i.e., only one representative of every
equivalence class of right-$\xB$-equivalent variants is present.
\end{defi}

\begin{exa}
From the TRS $\xR$ consisting of the rules
\begin{align*}
\m{s}(x) \times y &\cR (x \times y) + y &
y \times \m{s}(x) &\cR (x \times y) + y &
\m{s}(x) \times y &\cR y + (x \times y)
\end{align*}
where $+$ and $\times$ are AC function symbols we obtain
$\xR_{\llrbsim}$ by removing the third rule:
\begin{align*}
\m{s}(x) \times y &\cR (x \times y) + y &
y \times \m{s}(x) &\cR (x \times y) + y
\end{align*}
\end{exa}

The relation $\llrbsim$ weakens the notion of literal similarity of
TRSs ($\doteq$) to the setting of the inference system \ia: While
composing with $\RmB$ is allowed, left-hand sides may only be rewritten
with $\RbR$.

\begin{defi}
\label{def:bcanonical}
A TRS $\xR$ is \emph{left-reduced} if for every rewrite rule
$\ell \to r \in \xR$, $\ell$ is a normal form of
$\xR \setminus \SET{\ell \to r}$ and
\emph{right-$\xB$-reduced} if for every rewrite rule
${\ell \R r} \in \xR$, $r$ is a normal form with respect to
$\RmB$. We say that $\xR$ is \emph{canonical modulo $\xB$}
if it is complete modulo $\xB$, left-reduced and right-$\xB$-reduced.
\end{defi}

For standard rewriting, canonical systems are defined in \cite{M83}
as TRSs which are complete, left-reduced and \emph{right-reduced}. Here,
right-reduced is defined just as right-$\xB$-reduced but with
$\RbR$ instead of $\RmB$. Intuitively, canonical systems are minimal
complete systems because the rules cannot be used to simplify (other) rules
anymore. In our setting, it is important to note that while right-hand
sides can be simplified with $\RmB$, for left-hand sides
only simplification with $\RbR$ is considered. This ensures
that all terms which are reducible with respect to $\RmB$ can also be
rewritten with $\RbR$. Note that this also reflects the definition of
\col and \com in \ia.

The next definition is an extension of M\'etivier's proposed procedure
\cite[Theorem 7]{M83} to transform a complete system into a canonical
system. As in the original definition, a $\xB$-complete TRS is transformed
into a $\xB$-canonical TRSs in two stages: First, a right-$\xB$-reduced
version without right-$\xB$-equivalent variants is constructed. After
that, superfluous rules are removed in order to obtain a left-reduced
system. Hence, the overall result is canonical modulo $\xB$. Needless to
say, this is only computable if $\xB$ has a decidable equational theory. A
presentation of M\'etivier's concept for standard rewriting in the style
of our upcoming definition can be found in \cite{HMSW19}.

\begin{defi}
\label{def:ddot}
Given a $\xB$-terminating TRS $\xR$, the TRSs $\dot{\xR}$ and
$\ddot{\xR}$ are defined as follows:
\begin{align*}
\dot{\xR} &\,=\,
\SET{\ell \to r{\Db{\xR/\xB}} \mid {\ell \to r} \in \xR}_{\llrbsim} \\
\ddot{\xR} &\,=\,
\SET{\ell \to r \in \dot{\xR} \mid \ell \in
\nf{\dot{\xR} \setminus \SET{\ell \to r}}}
\end{align*}
Here, $r{\Db{\xR/\xB}}$ denotes an arbitrary normal form of $r$ with
respect to $\RmB$.
\end{defi}

\begin{exa}
Consider again the ES $\xE$ from \exaref{a95_ex_4_2_15b}.
If neither \col nor \com is used, AC completion results in the
complete presentation $\xR = \SET{1',2',3',4\,\text{--}\,10,13,14,15}$.
Because the right-hand sides of $14$ and $15$ are reducible by
$\xR/\xB$ and also because $13$ is a right-$\xB$-equivalent variant of 
$1'$,
the TRS $\dot{\xR}$ consists of $1',2',3',4\,\text{--}\,10$ and the
modified rules:
\begin{align*}
\m{f}(x + (y + z)) &\nR{14''} (\m{f}(x) + \m{f}(y)) + \m{f}(z)
&
\m{f}((x + y) + z) &\nR{15''} \m{f}(x) + (\m{f}(y) + \m{f}(z))
\end{align*}
Since the left-hand sides of $5\,\text{--}\,10$, $14''$, and $15''$ are
reducible by $1'$, $3'$ or $4$, they are excluded from $\ddot{\xR}$. Thus,
$\ddot{\xR} = \SET{1',2',3',4}$ is obtained.
As expected, it is the result of performing completion with \col and \com.  
\end{exa}

The following example shows that for preserving the equational theory,
it does not suffice to make $\dot{\xR}$ variant-free
in the sense of $\doteq$.

\begin{exa}
Suppose we change the definition to
\[
\dot{\xR} \,=\, \SET{\ell \to r{\Db{\xR/\xB}} \mid {\ell \to r} \in \xR}
\]
where variant-freeness is implicit. Consider the TRS
$\xR$ consisting of the rules
\begin{align*}
\m{f}(x + y) &\cR \m{f}(x) + \m{f}(y) &
\m{f}(x + y) &\cR \m{f}(y) + \m{f}(x)
\end{align*}
where $+$ is an AC function symbol. We have $\dot{\xR} = \xR$
since the two rules are not variants and therefore
\smash{$\ddot{\xR} = \varnothing$} which obviously induces a
different equational theory than $\xR$. However, by using
\defref{ddot} we obtain e.g.\ the following $\xB$-canonical TRS:
\[
\ddot{\xR} \,=\, \dot{\xR} \,=\,
\SET{\m{f}(x + y) \R \m{f}(x) + \m{f}(y)}
\]
\end{exa}

Before we can prove the main result of this section, we need the
following technical lemma.

\begin{lem}
\label{lem:ddotaux}
If $\xR$ is a $\xB$-complete \textup{TRS} and $\dot{\xR}$ is
Church--Rosser modulo $\xB$ then
$\nf{\ddot{\xR}} \subseteq \nf{\dot{\xR}}$.
\end{lem}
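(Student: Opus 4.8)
The plan is to prove the contrapositive inclusion, namely that every term reducible by $\dot{\xR}$ is reducible by $\ddot{\xR}$; the converse inclusion $\nf{\dot{\xR}} \subseteq \nf{\ddot{\xR}}$ is trivial because $\ddot{\xR} \subseteq \dot{\xR}$. Concretely, I would show that for every rule ${\ell \to r} \in \dot{\xR}$ and every term $t$ reducible by ${\ell \to r}$, the term $t$ is already reducible by some rule of $\ddot{\xR}$, and I would establish this by well-founded induction on $\ell$ with respect to the encompassment order $\enc$, which is well-founded by the preliminaries.

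The real work lies in a structural fact that I would isolate first: distinct rules of $\dot{\xR}$ cannot have variant left-hand sides. To prove it, suppose ${\ell \to \hat{r}}$ and ${\ell' \to \hat{r}'}$ are distinct rules of $\dot{\xR}$ with $\ell\pi = \ell'$ for some renaming $\pi$. By construction the right-hand sides of $\dot{\xR}$ are normal forms with respect to $\RmB$; since every $\dot{\xR}$-step lies in $\RmB[+]$ (each rule ${\ell \to \hat{r}}$ satisfies $\ell \RmB[+] \hat{r}$, and $\RmB$ is closed under contexts and substitutions), these right-hand sides are in particular $\dot{\xR}$-normal, and so is $\hat{r}\pi$. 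Now $\ell'$ rewrites with $\dot{\xR}$ both to $\hat{r}'$ and, via $\ell\pi = \ell'$, to $\hat{r}\pi$, so $\hat{r}' \bigLRab{\dot{\xR}}{*} \hat{r}\pi$. Church--Rosser modulo $\xB$ of $\dot{\xR}$ yields $\hat{r}' \Dab{\dot{\xR}}{\sim} \hat{r}\pi$, and as both terms are $\dot{\xR}$-normal the rewrite parts collapse, leaving $\hat{r}\pi \bsim \hat{r}'$. Hence ${\ell \to \hat{r}}$ and ${\ell' \to \hat{r}'}$ are right-$\xB$-equivalent variants, contradicting the variant-freeness encoded by the subscript $\llrbsim$ in the definition of $\dot{\xR}$.

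With this fact available, the induction is straightforward. Given $t$ reducible by ${\ell \to r} \in \dot{\xR}$, say $t|_p = \ell\sigma$, either ${\ell \to r} \in \ddot{\xR}$, in which case $t$ is reducible by $\ddot{\xR}$, or $\ell \notin \nf{\dot{\xR} \setminus \SET{\ell \to r}}$. In the latter case $\ell$ is reducible by a different rule ${\ell' \to r'}$ of $\dot{\xR}$ at some position $q$, so $\ell|_q = \ell'\tau$ and therefore $t|_{pq} = \ell'(\tau\sigma)$, showing that $t$ is reducible by ${\ell' \to r'}$ as well. From $\ell|_q = \ell'\tau$ I get $\ell \enceq \ell'$, and the structural fact rules out $\ell \doteq \ell'$, so $\ell \enc \ell'$ strictly. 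Applying the induction hypothesis to the $\enc$-smaller left-hand side $\ell'$ then gives that $t$ is reducible by $\ddot{\xR}$.

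I expect the structural fact to be the main obstacle, and it is precisely where the hypothesis that $\dot{\xR}$ is Church--Rosser modulo $\xB$ becomes indispensable: if a rule were collapsed by another rule whose left-hand side is merely a variant of its own, the order $\enc$ would fail to decrease and the induction would stall. Church--Rosser modulo $\xB$ is exactly what forces two rules with variant left-hand sides to have $\xB$-equivalent right-hand sides and hence to be identified by $\llrbsim$, securing a strict $\enc$-decrease at each step.
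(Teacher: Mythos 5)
Your proposal is correct and follows essentially the same route as the paper's proof: the same well-founded induction on the left-hand side with respect to $\enc$, with the case $\ell \doteq \ell'$ excluded by exactly the argument you isolate as your ``structural fact'' (right-hand sides of $\dot{\xR}$ are normal, so Church--Rosser modulo $\xB$ forces the two rules to be right-$\xB$-equivalent variants, contradicting the variant-freeness built into $\dot{\xR}$). The only difference is presentational: the paper handles that contradiction inline as a case of the induction rather than as a separately stated claim.
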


\begin{proof}
Suppose that $\xR$ is $\xB$-complete and $\dot{\xR}$ is
Church--Rosser modulo $\xB$. We prove the contrapositive
\smash{$s \notin \nf{\dot{\xR}} \implies
s \notin \nf{\ddot{\xR}}$}
for which it suffices to show that
\smash{$\ell \notin \nf{\ddot{\xR}}$} whenever
${\ell \R r} \in \dot{\xR}$ due to closure of rewriting under
contexts and substitutions. We prove this by induction on $\ell$
with respect to the well-founded order $\enc$. If
\smash{${\ell \R r} \in \ddot{\xR}$}, the claim follows
immediately. Otherwise,
$\ell \notin \nf{\dot{\xR} \setminus \SET{\ell \R r}}$, i.e., there
is a rule ${\ell' \R r'} \in \dot{\xR}$ with $\ell \enceq
\ell'$. Furthermore, the rules $\ell \R r$ and $\ell' \R r$ are not
right-$\xB$-equivalent variants by the definition of
$\dot{\xR}$. By definition,
${\enceq} = {\enc} \cup {\doteq}$, so we continue by a case analysis
on $\ell \enceq \ell'$.
\begin{itemize}[label=$\triangleright$]
\item
If $\ell \enc \ell'$, the induction hypothesis yields
$\ell' \notin \nf{\ddot{\xR}}$ and therefore
$\ell \notin \nf{\ddot{\xR}}$ as desired. \smallskip
\item
If $\ell \doteq \ell'$ then by definition there exists a renaming
$\sigma$ such that $\ell = \ell'\sigma$. From the right-$\xB$-reducedness
of $\dot{\xR}$ we conclude that $r$ and $r'$ are normal forms with
respect to \smash{$\Rb{\dot{\xR}/\xB}$}. Since normal forms are closed
under renamings, this also holds for $r'\sigma$. Together with
\smash{$r \Lb{\dot{\xR}} \ell = \ell'\sigma \Rb{\dot{\xR}} r'\sigma$}
and the Church--Rosser modulo $\xB$ property of $\dot{\xR}$ we
obtain $r \bsim r'\sigma$. Thus, $\ell \R r$ and $\ell' \R r'$ are
right-$\xB$-equivalent variants. This contradicts the definition
of $\dot{\xR}$. Therefore, this case cannot happen. \qedhere
\end{itemize}
\end{proof}

\begin{thm}
\label{thm:ddot}
If $\xR$ is a $\xB$-complete \textup{TRS} then $\ddot{\xR}$ is a
\textup{TRS} which is normalization equivalent modulo $\xB$ to $\xR$,
conversion equivalent modulo $\xB$ to $\xR$ and canonical modulo
$\xB$.
\end{thm}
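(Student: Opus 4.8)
The plan is to climb from $\xR$ to $\dot{\xR}$ to $\ddot{\xR}$ and apply the abstract result \lemref{canonicityabstractres} twice, identifying the base rewrite relation with $\RbR$ and the equivalence $\sim$ with $\bsim$, so that $\R/\sim$ becomes $\RmB$. First I would show that $\dot{\xR}$ is complete modulo $\xB$ and normalization equivalent modulo $\xB$ to $\xR$; then, after invoking \lemref{ddotaux}, that $\ddot{\xR}$ is complete modulo $\xB$ and normalization equivalent modulo $\xB$ to $\dot{\xR}$. Transitivity of normalization equivalence together with \lemref{normeqmodsim_eqmodsim} then yields the two equivalence claims, and the remaining canonicity conditions are read off from the definitions of $\dot{\xR}$ and $\ddot{\xR}$.

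For the first application of \lemref{canonicityabstractres} with $\xA_1 = \xR$ and $\xA_2 = \dot{\xR}$, I would verify the two hypotheses. The inclusion $\nf{\dot{\xR}} \subseteq \nf{\xR}$ holds because passing from $\xR$ to $\dot{\xR}$ only normalizes right-hand sides (which does not affect reducibility) and discards right-$\xB$-equivalent variants, whose left-hand sides are variants of retained rules; hence the left-hand sides of $\xR$ and $\dot{\xR}$ agree up to renaming and in fact $\nf{\dot{\xR}} = \nf{\xR}$. For $\Rb{\dot{\xR}} \subseteq (\Rb{\xR}/{\bsim})^+$, a step with a rule $\ell \to r' \in \dot{\xR}$ (where $r'$ is an $\RmB$-normal form of $r$ and $\ell \to r \in \xR$) factors as $\RbR \cdot \RmB^*$, and $\RbR \subseteq \RmB$. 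Since $\xR$ is complete modulo $\xB$ by hypothesis, the lemma gives that $\dot{\xR}$ is complete modulo $\xB$ — in particular Church--Rosser modulo $\xB$ — and normalization equivalent modulo $\xB$ to $\xR$.

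Next, \lemref{ddotaux} applies (as $\xR$ is $\xB$-complete and $\dot{\xR}$ is Church--Rosser modulo $\xB$) and yields $\nf{\ddot{\xR}} \subseteq \nf{\dot{\xR}}$; the reverse inclusion is immediate from $\ddot{\xR} \subseteq \dot{\xR}$, so $\nf{\ddot{\xR}} = \nf{\dot{\xR}} = \nf{\xR}$. With $\xA_1 = \dot{\xR}$ and $\xA_2 = \ddot{\xR}$, the hypotheses of \lemref{canonicityabstractres} hold, since $\Rb{\ddot{\xR}} \subseteq \Rb{\dot{\xR}} \subseteq (\Rb{\dot{\xR}}/{\bsim})^+$ and the normal-form inclusion was just noted. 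Thus $\ddot{\xR}$ is complete modulo $\xB$ and normalization equivalent modulo $\xB$ to $\dot{\xR}$. Composing the two normalization equivalences gives normalization equivalence of $\ddot{\xR}$ and $\xR$; since both are terminating (completeness modulo $\xB$ entails termination modulo $\xB$, hence termination), \lemref{normeqmodsim_eqmodsim} upgrades this to conversion equivalence modulo $\xB$.

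It remains to check canonicity of $\ddot{\xR}$. Completeness modulo $\xB$ is already established. Left-reducedness follows from the defining condition $\ell \in \nf{\dot{\xR} \setminus \SET{\ell \to r}}$ together with $\ddot{\xR} \setminus \SET{\ell \to r} \subseteq \dot{\xR} \setminus \SET{\ell \to r}$, whence $\ell \in \nf{\ddot{\xR} \setminus \SET{\ell \to r}}$. Right-$\xB$-reducedness is the delicate point: every right-hand side of a rule in $\ddot{\xR} \subseteq \dot{\xR}$ is by construction an $\RmB$-normal form relative to $\xR$, while being $\RmB$-normal relative to $\ddot{\xR}$ means precisely that all its $\bsim$-variants lie in $\nf{\ddot{\xR}}$; the equality $\nf{\ddot{\xR}} = \nf{\xR}$ derived above bridges these two notions. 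I expect this bookkeeping around $\RmB$-normal forms — and in particular the normal-form chain feeding right-$\xB$-reducedness — to be the main obstacle, everything else being a routine instantiation of the abstract lemmas. That $\ddot{\xR}$ is a genuine TRS is clear, since normalizing right-hand sides with $\RmB$ preserves the variable condition $\var{\ell} \supseteq \var{r}$.
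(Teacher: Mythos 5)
Your proposal is correct and follows essentially the same route as the paper: two applications of \lemref{canonicityabstractres} (the paper takes $\xA_1 = \xR$ in both, you take $\xA_1 = \dot{\xR}$ in the second and compose normalization equivalences, which is an immaterial variation), \lemref{ddotaux} for $\nf{\ddot{\xR}} \subseteq \nf{\dot{\xR}}$, and \lemref{normeqmodsim_eqmodsim} to upgrade to conversion equivalence. Your explicit bookkeeping for right-$\xB$-reducedness via $\nf{\ddot{\xR}} = \nf{\xR}$ spells out what the paper leaves implicit in ``right-$\xB$-reduced by definition'', but it is the same argument.
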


\begin{proof}
Let $\xR$ be a $\xB$-complete TRS. By definition,
\smash{$\ddot{\xR} \subseteq \dot{\xR} \subseteq
{\RmB[+]}$}. Since $\xR$ and $\dot{\xR}$ have the same
left-hand sides, their normal forms coincide. An application of
\lemref{canonicityabstractres} yields that $\dot{\xR}$ is
$\xB$-complete and normalization equivalent modulo $\xB$ to $\xR$.
Furthermore, $\nf{\ddot{\xR}} = \nf{\dot{\xR}}$: The
inclusion $\nf{\ddot{\xR}} \subseteq \nf{\dot{\xR}}$
follows from \lemref{ddotaux} and the inclusion
$\nf{\dot{\xR}} \subseteq \nf{\ddot{\xR}}$ is a consequence
of $\ddot{\xR} \subseteq \dot{\xR}$. Now, another
application of \lemref{canonicityabstractres} yields that
$\ddot{\xR}$ is also $\xB$-complete and normalization
equivalent modulo $\xB$ to $\xR$. The TRS $\dot{\xR}$ is
right-$\xB$-reduced by definition. Since
$\ddot{\xR} \subseteq \dot{\xR}$, $\ddot{\xR}$ is
also right-$\xB$-reduced. Together with the already established fact
that $\nf{\dot{\xR}} = \nf{\ddot{\xR}}$, left-reducedness
of $\ddot{\xR}$ follows from its definition. Hence,
$\ddot{\xR}$ is canonical modulo $\xB$. Finally,
\lemref{normeqmodsim_eqmodsim} establishes that $\ddot{\xR}$
is not only normalization equivalent modulo $\xB$ but also
(conversion) equivalent modulo $\xB$ to $\xR$.
\end{proof}
\noindent 
Due to the constructive nature of \defref{ddot}, the
previous theorem states that given a $\xB$-complete TRS $\xR$ it is
possible to compute (if the equational theory of $\xB$ is decidable)
a $\xB$-canonical TRS which is equivalent modulo
$\xB$ to $\xR$. An inspection of \defref{bcanonical}
further reveals that the inference system \ia can produce
$\xB$-canonical TRSs: If
\smash{$(\xE,\varnothing) \seqa{\ia}{*} (\varnothing,\xR)$} is a fair run
and neither \com nor \col are applicable, then $\xR$ is not only
complete but also canonical modulo $\xB$. The following lemma shows that
this also holds for the inference system \ib.

\begin{lem}
\label{lem:ibcanonical}
If $(\xE,\varnothing) \seqa{\ib}{*} (\varnothing,\xR)$ is a fair
run and neither \com nor \col are applicable, then $\xR$ is
$\xB$-canonical.
\end{lem}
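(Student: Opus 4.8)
The plan is to verify the three defining properties of canonicity modulo $\xB$ from \defref{bcanonical} for $\xR$: completeness modulo $\xB$, left-reducedness and right-$\xB$-reducedness. Completeness is immediate. Since the run is fair and ends in $(\varnothing,\xR)$ it is in particular non-failing, so \corref{bcorrectness} yields that $\xR$ is a $\xB$-complete presentation of $\xE$; in particular $\xR$ is left-linear, terminating modulo $\xB$ and Church--Rosser modulo $\xB$. Left-reducedness follows directly from the inapplicability of \col: an application of \col to $(\varnothing,\xR)$ would require a rule ${\ell \R r} \in \xR$ whose left-hand side $\ell$ is reducible by the remaining rules $\xR \setminus \SET{\ell \R r}$, so its absence means $\ell \in \nf{\xR \setminus \SET{\ell \R r}}$ for every rule, which is exactly left-reducedness.

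The only nontrivial part is right-$\xB$-reducedness, and this is where the main obstacle lies: in \ib the rule \com rewrites right-hand sides only with the normal relation $\RbR$, so its inapplicability tells us merely that every right-hand side $r$ is a normal form of $\RbR$, whereas right-$\xB$-reducedness demands that $r$ be a normal form of $\RmB$. To close this gap I would prove a general consequence of completeness, namely that in any $\xB$-complete \textup{TRS} every normal form of $\RbR$ is also a normal form of $\RmB$ (the converse being trivial from ${\RbR} \subseteq {\RmB}$).

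For that claim, suppose towards a contradiction that $t$ is a normal form of $\RbR$ but reducible with $\RmB$. By termination modulo $\xB$ we may pick a normal form $\hat{t}$ of $\RmB$ reachable from $t$ in at least one step, whence $t \bigLRa{*} \hat{t}$. The Church--Rosser modulo $\xB$ property gives $t \Dab{\xR}{\sim} \hat{t}$, i.e.\ $t \Rab{\xR}{*} \cdot \bsim \cdot \Lab{\xR}{*} \hat{t}$. As $t$ is a normal form of $\RbR$ the forward steps vanish, and as $\hat{t}$ is a normal form of $\RmB$ (hence of $\RbR$) the backward steps vanish, leaving $t \bsim \hat{t}$. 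But $t \RmB t'$ for some $t'$, and prepending $\hat{t} \bsim t$ yields $\hat{t} \RmB t'$, contradicting that $\hat{t}$ is a normal form of $\RmB$. Applying this to each right-hand side of $\xR$, every one of which is a normal form of $\RbR$ by the inapplicability of \com, shows that $\xR$ is right-$\xB$-reduced. Together with the two previous points, $\xR$ is canonical modulo $\xB$. I expect the bridging claim to be the crux, while the remaining steps are a straightforward reading of the \col and \com inference rules against \corref{bcorrectness}. One could alternatively pass through \thmref{simres} to an \ia-run producing $\xR$, but checking inapplicability of \com in \ia would require exactly the same completeness argument, so the direct route is preferable.
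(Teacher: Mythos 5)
Your proposal is correct and follows essentially the same route as the paper: left-reducedness from the inapplicability of \col, and right-$\xB$-reducedness by normalizing a right-hand side with $\RmB$, invoking the Church--Rosser modulo $\xB$ property of the $\xB$-complete system $\xR$ together with termination modulo $\xB$ to collapse the resulting valley, and contradicting the inapplicability of \com. Your packaging of the crux as a standalone claim (every $\RbR$-normal form of a $\xB$-complete TRS is an $\RmB$-normal form) is just the contrapositive of the step the paper carries out inline, so the two arguments coincide.
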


\begin{proof}
According to \corref{bcorrectness}, the TRS $\xR$
is a $\xB$-complete presentation of $\xE$. Furthermore, $\xR$ is
left-reduced by definition as \col is not applicable. For a proof
by contradiction, suppose that $\xR$ is not
right-$\xB$-reduced. Hence, there exists a rule $\ell \R r$ in $\xR$
which can be modified to $\ell \R r'$ where
\smash{$r \RmB[!] r'$}.
Note that $r$ and $r'$ are not equivalent modulo $\xB$ as $\xR$
is terminating modulo $\xB$. From the $\xB$-completeness of $\xR$
we obtain \smash{$r \Rab{\xR}{!} \cdot \bsim^{} \cdot \Lab{\xR}{!} r'$}.
Since $r'$ is a normal form with respect to $\RmB$, we
have $r \Rab{\xR}{+} \cdot \bsim^{} r'$ which contradicts our
assumption that \com is not applicable.
\end{proof}
\noindent 
Therefore, both \ia and \ib can produce canonical systems due to
the availability of \com and \col which is also referred to as
\emph{inter-reduction}. If a given completion procedure lacks
inter-reduction, it is an instance of \emph{elementary completion}.
Note that the original completion procedure by Knuth and
Bendix~\cite{KB70} performs elementary completion. The next example
shows that there is a big difference between \ia and \ib with
respect to elementary completion. In particular, \ia cannot
simulate \ib if both are restricted to elementary completion: 
In order to simulate \ib's \sip in \ia, \com and \col
are needed as can be seen in the proof of \lemref{simaux1}.

\begin{exa}
\label{exa:ia_elementary_completion}
Consider the ESs $\xE = \SET{x \cdot \m{1} \approx x}$ and
$\xB = \SET{(x \cdot y) \cdot z \approx x \cdot (y \cdot z)}$
as well as a corresponding run in \ia. There is only one option to
orient the only equation into a terminating rewrite rule, namely
$x \cdot \m{1} \R x$. From the critical peak
\smash{$x \cdot y \Lo{}{\epsilon} (x \cdot y) \cdot \m{1}
\LRo{}{\epsilon} x \cdot (y \cdot 1)$} we can deduce the rule
$x \cdot (y \cdot 1) \R x \cdot y$. In \ib restricted to elementary
completion, the result of \ded would be the corresponding equation
which can be simplified to a trivial equation and therefore
deleted. In \ia restricted to elementary completion, however,
there are no means of removing or even altering the deduced
rule. Hence, the rule will be used to deduce more critical pairs
with the associativity axiom which are again kept as
rules. Clearly, this process cannot terminate as overlaps with the
associativity axioms can increase the size of the left-hand side
of the rules without bound.
\end{exa}

The previous example shows that for many ESs, there will not
be a finite run in \ia restricted to elementary completion. Hence,
the usage of inter-reduction and the resulting definition of
canonicity is crucial for the success of \ia in generating finite
solutions to validity problems.
We close the section by stating the fact that even for reduced TRSs
which are not complete, there may be fewer prime critical pairs
than ordinary critical pairs. In Examples \ref{exa:accp} and \ref{exa:acpcp}
we already observed that for a complete TRS.

\begin{exa}
\label{exa:fewerpcp}
Consider the left-linear reduced TRS $\xR$:
\begin{align*}
\m{f}(\m{g}(x,\m{a})) &\R x
&
\m{f}(\m{g}(\m{a}, x)) &\R \m{f}(\m{f}(\m{a}))
&
\m{g}(\m{a},\m{a}) &\R \m{a}
\end{align*}
The TRS admits the three critical pairs stemming from the following critical peaks:
\begin{align*}
&  \peak{\underline{\m{f}(\m{g}(\m{a}, \m{a}))}}{\m{f}(\m{f}(\m{a}))}{\m{a}}
&& \peak{\underline{\m{f}(\m{g}(\m{a}, \m{a}))}}{\m{a}}{\m{f}(\m{f}(\m{a}))}
&& \peak{\m{f}(\underline{\m{g}(\m{a}, \m{a})})}{\m{f}(\m{a})}{\m{a}}
&& \peak{\m{f}(\underline{\m{g}(\m{a}, \m{a})})}{\m{f}(\m{a})}{\m{f}(\m{f}(\m{a}))}
\end{align*}
As the first two are not prime, $\pcp{\xR}$ only consists of the two
equations $\m{f}(\m{a}) \approx \m{a}$ and
$\m{f}(\m{f}(\m{a})) \approx \m{a}$.  Note that the TRS is completed by
adding the rule $\m{f}(\m{a}) \to \m{a}$ corresponding to the former
equation.
\end{exa}

\subsection{Uniqueness of \texorpdfstring{$\xB$}{B}-Canonical Term
Rewrite Systems}

The main result of this section states that $\xB$-canonical TRSs which
are compatible with the same $\xB$-compatible reduction order are
unique up to right-$\xB$-equivalent variants. This property justifies
the usage of the term \emph{canonical} and motivates the computation
of TRSs which are canonical modulo $\xB$ as discussed in the last
section. In order to prove this result, we start with an auxiliary
lemma.

\begin{lem}
\label{lem:uniqueaux}
Let $\xR$ be a \textup{TRS} which is right-$\xB$-reduced and let $s$
be a reducible term which is minimal with respect to $\enc$. If
$s \Rab{\xR}{+} \cdot \bsim^{} t$ then $s \R t$ is a
right-$\xB$-equivalent variant of a rule in $\xR$.
\end{lem}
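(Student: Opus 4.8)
The plan is to use the minimality of $s$ with respect to $\enc$ to locate the first rewrite step at the root and to identify its left-hand side up to renaming, and then to use right-$\xB$-reducedness to show that no further $\RbR$-step is possible. First I would examine the initial step of $s \Rab{\xR}{+} \cdot \bsim t$. As $s$ is reducible there are a rule ${\ell \to r} \in \xR$, a position $p$ and a substitution $\tau$ with $s|_p = \ell\tau$. If $p \neq \epsilon$ then $s|_p$ is a proper subterm of $s$, so $s \enc s|_p$; since $s|_p$ is reducible this contradicts the minimality of $s$. Hence $p = \epsilon$ and $s = \ell\tau$. Moreover $\ell$ is itself reducible (apply ${\ell \to r}$ at the root), and from $s = \ell\tau$ we have $s \enceq \ell$, so minimality again forbids $s \enc \ell$; as ${\enceq} = {\enc} \cup {\doteq}$ this forces $s \doteq \ell$. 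Thus there is a renaming $\rho$ with $\ell\rho = s$, and because $\ell\rho = \ell\tau$ agree on $\var{\ell} \supseteq \var{r}$ we may write the first step as $s = \ell\rho \RbR r\rho$.

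It then remains to observe that $r\rho$ is $\RbR$-irreducible. Since $\xR$ is right-$\xB$-reduced, $r$ is a normal form of $\RmB$; because $\RmB$ is closed under substitutions and $\rho$ is a bijective renaming, an $\RmB$-step from $r\rho$ would yield (via $\rho^{-1}$) an $\RmB$-step from $r$, so $r\rho$ is an $\RmB$-normal form as well. From ${\RbR} \subseteq {\RmB}$ it follows that $r\rho$ is $\RbR$-irreducible, whence the remaining $\Rab{\xR}{*}$-part of $s \RbR r\rho \Rab{\xR}{*} \cdot \bsim t$ is empty and $r\rho \bsim t$. The renaming $\rho$ now witnesses $\ell\rho = s$ and $r\rho \bsim t$, which is exactly the claim that $s \to t$ is a right-$\xB$-equivalent variant of ${\ell \to r}$; that $s \to t$ is a genuine rule follows since $s = \ell\rho$ is not a variable and $\var{t} = \var{r\rho} \subseteq \var{s}$, using that $\bsim$ preserves variables.

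The main obstacle is the encompassment bookkeeping in the first step: one has to see that minimality with respect to $\enc$ simultaneously pushes the redex to the root and pins down the left-hand side up to variant, and that the single renaming witnessing $s \doteq \ell$ can be reused to rewrite the right-hand side. The other ingredients—closure of $\RmB$-normality under renamings and the inclusion ${\RbR} \subseteq {\RmB}$—are routine, and together they force the entire conversion $s \Rab{\xR}{+} \cdot \bsim t$ to collapse to one root step followed by a $\bsim$-step.
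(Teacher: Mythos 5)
Your proposal is correct and follows essentially the same route as the paper: minimality of $s$ with respect to $\enc$ forces the first applied rule $\ell \to r$ to satisfy $s \doteq \ell$, the witnessing renaming transports right-$\xB$-reducedness of $r$ to $r\sigma \in \nf{\xR}$ so the remaining rewrite steps vanish and $t \bsim r\sigma$, giving the right-$\xB$-equivalent variant. The extra bookkeeping you add (ruling out non-root redexes separately, checking that $s \to t$ is a genuine rule) is harmless elaboration of what the paper compresses into $s \enceq \ell$ plus minimality.
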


\begin{proof}
Let $\ell \R r$ be the first rule which is applied in the sequence
$s \Rab{\xR}{+} \cdot \bsim^{} t$, so $s \enceq \ell$. Since $s$ is
a minimal reducible term with respect to $\enc$, we have
$s \doteq \ell$. By definition, there is a renaming $\sigma$ such
that $s = \ell\sigma$. Since $\xR$ is right-$\xB$-reduced, $r$ is a
normal form with respect to $\RmB$ and therefore also
$\RbR$. Closure of normal forms under renamings yields
$r\sigma \in \nf{\xR}$. Hence, $t \bsim r\sigma$ and we conclude
that $s \R t$ and $\ell \R r$ are right-$\xB$-equivalent variants.
\end{proof}
\noindent 
In contrast to its counterpart for standard rewriting in
\cite[Theorem~4.9]{HMSW19}, the following lemma needs termination modulo
$\xB$ as an
additional precondition. Note that it still treats a more general case
than the main result of this subsection (\thmref{canonical_unique}) as
the Church--Rosser modulo $\xB$ property is not required.

\begin{lem}
\label{lem:normeq_unique}
\textup{TRS}s which are
normalization equivalent modulo $\xB$, terminating modulo $\xB$,
left-reduced and right-$\xB$-reduced are unique up to
right-$\xB$-equivalent variants.
\end{lem}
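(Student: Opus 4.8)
The plan is to fix two TRSs $\xR_1$ and $\xR_2$ meeting all four hypotheses and to prove $\xR_1 \llrbsim \xR_2$; by symmetry it suffices to show that every rule of $\xR_1$ is a right-$\xB$-equivalent variant of a rule in $\xR_2$. The workhorse will be \lemref{uniqueaux}, which converts a rewrite sequence issuing from an $\enc$-minimal reducible term into a rule up to right-$\xB$-equivalence. Accordingly, the two preparatory facts I need are that the notions of \emph{reducibility} and of \emph{$\enc$-minimality} transfer between the two systems, and that right-hand sides of $\xR_1$ are already normal forms.

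First I would establish $\nf{\xR_1} = \nf{\xR_2}$. Suppose $s \in \nf{\xR_1} \setminus \nf{\xR_2}$. Termination modulo $\xB$ of $\xR_2$ makes the standard relation $\Rb{\xR_2}$ (being contained in rewriting modulo $\xB$) terminating, so $s \Rab{\xR_2}{+} t$ for some $t \in \nf{\xR_2}$. Reading normalization equivalence modulo $\xB$ as the relational identity ${\Rab{\xR_2}{!} \cdot \bsim} = {\Rab{\xR_1}{!} \cdot \bsim}$ and using $t \bsim t$, we get $s \Rab{\xR_1}{!} t' \bsim t$; but $s \in \nf{\xR_1}$ forces $t' = s$, hence $s \bsim t$. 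Then $t \bsim s \Rab{\xR_2}{+} t$ closes a cycle modulo $\xB$, contradicting termination modulo $\xB$ of $\xR_2$. By symmetry the normal forms agree, so the reducible terms — and therefore the $\enc$-minimal reducible terms — coincide for the two systems.

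Next, fix ${\ell \R r} \in \xR_1$. I would argue $\ell$ is $\enc$-minimal among reducible terms: left-reducedness makes every proper subterm of $\ell$ a normal form (a reduction strictly below the root would either apply another rule, contradicting $\ell \in \nf{\xR_1 \setminus \SET{\ell \R r}}$, or apply $\ell \R r$ itself, which is impossible since a term cannot strictly contain an instance of itself), and a parallel case distinction excludes reducible proper generalizations of $\ell$; hence no term strictly encompassed by $\ell$ is reducible. Right-$\xB$-reducedness of $\xR_1$ makes $r$ a normal form with respect to $\RmB$, hence with respect to $\RbR$, so $\ell \Rab{\xR_1}{!} r$. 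Applying normalization equivalence as above yields $\ell \Rab{\xR_1}{!} r$ matched by $\ell \Rab{\xR_2}{!} r' \bsim r$, and since $\ell$ is $\xR_2$-reducible this uses at least one step, i.e. $\ell \Rab{\xR_2}{+} \cdot \bsim r$. Now \lemref{uniqueaux}, applied to the right-$\xB$-reduced TRS $\xR_2$ and the $\enc$-minimal reducible term $\ell$, shows that $\ell \R r$ is a right-$\xB$-equivalent variant of a rule in $\xR_2$, which is exactly the claim.

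The step I expect to be the main obstacle is the minimality argument: justifying through the encompassment order that the left-hand side of a rule in a left-reduced system has only irreducible strict encompassments, covering both proper subterms and proper generalizations, since this is where the interaction of contexts and substitutions must be controlled. By contrast, the coincidence of normal forms and the manipulation of ${\Rab{}{!} \cdot \bsim}$ as an equality of relations should be routine, provided one keeps termination modulo $\xB$ at hand to rule out the degenerate cycles.
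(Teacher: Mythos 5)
Your proposal is correct and follows essentially the same route as the paper's proof: fix a rule $\ell \to r$, use right-$\xB$-reducedness and normalization equivalence to obtain $\ell \Rab{\xR_2}{+} \cdot \bsim r$, establish that $\ell$ is an $\enc$-minimal reducible term for $\xR_2$, and conclude with \lemref{uniqueaux}. The only (harmless) variation is in how minimality is transferred: you first prove the global fact $\nf{\xR_1} = \nf{\xR_2}$ so that reducible terms coincide, whereas the paper argues directly that a left-hand side $\ell'$ of $\xR_2$ with $\ell \enc \ell'$ would, again by normalization equivalence and termination modulo $\xB$, be $\xR_1$-reducible and thus contradict left-reducedness of $\xR_1$; both arguments are sound.
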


\begin{proof}
Let $\xR$ and $\xS$ be TRSs which are normalization equivalent
modulo $\xB$, terminating modulo $\xB$, left-reduced and
right-$\xB$-reduced. As the argument is symmetric, we only show that
every rule of $\xR$ has a right-$\xB$-equivalent variant in
$\xS$. Consider ${\ell \R r} \in \xR$. Note that $\ell$ and $r$ are
not $\xB$-equivalent since otherwise the cycle
$r \bsim \ell \RbR r$ contradicts the fact that $\xR$ is
terminating modulo $\xB$. Furthermore, $r \in \nf{\xR}$ as $\xR$ is
right-$\xB$-reduced. Hence, normalization equivalence modulo $\xB$
of $\xR$ and $\xS$ yields $\ell \Rab{\xS}{+} \cdot \bsim^{}
r$. Moreover, the left-reducedness of $\xR$ yields that $\ell$ is a
minimal $\xR$-reducible term with respect to $\enc$. Now suppose
there exists a rule ${\ell' \R r'} \in \xS$ such that
$\ell \enc \ell'$. Right-$\xB$-reducedness of $\xS$ yields
$r' \in \nf{\xS}$ and termination modulo $\xB$ of $\xS$ implies that
$\ell'$ and $r'$ are not $\xB$-equivalent. Together with
normalization equivalence modulo $\xB$ of $\xR$ and $\xS$ we obtain
$\ell' \Rab{\xR}{+} \cdot \bsim^{} r'$ which contradicts the fact
that $\ell$ is a minimal $\xR$-reducible term with respect to
$\enc$. Therefore, $\ell$ is a minimal $\xS$-reducible term with
respect to $\enc$ and from \lemref{uniqueaux} we obtain that
$\ell \R r$ is a right-$\xB$-equivalent variant of a rule in $\xS$
which concludes the proof.
\end{proof}
\noindent 
In the proof of the following theorem we now just need to establish
the preconditions of the previous lemma.

\begin{thm}
\label{thm:canonical_unique}
Let $\xR$ and $\xS$ be \textup{TRS}s which are equivalent modulo $\xB$ and
canonical modulo $\xB$. If $\xR$ and $\xS$ are compatible with the same
$\xB$-compatible reduction order then $\xR \llrbsim \xS$.
\end{thm}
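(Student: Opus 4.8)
The plan is to derive the conclusion from \lemref{normeq_unique}, which states that TRSs that are normalization equivalent modulo $\xB$, terminating modulo $\xB$, left-reduced and right-$\xB$-reduced are unique up to right-$\xB$-equivalent variants. Three of its four hypotheses come for free: since $\xR$ and $\xS$ are canonical modulo $\xB$, both are complete modulo $\xB$ (hence terminating modulo $\xB$ and Church--Rosser modulo $\xB$), left-reduced, and right-$\xB$-reduced. Thus the only real task is to upgrade the given conversion equivalence modulo $\xB$ to normalization equivalence modulo $\xB$, and this is exactly the step where compatibility with a common $\xB$-compatible reduction order $>$ (so $\xR \subseteq {>}$ and $\xS \subseteq {>}$) is indispensable; recall that conversion equivalence alone does not force normalization equivalence.

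The crucial intermediate claim is that the two systems have the same normal forms, $\nf{\xR} = \nf{\xS}$, or equivalently that a term is $\xR$-reducible iff it is $\xS$-reducible. By symmetry it suffices to show that an $\xR$-reducible term $t$ cannot lie in $\nf{\xS}$. So suppose $t \RbR t'$; since $\xR \subseteq {>}$ we have $t > t'$. From $t \bigLRab{\xR}{*} t'$ and conversion equivalence we obtain $t \bigLRab{\xS}{*} t'$, and the Church--Rosser modulo $\xB$ property of $\xS$ yields $t \Dab{\xS}{\sim} t'$. Were $t$ a normal form of $\xS$, this valley would take the shape $t \bsim s$ with $t' \Rab{\xS}{*} s$; since $\xS \subseteq {>}$ the reduction $t' \Rab{\xS}{*} s$ gives $t' = s$ or $t' > s$, so in either case $t > s$, while at the same time $t \bsim s$. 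Then $\xB$-compatibility of $>$ ($\bsim \cdot > \cdot \bsim \subseteq {>}$) forces $s > s$, contradicting well-foundedness. Hence $t$ is $\xS$-reducible, and the symmetric argument establishes $\nf{\xR} = \nf{\xS}$.

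With the normal forms identified, normalization equivalence modulo $\xB$ follows routinely. Take $a \Rab{\xR}{!} c$ with $c \bsim b$; then $c \in \nf{\xR} = \nf{\xS}$. Since $\xS$ is terminating modulo $\xB$, hence terminating, $a$ has some $\xS$-normal form $d$, and from $a \bigLRab{\xS}{*} c$ and $a \bigLRab{\xS}{*} d$ the Church--Rosser modulo $\xB$ property of $\xS$ gives $c \Dab{\xS}{\sim} d$; as both $c$ and $d$ are $\xS$-normal this means $c \bsim d$, whence $a \Rab{\xS}{!} d$ with $d \bsim c \bsim b$. This proves ${\Rab{\xR}{!} \cdot \bsim} \subseteq {\Rab{\xS}{!} \cdot \bsim}$, and the reverse inclusion is symmetric, so $\xR$ and $\xS$ are normalization equivalent modulo $\xB$. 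All hypotheses of \lemref{normeq_unique} are now met, and it delivers $\xR \llrbsim \xS$.

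The main obstacle is the normal-form coincidence of the second paragraph: it is the single place where the shared reduction order and its $\xB$-compatibility do genuine work, turning a purely conversion-theoretic hypothesis into the normalization statement that \lemref{normeq_unique} requires. Once that is in place, the remaining arguments are just an assembly of results already available in the excerpt.
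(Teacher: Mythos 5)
Your proof is correct and follows essentially the same route as the paper: both reduce the statement to \lemref{normeq_unique} by establishing normalization equivalence modulo $\xB$, and both use the shared $\xB$-compatible reduction order (via its irreflexivity and $\xB$-compatibility) to show that $\xR$- and $\xS$-normal forms coincide. The only difference is organizational --- you isolate $\nf{\xR} = \nf{\xS}$ as an explicit intermediate claim, whereas the paper argues directly on the specific normal form $t'$ --- which does not change the substance of the argument.
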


\begin{proof}
Let $\xR$ and $\xS$ be compatible with the $\xB$-compatible
reduction order $>$. We show that $\xR$ and $\xS$ are normalization
equivalent modulo $\xB$ which allows us to conclude the proof by
\lemref{normeq_unique}. As the argument is symmetric,
we only show
${\Rab{\xR}{!} \cdot \bsim^{}} \subseteq {\Rab{\xS}{!} \cdot \bsim^{}}$.
Consider $s \Rab{\xR}{!} t' \bsim^{} t$. Since $\xS$
is terminating, there exists a term $u$ such that
$t' \Rab{\xS}{!} u$. From the equivalence modulo $\xB$ of $\xR$ and
$\xS$ we obtain \smash{$t' \LRo{\xR \cUp \xB}{*} u$}. As $\xR$ is
Church--Rosser modulo $\xB$ and $t' \in \nf{\xR}$ we have
$u \Rab{\xR}{!} \cdot \bsim^{} t'$. If $t'$ and $u$ are not
$\xB$-equivalent then both $u > t'$ (as
$u \Rab{\xR}{!} \cdot \bsim^{} t'$) and $t' > u$ (as $t' \Rab{\xS}{!} u$),
which contradicts the irreflexivity of $>$.
If $t' \neq u$, we also obtain a contradiction to the irreflexivity
of $>$ from $u \bsim^{} t' \Rab{\xS}{!} u$ together with termination
modulo $\xB$ of $\xS$. Hence, $t' = u$ which means that
$t' \in \nf{\xS}$. Equivalence of $\xR$ and $\xS$ modulo $\xB$
yields \smash{$s \LRo{\xS \cUp \xB}{*} t'$} from which we obtain
$s \Rab{\xS}{!} \cdot \bsim^{} t'$ by $\xB$-completeness of $\xS$
and $t' \in \nf{\xS}$. Finally, $t \bsim t'$ establishes
$s \Rab{\xS}{!} \cdot \bsim^{} t$ as desired.
\end{proof}
\noindent 
The previous result cannot be strengthened to literal similarity as
the following counterexample shows.

\begin{exa}
Consider the ES $\xE$ consisting of the single equation
\[
\m{f}(x + y) \cE \m{f}(x) + \m{f}(y)
\]
where $+$ is an AC function symbol. There are two $\xB$-complete
presentations of $\xE$ consisting of one rule each:
\begin{align*}
\xR &\,=\, \SET{\m{f}(x + y) \R \m{f}(x) + \m{f}(y)} &
\xR' &\,=\, \SET{\m{f}(x + y) \R \m{f}(y) + \m{f}(x)}
\end{align*}
While the rules of the two TRSs are right-$\xB$-equivalent variants,
they are not variants. Note that both systems are compatible with
the same $\xB$-compatible reduction order.
\end{exa}

Note that $\xR$ and $\xR'$ in the previous example can also be
obtained with the inference system \ib. This shows that while the
relation $\llrbsim$ is motivated by the definition of the inference
system \ia, the notion of right-$\xB$-equivalent variants
naturally arises in completion modulo $\xB$ for left-linear TRSs.

\section{AC Completion}
\label{sec:ac-completion}

So far, the theoretical results have been generalized by using
an arbitrary equational theory $\xB$. In practice, however, this
article is concerned with the particular theory AC. The results of
this section allow us to assess the effectiveness of the inference system
\ia in the setting of AC completion.

\subsection{Limitations of Left-Linear AC Completion}
\label{sec:limitations}

In addition to the restriction to left-linear rewrite rules, the
following example demonstrates another severe limitation of the
inference system \ia previously unmentioned in the literature.

\begin{exa}
\label{exa:ia_diverge}
Consider the ES $\xE$ consisting of the equations
\begin{align*}
\andop(\m{0},\m{0}) &\cE \m{0} &
\andop(\m{1},\m{1}) &\cE \m{1} &
\andop(\m{0},\m{1}) &\cE \m{0}
\end{align*}
where $\andop$ is an AC function symbol. There is only one way to
orient each equation. Furthermore, there are no critical pairs
between the resulting rewrite rules. Hence, using the inference
system \ia we arrive at the intermediate TRS
\begin{align*}
\andop(\m{0},\m{0}) &\cR \m{0} &
\andop(\m{1},\m{1}) &\cR \m{1} &
\andop(\m{0},\m{1}) &\cR \m{0}
\end{align*}
where the only possible next step is to deduce local cliffs. We will
now show that this has to be done infinitely many times. Note that
an AC-complete presentation $\xR$ of $\xE$ has to be able to rewrite
any term that is AC-equivalent to a reducible term. Consider the
infinite family of terms
\begin{align*}
s_0 &\,=\, \andop(\m{0},\m{1}) &
s_1 &\,=\, \andop(\andop(\m{0},x_1),\m{1}) &
s_2 &\,=\, \andop(\andop(\andop(\m{0},x_1),x_2),\m{1}) & \cdots \\
\intertext{as well as}
t_0 &\,=\, \m{0} &
t_1 &\,=\, \andop(\m{0},x_1) &
t_2 &\,=\, \andop(\andop(\m{0},x_1),x_2) & \cdots
\end{align*}
Clearly, $s_n \LRab{\xE \cUp \AC}{*} t_n$ for all $n \in \mathbb{N}$
and therefore also $s_n \Dab{\xR}{\sim} t_n$ for all
$n \in \mathbb{N}$, but this demands infinitely many rules in $\xR$:
For each $s_n$ there is an AC-equivalent term such that the
constants $\m{0}$ and $\m{1}$ are next to each other which allows
us to rewrite it using the rule $\andop(\m{0},\m{1}) \R \m{0}$. However,
with $n$ also the number of variables between
these constants increases which requires $\xR$ to have infinitely
many rules since rewrite rules can only be applied before the
representation modulo AC is changed.
\end{exa}

Note that there is nothing special about this example except the fact
that it contains at least one equation which can only be oriented such
that the left-hand side contains an AC function symbol where both
arguments have ``structure'', i.e., both arguments contain a
function symbol which is different from the original AC function
symbol. As a consequence, the necessity of infinitely many rules
applies
to all equational systems which have this property. Needless to say,
this means that for a large class of equational systems the
corresponding AC-canonical presentation (in the left-linear sense) is
infinite if it exists. This observation is in stark contrast to the
properties of general AC completion as presented in the next section
which can complete the ES $\xE$ from \exaref{ia_diverge} into a finite
AC-canonical TRS by simply orienting all rules from left to
right. The following example shows that given a nonempty context,
the same effect can also be seen even when only one argument
contains a different function symbol.

\begin{exa}
\label{exa:ia_diverge2}
Consider the ES $\xE$ consisting of the equation
$\m{s}(\m{p}(x) + y) = x + y$ where $+$ is an AC function symbol.
Note that orienting the equation from left to right is the only
way to convert $\xE$ into a terminating TRS $\xR$. The rule has
no critical pairs with itself, but as in the previous example,
adding critical pairs with $\AC$ does not terminate. This can
be seen by considering the infinite family of terms
\begin{align*}
s_0 &\,=\, \m{s}(\m{p}(x) + y_1) &
s_1 &\,=\, \m{s}((\m{p}(x) + y_1) + y_2) &
s_2 &\,=\, \m{s}(((\m{p}(x) + y_1) + y_2) + y_3) & \cdots \\
\intertext{as well as}
t_0 &\,=\, x + y_1 &
t_1 &\,=\, (x + y_1) + y_2 &
t_2 &\,=\, ((x + y_1) + y_2) + y_3 & \cdots
\end{align*}
Again, $s_n \LRab{\xE \cUp \AC}{*} t_n$ for all $n \in \mathbb{N}$.
For a complete presentation $\xR$ of $\xE \cup \AC$ we have
$s_n \Dab{\xR}{\sim} t_n$ for all $n \in \mathbb{N}$, but this
demands infinitely many rules in $\xR$ as before.
\end{exa}

\subsection{General AC Completion}
\label{sec:gaccomp}

Inference systems for completion modulo an equational theory which are
not restricted to the left-linear case usually need more inference
rules than the ones already covered in this article. For general AC
completion, however, there exists a particularly simple inference
system which constitutes a special case of normalized completion
\cite{M96} and can be found in Sarah Winkler's PhD thesis
\cite[p.~109]{W13}.

\begin{defi}
The inference system \kbac is the same as \ia for the fixed
theory AC but with a modified collapse rule which allows us to
rewrite with $\RmAC$ and the following rule which replaces
the two deduction rules of \ia:
\[
\ded \quad \ds \frac{\xE,\xR}{\xE \cup \SET{s \approx t},\xR} \quad
\text{if $s \LbR \cdot \acsim \cdot \RbR t$}
\]
\end{defi}

The purpose of this section is to show how \ia can be simulated by
\kbac in the case of $\xB = \AC$. In addition to its theoretical
significance, this simulation result is also of practical importance
as it facilitates switching from \ia to \kbac in the middle of a
completion process instead of starting from scratch. We have more
to say about this at the end of this section.

Since local cliffs cannot be deduced
in \kbac, the simulation has to work with a potentially smaller set of
rewrite rules. Furthermore, during a run, the variants of rules
stemming from local cliffs may be in different states with respect to
inter-reduction (\col and \com). Given an intermediate TRS $\xR$ of a
run in \ia as well as an intermediate TRS $\xR'$ of a run in
\kbac, the invariant $\xR \subseteq {\Rab{\xR'/\AC}{+}}$ resolves both
of the aforementioned problems. The main motivation behind this
invariant is the avoidance of \com and \col in the \kbac run.

\begin{lem}
\label{lem:gacsim}
If $(\xE_1,\xR_1) \iA (\xE_2, \xR_2)$ and
\smash{$\xR_1 \subseteq {\Rab{\xR'_1/\AC}{+}}$} then there exists a
\textup{TRS}
$\xR_2'$ such that \smash{$(\xE_1,\xR'_1) \seqa{\kbac}{*} (\xE_2,\xR'_2)$}
and $\xR_2 \subseteq {\Rab{\xR'_2/\AC}{+}}$.
\end{lem}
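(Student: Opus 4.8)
The plan is to do a case analysis on the inference rule applied in $(\xE_1,\xR_1) \iA (\xE_2,\xR_2)$, exhibiting in each case a witness $\xR'_2$ together with a sequence realizing $(\xE_1,\xR'_1) \seqa{\kbac}{*} (\xE_2,\xR'_2)$ and re-establishing the invariant $\xR_2 \subseteq {\Rab{\xR'_2/\AC}{+}}$. The fact I would isolate at the outset is that the hypothesis lifts from rules to steps: since $\Rab{\xR'_1/\AC}{}$ is closed under contexts and substitutions, $\xR_1 \subseteq {\Rab{\xR'_1/\AC}{+}}$ yields ${\Rb{\xR_1}} \subseteq {\Rab{\xR'_1/\AC}{+}}$ and, after absorbing surrounding $\acsim$-steps, ${\Rab{\xR_1/\AC}{}} \subseteq {\Rab{\xR'_1/\AC}{+}}$. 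This is precisely what lets me rephrase every \ia side condition mentioning $\xR_1$ (or $\RmAC$) as a conversion over $\xR'_1$ modulo AC.

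The routine cases I would dispatch first. For \del I apply \del in \kbac (its guard $s \bsim t$ is untouched) and keep $\xR'_2 = \xR'_1$; for \ori I apply \ori in \kbac and set $\xR'_2 = \xR'_1 \cup \SET{s \to t}$, so the new rule is a single $\Rb{\xR'_2}$-step, hence in $\Rab{\xR'_2/\AC}{+}$, while the old rules persist by monotonicity. For \ded on a local cliff $s \Lb{\xR_1} \cdot \LRb{\AC} t$ and for \com I take the \emph{empty} \kbac sequence and $\xR'_2 = \xR'_1$: neither touches $\xE$, and the rule \ia adds or rewrites is already absorbed by the invariant, because the cliff rule $t \to s$ satisfies $t \Rab{\xR_1/\AC}{} s \subseteq {\Rab{\xR'_1/\AC}{+}}$, while the \com-modified rule $s \to u$ is obtained by composing the old realization $s \Rab{\xR'_1/\AC}{+} t$ with the lifted guard $t \Rab{\xR'_1/\AC}{+} u$. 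For \sip I replace the single \ia simplification $s \Rab{\xR_1/\AC}{} u$ by the chain of \kbac \sip-steps tracing $s \Rab{\xR'_1/\AC}{+} u$; this lands on $u$ exactly because each $\Rab{\xR'_1/\AC}{}$-step already carries its trailing $\acsim$, so the final \sip hits the named target.

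Two cases do real work: \ded on a local peak $s \Lb{\xR_1} w \Rb{\xR_1} t$ and \col, which—reading the collapsed rule $s \to t$ as a rewrite step—is the local peak $u \Lb{\xR_1} s \Rb{\xR_1} t$ producing the equation $u \approx t$ while merely shrinking the rule set (so $\xR'_2 = \xR'_1$ keeps the invariant). Here I must manufacture an equation in \kbac. Lifting the invariant gives $w \Rab{\xR'_1/\AC}{+} s$ and $w \Rab{\xR'_1/\AC}{+} t$; I split off the first $\Rab{\xR'_1/\AC}{}$-step of each, say $w \acsim \alpha \Rb{\xR'_1} \beta$ and $w \acsim \alpha' \Rb{\xR'_1} \beta'$, observe $\alpha \acsim \alpha'$, and apply \kbac's \ded to the local peak modulo AC $\beta \Lb{\xR'_1} \alpha \acsim \alpha' \Rb{\xR'_1} \beta'$, adding $\beta \approx \beta'$. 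I then drive $\beta$ down to $s$ and $\beta'$ down to $t$ by \sip-chains following the remaining path segments.

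The hard part will be matching the endpoints $s$ and $t$ on the nose rather than merely up to AC: \kbac's \ded can only place genuine $\Rb{\xR'_1}$-reducts at the ends, whereas $s$ and $t$ are endpoints of $\Rab{\xR'_1/\AC}{+}$-paths and may be only $\acsim$-equivalent to such reducts when a realization carries a nontrivial trailing $\acsim$ (as can arise after a \com has twisted a right-hand side modulo AC). As in the \sip case, each $\Rab{\xR'_1/\AC}{}$-step absorbs its trailing $\acsim$, so every non-initial segment is a bona fide \sip-step whose target is the named path element, and the chains terminate exactly at $s$ and $t$. The delicate point, where I expect the bulk of the technical effort, is the degenerate configuration in which the whole path from $w$ to $s$ (or to $t$) is the single step already consumed by \ded, leaving no \sip-step to reconcile a residual AC-twist; I would treat that minimal situation separately, arguing that the deduced endpoint is in fact the intended reduct so that the equation produced is literally $s \approx t$.
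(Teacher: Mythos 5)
Your proposal follows essentially the same route as the paper's proof: a case analysis in which \ded on a local cliff and \com are absorbed by the invariant via an empty \kbac sequence, \sip is replaced by a chain of \kbac \sip steps along the lifted $\Rab{\xR'_1/\AC}{+}$-path, and \ded on a local peak and \col are handled by one \kbac \ded applied to the first $\Rb{\xR'_1}$-reducts followed by \sip chains down to the named endpoints. The ``delicate point'' you flag---a residual AC-twist left by a one-step realization with nothing for a trailing \sip step to absorb---is a genuine subtlety, but the paper's proof silently elides it at exactly the same spot, so your argument matches the paper's in both structure and level of rigor.
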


\begin{proof}
Let $>$ be a fixed AC-compatible reduction order which is used in
both \ia and \kbac. Suppose $(\xE_1,\xR_1) \iA (\xE_2,\xR_2)$
and \smash{$\xR_1 \subseteq {\Rab{\xR'_1/\AC}{+}}$}. We perform a case
analysis on the inference rule applied in 
$(\xE_1,\xR_1) \iA (\xE_2,\xR_2)$. The only interesting cases are
when \ded, \sip, \com or \col are applied.
\begin{itemize}[label=$\triangleright$]
\item
If \ded is applied, we further distinguish whether it was
applied to a local peak or cliff. In the case of a local cliff, we
have $\xE_1 = \xE_2$ and $\xR_2 = \xR_1 \cup \SET{\ell \to r}$ with
$\ell \Rb{\xR_1/\AC} r$. From $\ell \Rb{\xR_1/\AC} r$ and
$\xR_1 \subseteq {\Rab{\xR'_1/\AC}{+}}$ we obtain
$\ell \Rab{\xR'_1/\AC}{+} r$. Thus,
$\xR_2 \subseteq {\Rab{\xR'_1/\AC}{+}}$ holds. As
\smash{$(\xE_1,\xR'_1) \seqa{\kbac}{0} (\xE_2,\xR'_1)$} is trivial, the
claim follows. In the case of a local peak, we have
$\xR_1 = \xR_2$ and $\xE_2 = \xE_1 \cup \SET{t \approx u}$ with
$t \Lb{\xR_1} s \Rb{\xR_1} u$. Since
$\xR_1 \subseteq {\Rab{\xR'_1/\AC}{+}}$ holds, we have
\[
t \,\Lab{\xR_1'/\AC}{*}\, v\, \Lb{\xR_1'} \cdot \acsim^{}\, s
\,\acsim^{} \cdot \Rb{\xR_1'}\, w \,\Rab{\xR_1'/\AC}{*} u
\]
for some $v$ and $w$. By performing \ded and \sip steps
\[
(\xE_1,\xR'_1) \,\seq{\kbac}\, (\xE_1 \cup \SET{v \approx w},\xR'_1)
\,\seqa{\kbac}{*}\, (\xE_1 \cup \SET{t \approx u},\xR'_1) \,=\,
(\xE_2,\xR'_1)
\]
is obtained. As $\xR_1 = \xR_2$, the inclusion
$\xR_2 \subseteq {\Rab{\xR'_1/\AC}{+}}$ is trivial. Hence, the
claim holds. \smallskip
\item
If \sip is applied, we have $\xR_1 = \xR_2$,
$\xE_1 = \xE_0 \cup \SET{s \approx t}$ and
$\xE_2 = \xE_0 \cup \SET{s' \approx t'}$ with $s \Rab{\xR_1}{=} s'$
and $t \Rab{\xR_1}{=} t'$. By
$\xR_1 \subseteq {\Rab{\xR'_1/\AC}{+}}$ we have
$s \Rab{\xR'_1/\AC}{*} s'$ and $t \Rab{\xR'_1/\AC}{*} t'$.
Therefore, performing \sip, we obtain
\smash{$(\xE_1,\xR'_1) \seqa{\kbac}{*} (\xE_2,\xR'_1)$}. As
$\xR_1 = \xR_2$, the inclusion
$\xR_2 \subseteq {\Rab{\xR'_1/\AC}{+}}$ is trivial. \smallskip
\item
If \com is applied, we have $\xE_1 = \xE_2$,
$\xR_1 = \xR_0 \cup \SET{\ell^{\vphantom{1}} \to r}$ and
$\xR_2 = \xR_0 \cup \SET{\ell \to r'}$ with $r \Rb{\xR_0/\AC} r'$.
We have $(\xE_1,\xR'_1) \seqa{\kbac}{0} (\xE_2,\xR'_1)$. Since
the inclusions
\smash{$\xR_0 \subseteq \xR_1 \subseteq {\Rab{\xR'_1/\AC}{+}}$} yield
\smash{$\ell \Rab{\xR'_1/\AC}{+} r \Rab{\xR'_1/\AC}{+} r'$}, we obtain
$\xR_2 \subseteq {\Rab{\xR'_1/\AC}{+}}$. \smallskip
\item
If \col is applied, we have
$\xE_2 = \xE_1 \cup \SET{\ell' \approx r}$ and
$\xR_1 = \xR_2 \uplus \SET{\ell \to r}$ with
$\ell \Rb{\xR_2} \ell'$. By
$\xR_2 \subseteq \xR_1 \subseteq {\Rab{\xR'_1/\AC}{+}}$ we have
\[
\ell' \,\Lab{\xR'_1/\AC}{*}\, t \,\Lab{\xR'_1}{} \cdot \acsim^{}\, \ell
\,\acsim^{} \cdot \Rab{\xR'_1}{}\, u \,\Rab{\xR'_1/\AC}{*}\, r
\]
for some $t$ and $u$. Performing \ded and \sip, we obtain:
\[
(\xE_1,\xR'_1) \,\seq{\kbac}\, (\xE_1 \cup \SET{t \approx u},\xR'_1)
\,\seqa{\kbac}{*}\, (\xE_1 \cup \SET{\ell' \approx r},\xR'_1) \,=\,
(\xE_2,\xR'_1)
\]
By $\xR_2 \subseteq \xR_1 \subseteq {\Rab{\xR'_1/\AC}{+}}$ the
claim is concluded. \qedhere
\end{itemize}
\end{proof}

\begin{thm}
\label{thm:gacsim}
For every fair run $(\xE,\varnothing) \iA^* (\varnothing,\xR)$ there
exists a run $(\xE,\varnothing) \seqa{\kbac}{*} (\varnothing,\xR')$ such
that $\xR'/\AC$ is an \textup{AC}-complete presentation of $\xE$.
\end{thm}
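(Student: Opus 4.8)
The plan is to lift the single-step simulation of \lemref{gacsim} to entire runs and then verify the three ingredients of AC-completeness for the resulting \kbac system. Starting from the given fair run $\xE_0,\xR_0 \iA \cdots \iA \xE_n,\xR_n$ with $\xE_0 = \xE$, $\xR_0 = \varnothing$, $\xE_n = \varnothing$ and $\xR_n = \xR$ (which is non-failing and hence, by \thmref{correctness}, yields an AC-complete presentation $\xR$ of $\xE$), I would set $\xR'_0 = \varnothing$ and note that the invariant $\xR_0 \subseteq {\Rab{\xR'_0/\AC}{+}}$ holds trivially. Applying \lemref{gacsim} $n$ times produces TRSs $\xR'_1, \dots, \xR'_n$ together with \kbac subruns $(\xE_{i-1},\xR'_{i-1}) \seqa{\kbac}{*} (\xE_i,\xR'_i)$ such that $\xR_i \subseteq {\Rab{\xR'_i/\AC}{+}}$ for every $i$. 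Since \lemref{gacsim} reproduces the same $\xE$-component, concatenating the subruns gives a single run $(\xE,\varnothing) \seqa{\kbac}{*} (\varnothing,\xR')$ with $\xR' := \xR'_n$ and the final invariant $\xR \subseteq {\Rab{\xR'/\AC}{+}}$.

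It then remains to show that $\xR'/\AC$ is terminating modulo AC, a presentation of $\xE$, and Church--Rosser modulo AC. For termination I would argue as in \lemref{aclltermination}: every \kbac step either removes a rule or introduces a $>$-decreasing one (\ori requires $s > t$, and \com replaces $t$ by $u$ with $t \RmAC u$, hence $t > u$ by AC-compatibility), so $\xR' \subseteq {>}$ and $\xR'/\AC$ is terminating modulo AC, there being no difference between termination modulo AC of $\xR'$ and of $\xR'/\AC$. For the representation property I would invoke soundness of \kbac, namely that each inference step preserves $\LRab{\xE \cUp \xR \cUp \AC}{*}$; this is established exactly as in \lemref{singleacllstep} and \corref{acllrepr}, the only rules differing from \ia --- the \kbac variants of \ded and \col --- still connecting the pairs they introduce by conversions in $\xR \cUp \AC$. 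Hence $\LRab{\xE \cUp \AC}{*} = \LRab{\xR' \cUp \AC}{*}$.

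For Church--Rosser modulo AC I would exploit the completeness of $\xR$ rather than redo an AC critical-pair analysis. Closing the invariant under contexts and substitutions gives ${\RbR} \subseteq {\Rab{\xR'/\AC}{+}}$, whence ${\Rab{\xR}{*}} \subseteq {\Rab{\xR'/\AC}{*}}$. Given any conversion $s \bigLRab{\xR'/\AC}{*} t$, the chain $\bigLRab{\xR'/\AC}{*} = \LRab{\xR' \cUp \AC}{*} = \LRab{\xE \cUp \AC}{*} = \LRab{\xR \cUp \AC}{*}$ (using the representation property of both $\xR'$ and $\xR$) lets me apply the Church--Rosser modulo AC property of $\xR$ to obtain $s \Rab{\xR}{*} v \acsim w \Lab{\xR}{*} t$ for some $v,w$; transporting the two rewrite sequences along ${\Rab{\xR}{*}} \subseteq {\Rab{\xR'/\AC}{*}}$ yields the valley $s \Rab{\xR'/\AC}{*} v \acsim w \Lab{\xR'/\AC}{*} t$, which is precisely a valley modulo AC for the ARS $\xR'/\AC$. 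Combining the three properties, $\xR'/\AC$ is an AC-complete presentation of $\xE$.

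The point that most needs care is not the Church--Rosser argument --- which goes through cleanly once one resists the temptation to reprove a critical-pair criterion and instead transfers joinability from $\xR$ --- but the bookkeeping around the two senses of ``modulo''. One must keep in mind that ``AC-complete presentation of $\xE$'' refers to the ARS $\xR'/\AC$, so the relevant valley uses full $\Rb{\xR'/\AC}$-steps (this is exactly why ${\Rab{\xR}{*}} \subseteq {\Rab{\xR'/\AC}{*}}$ suffices and why one should not try to prove the strictly stronger statement that $\xR'$ is Church--Rosser modulo AC with respect to $\RbR$, cf.\ \lemref{crconnection} and \exaref{cr-a-vs-asim}). Moreover, the invariant only supplies the inclusion ${\RbR} \subseteq {\Rab{\xR'/\AC}{+}}$ in one direction, so the converse inclusion of equational theories genuinely relies on soundness of \kbac rather than on the invariant.
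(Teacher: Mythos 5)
Your proposal is correct and follows essentially the same route as the paper's proof: induction on \lemref{gacsim} to obtain the run together with the invariant $\xR \subseteq {\Rab{\xR'/\AC}{+}}$, the representation and termination properties as direct consequences of the \kbac inference rules, and the Church--Rosser modulo AC property of $\xR'/\AC$ obtained by transferring the valley supplied by the AC-completeness of $\xR$ (from \thmref{correctness}) along the invariant. Your closing remarks on the distinction between $\xR'$ and $\xR'/\AC$ being Church--Rosser modulo AC are accurate but not needed beyond what the paper already does.
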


\begin{proof}
With a straightforward induction argument, we obtain the run
\smash{$(\xE,\varnothing) \seqa{\kbac}{*} (\varnothing,\xR')$} as well as
\smash{${\xR} \subseteq {\Rab{\xR'/\AC}{+}}$} ($\ast$) from
\lemref{gacsim}.
Furthermore, AC termination of $\xR'$ and
${\LRab{\xE \cUp \AC}{*}} = {\LRab{\xR' \cUp \AC}{*}}$ ($\ast\ast$)
are easy consequences of the definition of \kbac. AC-completeness
of $\xR$ follows from fairness of the run in \ia and
\thmref{correctness}. For the Church--Rosser modulo AC property of
$\xR'/\AC$, consider a conversion $s \LRab{\xR' \cUp \AC}{*} t$.
From ($\ast\ast$) we obtain $s \LRab{\xE \cUp \AC}{*}$ and
therefore $s \Rab{\xR}{*} \cdot \acsim^{} \cdot \Lab{\xR}{*} t$ by
the fact that $\xR$ is an AC-complete presentation of
$\xE$. Finally, ($\ast$) yields
$s \Rab{\xR'/\AC}{*} \cdot \sim^{}_{\AC\hphantom{/}} \cdot
\Lab{\xR'/\AC}{*} t$ as
desired. Thus, $\xR'/\AC$ is an AC-complete presentation of $\xE$.
\end{proof}

In addition to the result of the previous theorem, the proof of
\lemref{gacsim} provides a procedure to construct a \kbac run which
``corresponds'' to a given \ia run. In particular, this means that
it is possible to switch from \ia to \kbac at any point while
performing AC completion. This is of practical relevance: Assume that
AC completion is started with \ia in order to avoid AC
unification. If \ia gets stuck due to simplified equations which
are not orientable into a left-linear rule or it seems to be the case
that the procedure diverges due to the problem described in
\exaref{ia_diverge}, starting from scratch with \kbac is not
necessary. We conclude the section by illustrating the practical
relevance of the simulation result with an example.

\begin{exa}
\label{exa:agroups}
Consider the ES $\xE$ for abelian groups consisting of the equations
\begin{align*}
\m{e} \cdot x &\cE x & x^{-} \cdot x &\cE \m{e}
\end{align*}
where $\cdot$ is an AC symbol. Note that the well-known completion
run for non-abelian group theory is also a run in \ia: Critical
pairs with respect to the associativity axiom are deducible via
local cliffs, non-left-linear intermediate rules are allowed and all
(intermediate) rules are orientable with e.g.\ AC-KBO. Hence, we obtain
the TRS $\xR'$ consisting of the rules
\begin{align*}
\m{e} \cdot x &\nR{1} x &
x^{-} \cdot (x \cdot y) &\nR{4} y &
x \cdot x^{-} &\nR{7} \m{e} \\
x^{-} \cdot x &\snR{2} \m{e} &
(x \cdot y)^{-} &\snR{5} y^{-} \cdot x^{-} &
\m{e}^{-} &\snR{8} \m{e} \\
x^{--} &\snR{3} x &
x \cdot \m{e} &\snR{6} x &
x \cdot (x^{-} \cdot y) &\snR{9} y
\end{align*}
and switch to \kbac where we can collapse the redundant rules
$4$, $6$, $7$ and $9$. A final joinability check of all AC critical
pairs reveals that the resulting TRS $\xR$ is an AC-complete presentation
of abelian groups. Hence, the simulation result allows us to make progress
with \ia even when it is doomed to fail. In particular, critical
pairs between rules whose left-hand sides do not contain AC symbols do
not need to be recomputed.
\end{exa}

\section{Implementation}
\label{sec:implementation}

The command-line tool \accompll implements AC completion for
left-linear TRSs based on the inference system \ia~(\defref{ia}). It
uses external termination tools instead of a fixed AC-compatible
reduction order and is written in the programming language Haskell.
The source code of the tool \accompll is available on
GitHub\footnote{\url{https://github.com/niedjoh/accompll}}. As input,
the tool expects a file in the
WST\footnote{\url{https://www.lri.fr/~marche/tpdb/format.html}} format
describing the equational theory on which left-linear AC completion
should be performed. The user can choose whether $\RbR$,
$\RcAC$ or $\RmAC$ is used for rewriting in the inference
rules \sip and \com. Furthermore, the generation of critical pairs can
be restricted to the primality criterion. Note that most of
these options are facilitated by the theoretical results in the previous
sections. A discussion of $\RcAC$ will follow in \secref{impldet}.

Another feature is the validity problem solving mode which solves a
given instance of the validity problem for an equational theory $\xE$
upon successful completion of $\xE$. This mode can be triggered by
supplying a concrete equation $s \approx t$ as a command line argument
in addition to the file describing $\xE$.

In the tool \accompll, external termination tools do much of the heavy
lifting. In particular, the user can supply the executable of an
arbitrary termination tool as long as the output starts with
\texttt{YES}, \texttt{MAYBE}, \texttt{NO} or \texttt{TIMEOUT} (all
other cases are treated as an error). The input format for the
termination tool can be set by a command line argument. The available
options are the WST format as well as the XML format of the Nagoya
Termination Tool \cite{YKS14}.\footnote{%
\url{https://www.trs.cm.is.nagoya-u.ac.jp/NaTT/natt-xml.html}}

Since starting a new process for every call of the termination tool
causes a lot of operating system overhead, the tool supports an
interactive mode which allows it to communicate with a single process
of the termination tool in a dialogue style. Here, the only constraint
for the termination tool is that it accepts a sequence of termination
problems separated by the keyword \texttt{(RUN)}. This is currently
only implemented in an experimental version of Tyrolean Termination
Tool 2 (\TTTT) \cite{KSZM09}, but we hope that more termination tools
will follow as this approach has a positive effect on the runtime of
completion with termination tools while demanding comparatively little
implementation effort. The remainder of this section discusses
important aspects and properties of the implementation.

\subsection{Termination Tools}
\label{sec:tt}

The reduction order is a critical input parameter of a completion
procedure. Finding an appropriate order can be very challenging as the
equations which are generated during a completion run are usually not
known in advance. Hence, the nature of this problem is different from
the standard termination problem where the input is one specific TRS.

While it is well-known that termination is an undecidable property of
TRSs, a number of termination tools have been developed which can
solve the termination problem automatically in many practical cases.
The usage of termination tools in completion has been pioneered by
Wehrman et al.~\cite{WSW06}. Intuitively, the difficulty of finding an
appropriate reduction order is replaced by a sequence of calls to a
termination tool. This allows us to define a completion procedure
which does not depend on a reduction order as input. An important
ingredient for completion with termination tools is a
\emph{constraint system} which keeps track of all rules which have been
produced during the run. Checking termination for the constraint system as
opposed to the current TRS guarantees that there exists a single
reduction order which can be used for the whole run. This is important
as correctness is lost when the reduction order is changed during a
completion run \cite{S94}. The next definition extends the inference
system \ia by a constraint system.

\begin{defi}
\label{def:iatt}
The inference system \iatt transforms triples consisting of an ES
$\xE$ and TRSs $\xR$, $\xC$ over the common signature $\xF$. Except
for \ori, the inference rules are trivial extensions of the rules in
\iatt where the constraint system $\xC$ is not changed.
\[
\begin{array}{@{}r@{\quad}c@{\quad}l@{}}
& \ds \frac{\xE \uplus \SET{s \approx t}, \xR, \xC}
{\xE, \xR \cup \SET{s \R t}, \xC \cup \SET{s \R t}}
& \text{if $\xC \cup \SET{s \R t}$ is $\xB$-terminating} \\
\ori & & \\
& \ds \frac{\xE \uplus \SET{s \approx t}, \xR, \xC}
{\xE, \xR \cup \SET{t \R s}, \xC \cup \SET{t \R s}}
& \text{if $\xC \cup \SET{t \R s}$ is $\xB$-terminating}
\end{array}
\]
\end{defi}

Note that $\xR$ is subject to insertion
as well as removal of rules while the constraint
system $\xC$ just acts as an accumulator of all orientations for the
termination check.
As usual, \smash{$(\xE,\xR,\xC) \seq{\iatt} (\xE',\xR',\xC')$} denotes a
step in the inference system \iatt and a sequence of steps starting
with $(\xE,\varnothing,\varnothing)$ constitutes a run for $\xE$. We
will now prove that \iatt is sound and complete with respect to \ia.

\begin{lem}
\label{lem:ttsoundness}
For every run
\[
\xE_0,\xR_0,\xC_0 \,\seq{\iatt}\, \xE_1,\xR_1,\xC_1
\,\seq{\iatt}\, \cdots \,\seq{\iatt}\, \xE_n,\xR_n,\xC_n
\]
there exists an equivalent run
\[
\xE_0,\xR_0 \,\seq{\ia}\, \xE_1,\xR_1 \,\seq{\ia}\, \cdots \,\seq{\ia}\,
\xE_n,\xR_n
\]
which uses the reduction order $\Rab{\xC_n/\xB}{+}$.
\end{lem}

\begin{proof}
First of all, note that \smash{$\Rab{\xC_n/\xB}{+}$} is a $\xB$-compatible
reduction order as it is transitive, closed under contexts and
substitutions and the $\xB$-termination of $\xC_n$ is established by
induction on the length of the run in \iatt. Since the constraint
system is only altered and used in the \ori rule, all other steps
are valid steps in \ia by just removing the constraint system.
Since $\xC_n$ includes all rules which are oriented during the run,
the corresponding \ori steps in \ia also succeed.
\end{proof}

\begin{lem}
\label{lem:ttcompleteness}
For every run
\[
\xE_0,\xR_0 \,\seq{\ia}\, \xE_1,\xR_1 \,\seq{\ia}\, \cdots \,\seq{\ia}\,
\xE_n,\xR_n
\]
using the $\xB$-compatible reduction order $>$ there exists an equivalent
run
\[
\xE_0,\xR_0,\xC_0 \,\seq{\iatt}\, \xE_1,\xR_1,\xC_1 \,\seq{\iatt}\,
\cdots \,\seq{\iatt}\, \xE_n,\xR_n,\xC_n
\]
where ${\Rab{\xC_n/\xB}{+}} \subseteq {>}$.
\end{lem}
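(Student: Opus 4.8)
We need to show that every \ia run can be mirrored by an \iatt run whose accumulated constraint system induces a reduction order contained in the original order $>$. Let me sketch the plan.

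The plan is to proceed by induction on the length $n$ of the given \ia run, constructing the \iatt run step by step while maintaining the invariant that $\xC_i$ consists exactly of all the rules which have been oriented by \ori up to step $i$ (in both directions actually, but recorded as the oriented rule). First I would treat the base case $n = 0$ trivially with $\xC_0 = \varnothing$, noting that $\Rab{\varnothing/\xB}{+}$ is empty and hence trivially contained in $>$. For the inductive step, I would do a case analysis on the inference rule applied in the step $(\xE_{n-1},\xR_{n-1}) \iA (\xE_n,\xR_n)$. Every rule except \ori leaves the set of orientations untouched, so we set $\xC_n = \xC_{n-1}$ and replay the identical step in \iatt; the invariant and the inclusion are preserved immediately by the induction hypothesis. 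The only genuine case is \ori, where an equation $s \approx t$ with $s > t$ is oriented into $s \R t$. Here we set $\xC_n = \xC_{n-1} \cup \SET{s \R t}$ and must check that the side condition of \iatt is met, namely that $\xC_n$ is $\xB$-terminating.

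The key step is verifying $\xB$-termination of each successive constraint system. Since the whole original run uses the single $\xB$-compatible reduction order $>$, and \lemref{aclltermination} (together with the fact that every orientation in the run satisfies the chosen side condition $s > t$) guarantees that every oriented rule is contained in $>$, we have $\xC_n \subseteq {>}$ throughout. Because $>$ is a $\xB$-compatible reduction order it is well-founded and closed under the relevant operations, so any subset of $>$ is $\xB$-terminating; in particular $\xC_n$ is $\xB$-terminating and the \ori side condition in \iatt is satisfied. This lets the induction go through, producing the desired \iatt run.

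It remains to establish the final inclusion ${\Rab{\xC_n/\xB}{+}} \subseteq {>}$. Since $\xC_n \subseteq {>}$ and $>$ is $\xB$-compatible, we have ${\Rab{\xC_n}{}} \subseteq {>}$; closure of $>$ under contexts and substitutions upgrades single $\xC_n$-steps to $>$, and $\xB$-compatibility absorbs the surrounding $\bsim$-steps so that ${\Rab{\xC_n/\xB}{}} \subseteq {>}$, whence transitivity of $>$ yields ${\Rab{\xC_n/\xB}{+}} \subseteq {>}$. The main obstacle, such as it is, lies in this last inclusion: one must be careful that $\Rab{\xC_n/\xB}{+}$ really is contained in $>$ rather than merely well-founded, and this relies precisely on the $\xB$-compatibility clause ${\bsim \cdot > \cdot \bsim} \subseteq {>}$ in the definition of a $\xB$-compatible reduction order, which is what allows the modulo-$\xB$ closure of $\xC_n$ to stay below $>$.
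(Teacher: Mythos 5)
Your proposal is correct and follows essentially the same route as the paper: induction on the length of the run, carrying the constraint system unchanged through every non-\ori step, extending it by the newly oriented rule in the \ori case, and deriving both the $\xB$-termination side condition and the final inclusion ${\Rab{\xC_n/\xB}{+}} \subseteq {>}$ from $\xC_n \subseteq {>}$ together with $\xB$-compatibility of $>$. Your explicit justification of the side condition and of the last inclusion is slightly more detailed than the paper's, which leaves those points implicit, but the argument is the same.
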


\begin{proof}
We perform induction on $n$. For $n = 0$ we just translate
$(\xE_0,\varnothing)$ to $(\xE_0,\varnothing,\varnothing)$. For
$n > 0$ we obtain \smash{$(\xE_0,\xR_0,\xC_0) \seqa{\iatt}{*}
(\xE_{n-1},\xR_{n-1},\xC_{n-1})$} and
\smash{${\Rab{\xC_{n-1}/\xB}{+}} \subseteq {>}$} from the induction
hypothesis. If the step
$(\xE_{n-1},\xR_{n-1}) \seq{\ia} (\xE_n,\xR_n)$ is not an
application of \ori we have
\smash{$(\xE_{n-1},\xR_{n-1},\xC_{n-1}) \seq{\iatt}
(\xE_n,\xR_n,\xC_n)$} where $\xC_{n-1} = \xC_n$ and therefore
\smash{${\Rab{\xC_n/\xB}{+}} \subseteq {>}$} as the constraint system is
just carried along without any further restrictions in all rules
except \ori. Now assume that the step is an application of \ori on
the equation $s \approx t$. We have
$\xE_n = \xE_{n-1} \setminus \SET{s \approx t}$ and
$\xR_n = \xR_{n-1} \cup \SET{v \R w}$ where $v > w$ and
$\SET{v,w} = \SET{s,t}$. Let $\xC_n = \xC_{n-1} \cup \SET{v \R w}$. Thus,
\smash{$(\xE_{n-1},\xR_{n-1},\xC_{n-1}) \seq{\iatt}
(\xE_n,\xR_n,\xC_n)$} by an application of \ori and we obtain
\smash{$(\xE_0,\xR_0,\xC_0) \seqa{\iatt}{*} (\xE_n,\xR_n,\xC_n)$} as
well as ${\Rab{\xC_n/\xB}{+}} \subseteq {>}$ from the induction
hypothesis together with $v > w$ and the definition of $\xC_n$.
\end{proof}
\noindent 
Note that the presented translation between runs of \ia and \iatt allows
us to speak about the fairness of runs in \iatt since this simply
means that the corresponding run in \ia is fair.

\lemref{ttsoundness} shows that the usage of termination tools is a
sound extension which allows us to perform completion without
constructing an appropriate reduction order beforehand. Furthermore,
the usage of termination tools does not affect the applicability of
our completion procedure due to the previous completeness result
(\lemref{ttcompleteness}). However, implementing completion with
termination tools such that it does not affect the applicability in
practice is a highly nontrivial task. The reason for this is that
without a concrete reduction order, both versions of the \ori rule may
be applicable. This yields a potentially huge search space as
orienting a rule in the ``wrong'' direction may cause completion to
fail or diverge. In \cite{WSW06} this problem is solved by traversing
the search space which is a binary tree with some best-first strategy
based on a cost function which takes the number of equations, rules
and critical pairs into account.

The state of the art for solving this problem efficiently is
\emph{multi-completion with termination tools} due to
Winkler et al.~\cite{WSMK13}. This method traverses the whole search
space while being as efficient as possible by sharing computations between
different nodes in the binary tree which represents the search
space. The method is implemented in the tool \mkbtt. As the
implementation of this approach is a major effort, we adopt the
strategy used by the automatic mode of the Knuth--Bendix Completion
Visualizer (\kbcv) \cite{SZ12}. Instead of traversing the whole search
space, \kbcv runs two threads in parallel where one thread prefers to
orient equations from left to right while the other one prefers to
orient from right to left. If one of the threads finishes
successfully, the corresponding result is reported. Completion fails
if both threads fail. Needless to say, this compromises completeness,
but it is a trade-off which works well in many practical cases. In
particular, \kbcv can also complete systems which \mkbtt cannot within
a given time constraint \cite{SZ12}.

\subsection{Strategy}
\label{sec:strategy}

The presentation of \iatt as an inference system allows
implementations to use the given rules in any order as long as the
produced run is fair. This section is about the strategy for applying
rules of \iatt which is employed by the tool \accompll. From now on,
we specialize to the equational theory AC as \accompll was developed
for this specific case.

The used strategy is based on Huet's completion procedure
\cite[Section 7.4]{BN98}. An important property of this procedure is
that the rules \sip, \del, \com and \col are applied eagerly in order
to keep the intermediate ESs and TRSs as small as possible. Usually,
this has a positive effect on the runtime of the completion
process. Unlike Huet's completion procedure, in the employed strategy
the orientation of a rule is always directly followed by the
computation of all possible critical pairs which involve this
rule. Furthermore, \ori is only applied once per iteration to the
smallest equation with respect to the term size. This modified version
of Huet's completion procedure is implemented e.g.\ in \kbcv and the
flow chart depicted in \figref{flowchart} is based on
\cite{SZ12}. However, in contrast to standard completion, we have to
add some of the critical pairs as rules in \iatt. Hence, we apply
\ded on the selected equation before \com and \col are applied
exhaustively. Moreover, we need to keep track of a separate set of
pending rules ($\xP$) as the eager recursive computation of critical
pairs between rules and AC axioms might unnecessarily lead to
non-termination of the completion procedure.

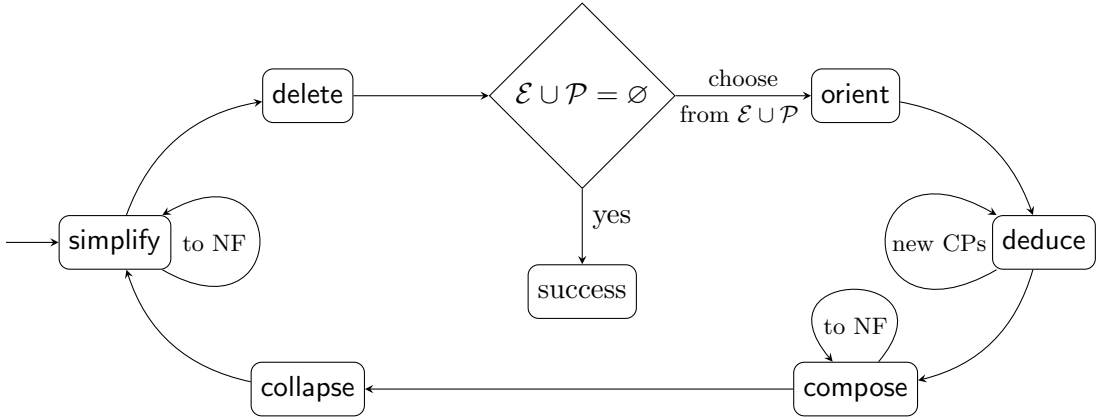
\begin{figure}[t]
\centering
\begin{tikzpicture}[node distance=1cm,>=stealth]
\node (1) [block] {\sip};
\node (9) [right=of 1] {\phantom{\sip}};
\node (2) [block,above=of 9,yshift=8] {\del};
\node (7) [block,below=of 9,yshift=-8] {\col};
\node (3) [decision,right=18mm of 2]
 {\MC[15]{$\xE \cup \xP = \varnothing$}};
\node (8) [finish,below=of 3] {success};
\node (4) [block,right=18mm of 3] {\ori};
\node (10) [below=of 4,yshift=-8] {\phantom{\sip}};
\node (5) [block,right=of 10] {\ded};
\node (6) [block,below=of 10,yshift=-8] {\com};
\node (11) [left=7mm of 1] {};
\draw [->] (1) to [bend left] (2);
\draw [->] (2) -- (3);
\draw [->] (3) -- (4) node [pos=.5,above] {\footnotesize choose}
node [pos=.47,below] {\footnotesize from $\xE \cup \xP$};
\draw [->] (4) to [bend left] (5);
\draw [->] (5) to [bend left] (6);
\draw [->] (6) -- (7);
\draw [->] (7) to [bend left] (1);
\draw [->] (3) -- (8) node [pos=.4,right] {yes};
\draw [->] (11) -- (1);
\draw [->] (1) to [reflexive right] (1)
node [right=7.6mm] {\footnotesize to NF};
\draw [->] (5) to [reflexive left] (5)
node [left=6.3mm] {\footnotesize new CPs};
\draw [->] (6) to [reflexive above,in=130,out=50] (6)
node [above=6mm] {\footnotesize to NF};
\end{tikzpicture}
\caption{Flow chart for \accompll's completion procedure.}
\label{fig:flowchart}
\end{figure}

We will now describe the flow chart depicted in \figref{flowchart} in
detail. As already mentioned, \accompll also keeps track of a list of
pending rules $\xP$. Intuitively, a pending rule is like a normal
rule with the exception that critical pairs involving it have not yet
been computed and it is not used to collapse other rules. Hence,
\accompll works on a quadruple of an ES and three TRSs
$(\xE,\xP,\xR,\xC)$ which are processed by one main loop. First of
all, \sip is applied exhaustively such that both sides of every
equation are normal forms with respect to $\xE$ and $\xP$. This
preliminary step may be already enough to join some equations, so
after that every AC equivalent equation is removed with \del. If no
equations or pending rules are left, we are done. Otherwise, an
equation $\ell \approx r$ or pending rule $\ell \R r$ which is minimal
with respect to $|\ell| + |r|$ where $|\cdot|$ denotes the size of a
term is selected. If it is a rule, \ori is just the identity
function. Otherwise, \ori is applied with the orientation preference
of the given thread, i.e., it first orients the equation in the
preferred direction and only tries the other option in case of
failure. An important feature of the implementation of \ori is that it
can be postponed, i.e., if one equation is not orientable in either
direction, the next one is tried. If \ori never succeeds, the thread
terminates in a failure state.

In any case, a successful application of \ori yields some rule
$\rho$. Next, \ded is used to produce the sets $\cp{\SET{\rho}}$ and
$\cppm{\xR,\SET{\rho}}$ which are added to $\xE$ as well as
$\cppm{\AC^\pm,\SET{\rho}}$ which is added to $\xP$.
The computation of critical pairs is of course restricted to the
primality criterion if the corresponding option is set.

After that, all
rules in $\xP \cup \xR$ are exhaustively composed such that their
right-hand sides are normal forms with respect to $\xP \cup \xR$.
Finally, \col is applied exhaustively in $\xR$ with respect to $\rho$
and in $\xP$ with respect to $\SET{\rho} \cup \xR$. Here, we
distinguish between $\xR$ and $\xP$ because at this point, the
left-hand sides of $\xR$ are known to be normal forms of the remaining
rules of $\xR$ while this can not be assured for the deduced rules
$\xP$. Note that we do not collapse $\xP$ with respect to $\xP$ as it
would require checking rules for equality.

In order to produce a left-linear system, \ori is only applied if the
left-hand side of the resulting rule is linear which makes all
intermediate TRSs left-linear. Fairness only demands left-linearity
for the resulting TRS, but improving this aspect was not considered as
it is unclear based on which criteria non-left-linear rules should be
turned back into equations again. Moreover, keeping each intermediate
TRS left-linear makes intermediate rules stemming from critical pairs
left-linear by definition: Since the AC axioms are linear, overlaps
between the AC axioms and a left-linear TRS always produce linear
terms. Thus, the left-linearity of pending rules does not have to be
checked as rewriting with linear rules preserves linearity. We are now
ready to prove the main result of this section.

\begin{thm}
\label{theorem:accompll}
The \textup{TRS}s produced by the tool \accompll are \textup{AC}-canonical
presentations of the \textup{ES} provided as input.
\end{thm}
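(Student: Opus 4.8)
The plan is to reduce the statement to the machinery already developed for \ia. First I would argue that every successful run of \accompll is, after stripping the bookkeeping of the pending set $\xP$ and the constraint system $\xC$, a fair and non-failing run in \ia. The translation from \iatt to \ia is already provided by \lemref{ttsoundness}, so it suffices to read off an \ia run from the sequence of inference steps performed along the flow chart of \figref{flowchart}, treating the pending rules in $\xP$ as ordinary rules (they were created by \ded on local cliffs and are therefore legitimate \ia rules) whose critical-pair computation is merely deferred. Granting fairness and non-failure, \thmref{correctness} immediately yields that the final \textup{TRS} is an \textup{AC}-complete presentation of the input \textup{ES}.

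Non-failure is the easy half: by \figref{flowchart} the procedure reports \emph{success} only when the branch $\xE \cup \xP = \varnothing$ is taken, forcing the final equation set to be empty. For fairness I would exploit the eager strategy: every application of \ori to a rule $\rho$ is immediately followed by \ded, producing $\cp{\SET{\rho}}$ and $\cppm{\xR,\SET{\rho}}$ (added to $\xE$) together with $\cppm{\AC^\pm,\SET{\rho}}$ (added to $\xP$), optionally restricted by the primality criterion. Hence, when a rule is oriented, all its prime critical pairs with the currently present rules and with the \textup{AC} axioms are recorded. Since the run halts with $\xP = \varnothing$, every pending rule is eventually selected and itself subjected to \ded, so the \textup{AC}-overlaps feeding $\pcppm{\xR_n,\AC^\pm}$ are all added as (pending) rules at some step, giving the second inclusion of \defref{fairness}; the analogous argument with the equations added to $\xE$ gives the first.

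For canonicity I would invoke \defref{bcanonical}: it remains to show that the result is left-reduced and right-\textup{AC}-reduced, both of which follow from the exhaustiveness of inter-reduction at the point of success. Because \col is applied exhaustively with respect to the surviving rules, no left-hand side is reducible by the remaining rules, which is exactly left-reducedness; because \com is applied exhaustively with $\RmAC$, every right-hand side is an $\RmAC$-normal form, which is exactly right-\textup{AC}-reducedness. Together with \textup{AC}-completeness this yields \textup{AC}-canonicity, precisely as in the observation following \thmref{ddot} and in the proof of \lemref{ibcanonical}.

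The main obstacle will be making the fairness argument watertight, since the inclusions of \defref{fairness} refer to prime critical pairs of the \emph{final} system $\xR_n$, whereas the tool computes critical pairs only against the rules present at the moment of orientation, and rules may be removed by \col or altered by \com in between. The point to nail down is a persistence argument: for any two rules surviving to $\xR_n$, the critical pair between them has been generated at the step where the later of the two was oriented, because a rule persisting to the end is present among the ``current'' rules whenever any later surviving rule is processed. Establishing this persistence for the encompassment-free \col rule—justified here because only finite runs are considered—is the delicate step; once it is in place, every prime critical pair of $\xR_n$ is seen to lie in some $\bigcup_{i=0}^n {\LRb{\xE_i \cup \xR_i}}$ or $\bigcup_{i=0}^n {\LRb{\xR_i}}$ (or to be joinable in $\Dab{\xR_n}{\sim}$), and fairness follows.
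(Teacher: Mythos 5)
Your proposal follows essentially the same route as the paper's proof: extract an \iatt{} run from the flow chart, obtain AC-completeness from fairness together with \lemref{ttsoundness} and \thmref{correctness}, and obtain canonicity from the exhaustiveness of inter-reduction at the point of success, arguing as in \lemref{ibcanonical}; the persistence point you flag is precisely what the paper compresses into the assertion that all prime critical pairs of $\xR_n$ have appeared as intermediate equations or pending rules. The one small correction is that \accompll's \com{} by default normalizes right-hand sides with respect to $\RbR$ on $\xR \cup \xP$ rather than with $\RmAC$, so right-AC-reducedness is not read off from \com{} directly but is derived, as you also indicate, from right-reducedness plus AC-completeness exactly as in the proof of \lemref{ibcanonical}.
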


\begin{proof}
Suppose that given an ES $\xE$, \accompll outputs a TRS $\xR_n$ and
consider the thread which produced this result in $n$ iterations of
the main loop as depicted in the flow chart (\figref{flowchart}).
In particular, let $\xE_i$, $\xP_i$, $\xR_i$ and $\xC_i$ denote the
respective values of $\xE$, $\xP$, $\xR$ and $\xC$ at the decision
node in the flow chart in the $i$-th iteration. Note that
$\xE_0 = \xE$, $\xP_0 = \xR_0 = \xC_0 = \varnothing$ and
$\xE_n = \xP_n = \varnothing$. It is immediate from the flow chart
and its textual description that
\[
(\xE_0,\xR_0 \cup \xP_0,\xC_0)
\,\seqa{\iatt}{*}\, (\xE_1,\xR_1 \cup \xP_1,\xC_1)
\,\seqa{\iatt}{*}\, \cdots \,\seqa{\iatt}{*}\,
(\xE_n,\xR_n \cup \xP_n,\xC_n)
\]
is a run for $\xE$. Furthermore, all prime critical pairs of $\xR_n$
have been considered as an intermediate equation and all prime
critical pairs between $\xR_n$ and $\AC^\pm$ have been considered
as an intermediate rule. The fact that $\xR_n$ is an AC-complete
presentation of $\xE$ now follows from fairness (\defref{fairness})
together with \lemref{ttsoundness} and \thmref{correctness}.
Finally, a straightforward induction argument shows that the
statements
\begin{enumerate}
\item
$\ell \in \nf{\xR_i}$ for every ${\ell \R r} \in \xR_i \cup \xP_i$ and
\smallskip
\item
$r \in \nf{\xR_i \cup \xP_i}$ for every ${\ell \R r} \in \xR_i \cup \xP_i$
\end{enumerate}
hold for $0 \leq i \leq n$. Together, (1) and (2) show that $\xR_n$
is also AC-canonical: Just like in the proof of \lemref{ibcanonical}
we conclude that if each rule of an AC-complete system is
right-reduced, it is also right-AC-reduced.
\end{proof}

\subsection{Implementation Details}
\label{sec:impldet}

The implementation of the tool \accompll is based on the Haskell
\texttt{term-rewriting} library \cite{FAS13} which takes care of most
of our needs regarding term rewriting except rewriting modulo AC,
prime critical pairs and the computation of normal forms. The
following paragraphs describe selected implementation details.

\paragraph{Rewriting to normal forms} A naive implementation of the
computation of normal forms is highly inefficient as in general, the
whole term has to be traversed for every rewrite step. As a trade-off
between the time-consuming implementation of sophisticated term
indexing techniques and the naive implementation we employ a bottom-up
construction of normal forms by innermost rewriting of marked terms
where normal form positions are remembered.

\paragraph{AC equivalence of terms} Checking whether two terms are
equivalent modulo AC is needed for the implementation of the inference
rule \ded. To this end, an equation $s \approx t$ is transformed into
a canonical form $s' \approx t'$ where nested applications of AC
symbols are flattened out to just one $n$-ary application for
arbitrary $n$ and their arguments are ordered with respect some total
order on terms. Then, $s \acsim t$ if and only if $s' = t'$. As an
example, the terms $\m{f}(x,\m{f}(y,\m{g}(\m{f}(z,\m{a}))))$ and
$\m{f}(\m{f}(y,x),\m{g}(\m{f}(\m{a},z)))$ with $\m{f} \in \AC$
are AC equivalent because they have the same canonical form
$\m{f}(\m{g}(\m{f}(\m{a},z)),x,y)$ with respect to
the lexicographic order on the representation of terms as strings of
ASCII symbols.

\paragraph{Normal forms in rewriting modulo AC}
As already mentioned in \secref{preliminaries}, a direct implementation of
$\RmAC$ cannot be efficient. In the following, we will prove that it
suffices to implement the relation $\RcAC$ from Peterson and
Stickel~\cite{PS81} which is easier to compute but relies on AC
matching. Although there are more efficient implementations, we used
the AC matching algorithm due to Contejean \cite{C04} as it has been
certified. The result we are about to prove
(\corref{slash_and_comma_star}) is a special case of more general
results for conditional rewriting modulo an equational theory by
Meseguer~\cite[Corollary 3]{M17}. However, to the best of our
knowledge, the literature does not contain a direct proof of the
required result despite its importance for effective implementations
also in the unconditional case. Therefore, in the following, we give a
detailed and direct proof of the result.

We start with recalling the definition of the concept of
extended rules due to~\cite{PS81}.
Let $\xR$ be a TRS and let $f(u,v) \R r$ be a rule in $\xR$ where
$f$ is an AC function symbol. The rule $f(f(u,v),x) \R f(r,x)$ with
$x \in {\xV \setminus \var{f(u,v)}}$ is an \emph{extension} of
$f(u,v) \R r$. The TRS consisting of $\xR$ together with all
extensions of rules in $\xR$ is denoted by $\xR^e$. We will now
prove that the relations $\RmAC[*] \cdot \acsim^{}$ and
$\RecAC[*] \cdot \acsim^{}$ coincide. To this end, we show 
${{\acsim \cdot \RbR}} \subseteq {\RecAC \cdot \acsim}$, 
which is called \emph{strict coherence} in~\cite{M17}.

\begin{lem}
\label{lem:local_coherence}
${\Lab{\AC}{} \cdot \RecAC} \subseteq {\RecAC \cdot \Lab{\AC}{=}}$
\end{lem}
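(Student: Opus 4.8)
The plan is to unfold the left-hand side of the inclusion into its two constituent steps and then commute them by a case analysis on the positions at which they act. So suppose $s \Lab{\AC}{} u \RecAC t$, where the equational step is $u \Rb{\AC} s$ at a position $q$ using an instance of an AC axiom, and the rewrite step uses a rule $\ell \to r \in \xR^e$ at a position $p$ with $u|_p \acsim \ell\sigma$ and $t = u[r\sigma]_p$. Since $u$ and $s$ agree outside $q$ and differ only by the rearrangement of $u|_q$, everything hinges on how $p$ and $q$ relate, and I would split into $p \parallel q$, $p \le q$, and $q < p$.

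The first two cases are routine. If $p \parallel q$ the two steps act on disjoint subterms: the redex at $p$ is untouched by the equational step, so $s \RecAC s[r\sigma]_p$ with the same rule and matcher, and this reduct is joined to $t$ by the very same AC step performed at $q$. If $p \le q$ the equational step happens inside the matched subterm; because $\acsim$ is closed under contexts we still have $s|_p \acsim u|_p \acsim \ell\sigma$, so the identical rewrite applies to $s$ at $p$, and since both $u$ and $s$ have their subterm at $p$ replaced by $r\sigma$ the two results coincide, so here the residual is the reflexive step and nothing further is needed.

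The real work is the case $q < p$, where the equational step sits strictly above the redex. If the axiom instance at $q$ is commutativity, or is associativity whose regrouping does not split the redex (i.e.\ $u|_p$ lies entirely inside one of the immediate subterms that are rearranged), then $u|_p$ survives in $s$ as a subterm at a shifted position; I would apply the same rule there and join the two reducts by the single AC step at $q$. The genuinely delicate subcase is when the step at $q$ is an associativity step whose regrouping cuts exactly at the root of the redex — necessarily the AC symbol $f$, since $u|_p \acsim \ell\sigma$ and AC steps preserve the root symbol, so $\ell$ is $f$-rooted. Here $u|_p$ is \emph{destroyed} as a subterm of $s$, and this is precisely the situation the extension $\xR^e$ is designed for: the regrouped subterm of $s$ AC-matches the extended rule $f(\ell,x) \to f(r,x)$, with the fresh variable $x$ absorbing the sibling that associativity moved across, so $s$ remains reducible.

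The main obstacle is exactly this associativity-crossing subcase, where the difficulty is to keep the residual within the single step permitted by $\Lab{\AC}{=}$ rather than a longer $\acsim$-conversion. Two points need care. First, one must exhibit the matcher explicitly in each orientation of the axioms and check that the contracta are connected by one commutativity step at $q$. Second, and hardest, is the bookkeeping that a single level of extension suffices: when the rule already applied to $u$ is itself an extended rule one cannot take a further extension (it is not in $\xR^e$), so instead one reuses the same extended rule on $s$ with its distinguished variable instantiated to swallow the extra argument, relying on the fact that $f(f(\ell_0,z),y)$ and $f(\ell_0,f(z,y))$ are AC-equal so that extensions of extensions are already AC-instances of the extension itself. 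The closure of rewriting and of $\acsim$ under contexts and substitutions, noted earlier, lets me reduce the entire argument to the top of the rearranged subterm $u|_q$, which keeps the case analysis finite and local.
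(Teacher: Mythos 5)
Your proposal is correct and follows essentially the same route as the paper: a case analysis on the relative positions of the AC step and the rewrite step, with the parallel and nested cases closed directly, the variable-overlap case closed by the same rule at the shifted position plus one residual AC step, and the associativity step that splits the redex handled via the extended rules in $\xR^e$ — including the key observation that when the applied rule is already an extension, its fresh variable can absorb the extra argument because an extension of an extension is an AC-instance of the extension itself. The only (cosmetic) difference is that the paper descends to the root positions by induction on the term where you invoke closure under contexts.
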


\begin{proof}
Suppose $t \Lab{\AC}{p} s \RecAC[q] u$. We show 
$t \RecAC \cdot \Lab{\AC}{=} u$ by induction on $s$. Let $\ell_1 \to r_1$
and $\ell_2 \to r_2$ be the rules employed in the left and right
steps.
We distinguish four cases.
We use the facts that 
$\acsim \cdot \Rab{\xR,\h\AC}{\epsilon}$ and
$\Rab{\xR,\h\AC}{\epsilon}$ coincide 
and
$\Lb{\AC}$ is contained in $\acsim$.
\begin{enumerate}
\item
If $q = \epsilon$ then $t \RecAC u$. Thus, the claim holds. \smallskip
\item
If $p = \epsilon$ and $q \notin \posf{\ell_1}$ then $s = \ell_1\sigma$
and
$t = r_1\sigma$ for some substitution $\sigma$ and there exists a variable
position $p'$ in $\ell_1$ with $p' \leqslant q$.
We have $\ell_1\sigma|_{p'} \RecAC u|_{p'}$.
Define the substitution $\tau$ as follows:
\[
\tau(x) \,=\, \begin{cases}
u|_{p'} & \text{if $x = \ell_1|_{p'}$} \\
\sigma(x) & \text{otherwise}
\end{cases}
\]
As $r_1$ is linear and $\var{\ell_1} = \var{r_1}$,
the variable $\ell_1|_{p'}$ has one occurrence in $r_1$.
Hence, we obtain
$t = r_1\sigma \Rb{\xR^e,\AC} r_1\tau \Lb{\AC} \ell_1\tau = u$.
\smallskip
\item
If $p = ip'$ and $q = jq'$ with $i, j \in \mathbb{N}$ then we further
distinguish two subcases. If $i = j$ then 
$t|_i \Lb{\AC} s|_i \RecAC u|_i$ and we obtain
$t \RecAC \cdot \Lab{\AC}{=} u$ with help of the induction hypothesis.
If $i \neq j$ then we clearly have
$t \RecAC[q] \cdot \Lab{\AC}{p} u$. 
\smallskip
\item
In the final case, $p = \epsilon$ and $q \in \posf{\ell_1}$ with
$q \neq \epsilon$. Since $\posf{\ell_1} \subseteq \SET{\epsilon,1}$, the
rule $\ell_1 \to r_1$ is an associativity rule
$f(f(x,y),z) \R f(x,f(y,z))$ and $q = 1$. So $s|_1 \acsim \ell_2\sigma$
and $u = f(r_2\sigma,s|_2)$. If $\ell_2 \to r_2 \in \xR$ then the
corresponding extended rule results in
$f(\ell_2\sigma,s|_2) \Rab{\xR^e}{\epsilon} f(r_2\sigma,s|_2)$
and thus
$t \acsim s \acsim f(\ell_2\sigma,s|_2) \Rab{\xR^e}{\epsilon} u$,
which entails $t \RecAC u$. Otherwise, $\ell_2 \to r_2$ is the extension
$f(\ell,x') \to f(r,x')$ of some rule $\ell \to r \in \xR$ with
$x' \notin \var{\ell}$. Define the substitution $\tau$ as follows:
\[
\tau(x) \,=\, \begin{cases}
f(x'\sigma,s|_2) & \text{if $x = x'$} \\
\sigma(x) & \text{otherwise}
\end{cases}
\]
Since $t \acsim s \acsim f(\ell_2\sigma,s|_2) \acsim
f(\ell\sigma,f(x'\sigma,s|_2)) = \ell_2\tau$, we obtain
$t \RecAC[\epsilon] r_2\tau$. Moreover, 
$r_2\tau = f(r\sigma,f(x'\sigma,s|_2)) \Lb{\AC}
f(f(r\sigma,x'\sigma),s|_2)) = u$.
\qedhere
\end{enumerate}
\end{proof}

\begin{lem}
\label{lem:strict_coherence}
${{\acsim \cdot \RecAC}} \subseteq {\RecAC \cdot \acsim}$
\end{lem}

\begin{proof}
\lemref{local_coherence} generalizes to
${{\Lab{\AC}{*} \cdot \RecAC}} \subseteq {\RecAC \cdot \Lab{\AC}{*}}$
by a straightforward induction argument. From
$\AC \subseteq {\Lab{\AC}{*}}$ we obtain
${\acsim} \subseteq {\Lab{\AC}{*}}$ and hence the claim follows.
\end{proof}

\begin{thm}
\label{thm:slash_and_comma_equiv}
The relations $\RmAC$ and $\RecAC \cdot \acsim$ coincide.
\end{thm}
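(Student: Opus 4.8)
The plan is to prove the two inclusions ${\RecAC \cdot \acsim} \subseteq {\RmAC}$ and ${\RmAC} \subseteq {\RecAC \cdot \acsim}$ separately, leaning on \lemref{strict_coherence} for the second, more delicate direction. Throughout I would use that $\RmAC = {\acsim \cdot \RbR \cdot \acsim}$ absorbs $\acsim$ on either side, so that ${\acsim \cdot \RmAC} = {\RmAC \cdot \acsim} = {\RmAC}$ (since $\acsim$ is transitive).

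For the inclusion ${\RecAC \cdot \acsim} \subseteq {\RmAC}$ I would first establish ${\RecAC} \subseteq {\RmAC}$ by a case distinction on whether the applied rule lies in $\xR$ or is an extension. If $s \RecAC t$ uses a rule ${\ell \R r} \in \xR$ at position $p$ with $s|_p \acsim \ell\sigma$ and $t = s[r\sigma]_p$, then $s \acsim s[\ell\sigma]_p \RbR t$ exhibits $s \RmAC t$ directly, using that $\acsim$ is closed under contexts. If the rule is an extension $f(f(u,v),x) \R f(r,x)$ of some ${f(u,v) \R r} \in \xR$, then $s|_p \acsim f(f(u\sigma,v\sigma),x\sigma)$, and rewriting the subterm $f(u\sigma,v\sigma) = f(u,v)\sigma$ with the underlying rule yields $f(f(u\sigma,v\sigma),x\sigma) \RbR f(r\sigma,x\sigma)$; hence again $s \acsim \cdot \RbR t$ and thus $s \RmAC t$. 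Because $\RmAC$ absorbs a trailing $\acsim$, the inclusion ${\RecAC \cdot \acsim} \subseteq {\RmAC \cdot \acsim} = {\RmAC}$ follows.

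For the converse inclusion ${\RmAC} \subseteq {\RecAC \cdot \acsim}$, take $s \RmAC t$, i.e.\ $s \acsim a \RbR b \acsim t$ for suitable $a, b$. Since ${\RbR} \subseteq {\RecAC}$ (as $\xR \subseteq \xR^e$ and syntactic matching is an instance of AC matching), we have $a \RecAC b$, giving a composition $s \acsim a \RecAC b$. Now \lemref{strict_coherence} commutes the equivalence step past the rewrite step: there is a term $c$ with $s \RecAC c \acsim b$. Together with $b \acsim t$ and transitivity of $\acsim$, this yields $s \RecAC c \acsim t$, that is, $s \mathbin{{\RecAC} \cdot {\acsim}} t$, as desired.

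The genuine difficulty of the statement is already packaged into \lemref{strict_coherence}, and the local coherence \lemref{local_coherence} behind it, whose proof handles the interaction of an AC step with an $\xR^e$-step and is exactly where the extended rules are indispensable. Once strict coherence is available, the theorem is a short argument; the only additional care needed is in the first inclusion, where one must verify explicitly that a step performed with an extended rule can be reproduced by $\RmAC$ by applying the underlying rule to the appropriate subterm.
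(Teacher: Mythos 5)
Your proposal is correct and follows essentially the same route as the paper: the direction ${\RmAC} \subseteq {\RecAC \cdot \acsim}$ is obtained from strict coherence (\lemref{strict_coherence}) exactly as in the paper, and your case analysis for ${\RecAC} \subseteq {\RmAC}$ just spells out the fact ${\Rb{\xR^e}} = {\RbR}$ that the paper invokes directly. Nothing is missing.
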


\begin{proof}
Using \lemref{strict_coherence} we obtain
\[
{\RmAC} \,=\, {\acsim \cdot \RbR \cdot \acsim} \,\subseteq\,
{\RecAC \cdot \acsim \cdot \acsim} \,=\, {\RecAC \cdot \acsim}
\]
The other direction follows from the definition of $\RcAC$ as well as the
fact that ${\RbR} = {\Rb{\xR^e}}$:
\[
{\RecAC} \,\subseteq\, {\acsim \cdot \Rb{\xR^e}} \,=\,
{\acsim \cdot \RbR} \,\subseteq\, {\RmAC} \qedhere
\]
\end{proof}

\begin{cor}
\label{cor:slash_and_comma_star}
The relations $\RmAC[*] \cdot \acsim^{}$ and $\RecAC[*] \cdot \acsim^{}$
coincide.
\end{cor}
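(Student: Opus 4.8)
The plan is to lift the single-step identity of \thmref{slash_and_comma_equiv} to the reflexive–transitive closures, using strict coherence (\lemref{strict_coherence}) as the tool that lets us shuffle all AC-equivalence steps to the right. The easy inclusion is ${\RecAC[*] \cdot \acsim^{}} \subseteq {\RmAC[*] \cdot \acsim^{}}$: in the proof of \thmref{slash_and_comma_equiv} the inclusion ${\RecAC} \subseteq {\RmAC}$ was already established (via ${\RecAC} \subseteq {\acsim \cdot \Rb{\xR^e}} = {\acsim \cdot \RbR} \subseteq {\RmAC}$), so ${\RecAC[*]} \subseteq {\RmAC[*]}$ and the claim follows by composing with $\acsim$ on the right.

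For the reverse inclusion ${\RmAC[*] \cdot \acsim^{}} \subseteq {\RecAC[*] \cdot \acsim^{}}$ I would first generalize \lemref{strict_coherence} to ${\acsim \cdot \RecAC[*]} \subseteq {\RecAC[*] \cdot \acsim}$ by a routine induction on the number of $\RecAC$-steps. Then, rewriting ${\RmAC}$ as ${\RecAC \cdot \acsim}$ by \thmref{slash_and_comma_equiv}, I would prove ${(\RecAC \cdot \acsim)^n \cdot \acsim} \subseteq {\RecAC[*] \cdot \acsim}$ by induction on $n$. The base case $n = 0$ is immediate since $\acsim \subseteq {\RecAC[*] \cdot \acsim}$ (zero rewrite steps). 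For the inductive step, from ${(\RecAC \cdot \acsim)^{n+1} \cdot \acsim} = {\RecAC \cdot \acsim \cdot (\RecAC \cdot \acsim)^{n} \cdot \acsim}$ and the induction hypothesis I obtain a term in ${\RecAC \cdot \acsim \cdot \RecAC[*] \cdot \acsim}$; the generalized coherence property commutes the interior $\acsim$ past the $\RecAC[*]$ block to yield ${\RecAC \cdot \RecAC[*] \cdot \acsim \cdot \acsim} = {\RecAC[+] \cdot \acsim} \subseteq {\RecAC[*] \cdot \acsim}$, where the two trailing equivalences merge into one because $\acsim$ is transitive. Taking the union over all $n$ gives the desired inclusion.

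There is no genuine difficulty here, since the substantive work has already been done in \lemref{strict_coherence} and \thmref{slash_and_comma_equiv}; the whole proof is an exercise in commuting and absorbing $\acsim$-steps. The only points demanding care are purely organisational: ensuring the \emph{trailing} $\acsim$ coming from $\RmAC[*] \cdot \acsim$ is itself absorbed by transitivity (not just the internal ones arising from each $\RmAC$ step), and handling the empty rewrite sequence so that the reflexive case is correctly covered on both sides.
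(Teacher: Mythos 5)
Your proof is correct and follows essentially the same route as the paper: induction on the number of rewrite steps, converting each $\RmAC$ step into $\RecAC \cdot {\acsim}$ via \thmref{slash_and_comma_equiv} and then commuting the intermediate AC-equivalences to the right. The only cosmetic difference is that the paper peels off the \emph{last} step and uses the absorption identities ${\RmAC \cdot \acsim} = {\acsim \cdot \RmAC} = {\RmAC}$ to prove the per-length equality ${\RmAC[n] \cdot \acsim} = {\RecAC[n] \cdot \acsim}$ directly, whereas you peel off the first step and invoke a many-step generalization of \lemref{strict_coherence}; both are sound.
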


\begin{proof}
It is sufficient to show 
${\RmAC[n] \cdot \acsim^{}} = {\RecAC[n] \cdot \acsim^{}}$
for all $n \geqslant 0$. We use induction on $n$. If $n = 0$ then the
claim is trivial. If $n > 0$ then the claim is verified as follows:
\begin{align*}
{\RmAC[n] \cdot \acsim^{}}
&\,=\, {\RmAC[n-1] \cdot \acsim^{} \cdot \RmAC}
\tag{${\RmAC \cdot \acsim^{}} \,=\, {\acsim^{} \cdot \RmAC}$} \\
&\,=\, {\RecAC[n-1] \cdot \acsim^{} \cdot \RmAC}
\tag{induction hypothesis} \\
&\,=\, {\RecAC[n-1] \cdot \RmAC}
\tag{${\acsim^{} \cdot \RmAC} \,=\, {\RmAC}$} \\
&\,=\, {\RecAC[n-1] \cdot \RecAC \cdot \acsim^{}}
\tag{\thmref{slash_and_comma_equiv}} \\
&\,=\, {\RecAC[n] \cdot \acsim^{}}
\tag*{\qedhere}
\end{align*}
\end{proof}

\section{Experimental Results}
\label{sec:experimental-results}

We evaluated the performance of our tool \accompll on a problem
set containing 52 ESs. The problem set is based on the one used in
\cite{WM11} and has been extended by further examples from the
literature as well as handcrafted examples. From the 52 ESs, 10
contain equations which cannot be oriented into a left-linear
rule. Furthermore, 6 problems are ground which means that \accompll
cannot find a finite solution by \exaref{ia_diverge}. This leaves
us with 36 problems for which \accompll may find an AC complete
presentation. However, for some of these ESs, simplified equations
where both terms are non-linear may be deduced which causes
\accompll to get stuck. Furthermore, many interesting problems
exhibit the properties described in Examples~\ref{exa:ia_diverge}
and \ref{exa:ia_diverge2} and are therefore out of reach for our
method. Nevertheless, there are 7 examples in the problem set where
left-linear completion with AC axioms is preferable to general AC
completion due to significantly better performance.

The experiments were performed on an Intel Core
i7-7500U running at a clock rate of 2.7 GHz with 15.5 GiB of main
memory. Our tool \accompll was used with the termination tool \TTTT as
well as an experimental version (denoted by \TTTTe) which allows our
tool to communicate a sequence of termination problems without having
to start a new process all the time, as described in the preceding
section.

\begin{table}[t]
\setlength{\tabcolsep}{2pt}
\renewcommand{\arraystretch}{1.2} \centering
\caption{Experimental results on 52 problems (excerpt)}
\medskip
\label{tab:results}
\begin{tabular}{l*{11}{c}}
\toprule &
\multicolumn{2}{c}{\accompll (\TTTT)} & &
\multicolumn{2}{c}{\accompll (\TTTTe)} & &
\multicolumn{2}{c}{\maedmax} & &
\multicolumn{2}{c}{\mkbtt} \\
& \phantom{~~}(1)\phantom{~~} & \phantom{~~}(2)\phantom{~~} & \phantom{~~}
& \phantom{~~}(1)\phantom{~~} & \phantom{~~}(2)\phantom{~~} & \phantom{~~}
& \phantom{~~}(1)\phantom{~~} & \phantom{~~}(2)\phantom{~~} & \phantom{~~}
& \phantom{~~}(1)\phantom{~~} & \phantom{~~}(2)\phantom{~~}
\\ \midrule
\exaref{a95_ex_4_2_15b} &
0.54 & \phantom{1}4 & & 0.31 & \phantom{1}4 & & \phantom{1}0.02 & 3 & &
\phantom{1}0.23 & \phantom{1}2 \\
\exaref{ia_diverge} &
$\infty$ & & & $\infty$ & & & \phantom{1}0.01 & 3 & & \phantom{1}0.16 &
\phantom{1}3 \\
\exaref{agroups} &
$\bot$ & & & $\bot$ & & & \phantom{1}0.16 & 5 & & \phantom{1}0.16 &
\phantom{1}0 \\
\exaref{eh} &
1.60 & \phantom{1}4 & & 0.48 & \phantom{1}4 & & $\infty$ & & & 13.66 &
\phantom{1}3 \\
\exaref{list_add} &
2.62 & 14 & & 0.54 & 14 & & $\bot$ & & & 60.03 & 11 \\
\exaref{nat_amsd} &
1.99 & 15 & & 0.48 & 15 & & $\infty$ & & & \phantom{1}9.14 & 10 \\
\midrule
\textbf{problems solved} &
\multicolumn{2}{c}{\textbf{17}} & &
\multicolumn{2}{c}{\textbf{17}} & &
\multicolumn{2}{c}{\textbf{23}} & &
\multicolumn{2}{c}{\textbf{38}} \\
\bottomrule
\end{tabular}
\end{table}

\tabref{results} compares the two configurations of \accompll with the
normalized completion \cite{M96} mode of \mkbtt \cite{WM13} and the AC
completion mode of \maedmax \cite{W19} on selected examples. The tool
\mkbtt is the original implementation of multi-completion with
termination tools \cite{WSMK13}. \maedmax, on the other hand,
implements \emph{maximal completion} \cite{KH11} which makes use of
MaxSAT/MaxSMT solvers instead of termination tools in order to avoid
using concrete reduction orders as input. To the best of our
knowledge, there is no other comparable completion tool which
supports AC axioms. Since normalized completion subsumes general AC
completion, a comparison with the aforementioned modes of both systems
allows us to assess the effectiveness of \accompll with respect to the
state of the art in AC completion. Note that both normalized completion
and general AC completion use AC unification.

In \tabref{results}, columns (1) show the execution time in seconds
where $\infty$ denotes that the timeout of 60 seconds has been reached
and $\bot$ denotes failure of completion. Columns (2) state the number
of rules of the completed TRS. In \exaref{a95_ex_4_2_15b}, the
equations just have to be oriented and one additional rule has to be
added in the case of left-linear AC completion. Hence, all three
systems can handle this problem easily. Examples~\ref{exa:ia_diverge}
and \ref{exa:agroups} show the two main limitations of left-linear AC
completion: it diverges on problems which contain an AC symbol where
both arguments have ``structure'' and non-left-linear presentations
are out of reach. For general AC completion and normalized
completion, respectively, these examples pose no problem. The
remaining examples show that the absence of AC unification can make
left-linear completion more practical than general AC completion.

\begin{exa}
\label{exa:eh}
The Eckmann--Hilton argument \cite{EH62} considers the
following equational theory $\xE$
\begin{align*}
\m{0} \oplus a &\cE a & \m{1} \otimes a &\cE a &
(a \oplus b) \otimes (c \oplus d) &\cE (a \otimes c) \oplus (b \otimes d)
\\
a \oplus \m{0} &\cE a & a \otimes \m{1} &\cE a
\end{align*}
and proves $\m{0} \approx \m{1}$.\footnote{This example was
brought to our attention by Vincent van Oostrom.} Moreover, it
establishes that $\oplus$ and $\otimes$ coincide and are 
associative as well as commutative. If we assume that $\oplus$ and
$\otimes$ are AC symbols, \accompll produces the following complete
presentation $\xR$ of $\xE$ in less than a second:
\begin{align*}
b \oplus a &\cR a \otimes b & \m{1} \otimes a &\cR a \\
\m{0} &\cR \m{1} & a \otimes \m{1} &\cR a
\end{align*}
In \tabref{results}, we can see that
\mkbtt needs considerably more time to complete $\xE$ and \maedmax
even times out after 60 seconds. By inspecting $\xR$, it is
easily seen that the operations as well as the unit
elements are equivalent. Furthermore, the concise presentation of
$\xE$ as a complete TRS is facilitated by our canonicity results as
well as the implementation of inter-reduction (\col and \com). Using
the inference system \ib without inter-reduction and with the same AC
compatible reduction order, we obtain a much larger complete
but non-canonical presentation of $\xE$ which extends $\xR$:
\begin{align*}
b \oplus a &\cR a \otimes b &
\m{0} \oplus a &\cR a &
(a \oplus b) \otimes (c \oplus d) &\cR (a \otimes c) \oplus (b \otimes d)
\\
\m{0} &\cR \m{1} &
a \oplus \m{0} &\cR a &
a \otimes (b \oplus c) &\cR (\m{0} \otimes b) \oplus (a \otimes c) \\
\m{1} \otimes a &\cR a &
\m{0} \otimes a &\cR a &
(\m{0} \otimes a) \oplus b &\cR a \oplus b \\
a \otimes \m{1} &\cR a
\end{align*}
\end{exa}

\begin{exa}
\label{exa:list_add}
Consider the ES $\xE$
\begin{align*}
x + \m{0} &\cE x &
\m{append}(\m{nil},l) &\cE l \\
x + \m{s}(y) &\cE \m{s}(x + y) &
\m{append}(\m{c}(x,l_1),l_2) &\cE \m{c}(x,\m{append}(l_1,l_2)) \\
\m{add}(\m{nil},\m{nil}) &\cE \m{0} &
\m{rev}(\m{nil}) &\cE \m{nil} \\
\m{add}(\m{c}(x,l),\m{nil}) &\cE x + \m{add}(l,\m{nil}) &
\m{rev}(\m{c}(x,l)) &\cE \m{append}(\m{rev}(l),\m{c}(x,\m{nil})) \\
\m{add}(\m{nil},\m{c}(x,l)) &\cE x + \m{add}(\m{nil},l) &
\m{rev}(\m{rev}(l)) &\cE l \\
\m{add}(\m{c}(x,l_1),\m{c}(y,l_2)) &\cE (x + y) + \m{add}(l_1,l_2) &
\end{align*}
together with the AC axioms for $+$. This is an extension of
\cite[Exercise 4.2.4(b)]{A95} (the addition operation on lists) with
the standard append and reverse functions on lists.
We added the involution axiom for list reversal
in order to
generate critical pairs. Our tool \accompll produces a
complete presentation in less than a second by orienting all
equations in $\xE$ from left to right and adding the following rules:
\begin{align*}
\m{0} + x &\R x &
\m{rev}(\m{append}(l,\m{c}(x,\m{nil}))) &\R \m{c}(x,\m{rev}(l)) \\
\m{s}(x) + y &\R \m{s}(y + x) &
\end{align*}
It takes \mkbtt more than 60 seconds to solve this problem and
\maedmax terminates with an error message.
\end{exa}

\begin{exa}
\label{exa:nat_amsd}
Consider the ES $\xE$ 
\begin{align*}
\m{0} + x &\cE x &
\m{0} \times x &\cE \m{0} \\
\m{s}(x) + y &\cE \m{s}(x + y) &
\m{s}(x) \times y &\cE (x \times y) + y \\
\m{0} - x &\cE \m{0} &
(x + y) \times z &\cE (x \times z) + (y \times z) \\
x - \m{0} &\cE x &
\m{div}(\m{0},\m{s}(y)) &\cE \m{0} \\
\m{s}(x) - \m{s}(y) &\cE x - y &
\m{div}(\m{s}(x),\m{s}(y)) &\cE \m{s}(\m{div}(x - y,\m{s}(y)))
\end{align*}
together with the AC axioms for $+$ and $\times$, defining
addition, multiplication, cutoff subtraction and round-up division
on the natural numbers. Our tool \accompll produced a complete
presentation in less than a second by orienting all equations in
$\xE$ from left to right and adding the following rules:
\begin{align*}
x + \m{0} &\cR x &
x \times \m{0} &\cR \m{0} \\
x + \m{s}(y) &\cR \m{s}(y + x) &
x \times \m{s}(y) &\cR (y \times x) + x
\end{align*}
Since round-up division cannot be handled by simplification orders
\cite{D79}, this example also shows the merits of using termination
tools in completion. Note that it takes \mkbtt much longer to
complete $\xE$ (more than 9 seconds) and \maedmax times out on this
problem after 60 seconds.
\end{exa}

Despite the successes in solving the previous three examples
quickly, the severity of the limitations of left-linear AC
completion is reflected in the total number of solved problems as
shown in \tabref{results}. In particular, the problem set does not
contain an ES which is only completed by \accompll. However, given
\thmref{gacsim}, this is not unexpected. Another noteworthy but
unsurprising fact is that complete systems produced by \accompll tend
to have more rules since every rule needs different versions of
left-hand sides to facilitate rewriting without AC matching. The
complete results are available on the tool's
website.\footnote{\label{website}%
\url{http://cl-informatik.uibk.ac.at/software/accompll/}}
In addition to full details for the experiments with \TTTT as
shown in \tabref{results}, the website also contains additional
experiments with the termination tool \muterm \cite{GL20}. For
\accompll, using \muterm instead of \TTTT does not change the set
of solved problems but it usually takes more time to complete ESs.
The situation is different for \mkbtt: Using \muterm instead of
\TTTT internally, it can complete fewer systems overall (26 instead
of 38 out of 52). However, five of these systems are not completed
by \mkbtt with \TTTT and three of them are not completed by any
other tool configuration. 
We conclude with some additional comments on the results.

\begin{itemize}[label=$\triangleright$]
\item
The results are not cluttered with detailed results for the
available options regarding prime critical pairs and the concrete
rewrite relation used for \sip and \com since they did not lead to
significant runtime differences. Instead, the default options (no
prime critical pairs and the rewrite relation $\RbR$) were used
for the experiments.
\smallskip
\item
The restriction to prime critical pairs did not pay off in the
experimental results. However, it may reduce the number of critical
pairs which have to be considered even if inter-reduction is applied
eagerly (see \exaref{fewerpcp}).
Therefore, apart from the theoretical relevance of its feasibility,
we consider it an important improvement over the status quo which could
be beneficial for users of \accompll in the future.
\smallskip
\item
The possibility to choose the concrete rewrite relation
used for \sip and \com also did not lead to significant improvements
in our experiments. However, simplifying with $\RmAC$ can boost performance
since the problem can be simplified even before a rule with a fitting
left-hand side is derived.
\smallskip
\item
Due to the incompleteness of the used approach for completion
with termination tools, some equations in the problems
\texttt{A95\_ex4\_2\_4a.trs} as well as \texttt{sp.trs} had to be
reversed in order to get appropriate results. Note that this does
not distort the experimental results for left-linear AC completion
in general as the problem lies in the particular implementation of
completion with termination tools.
\end{itemize}

\section{Conclusion}
\label{sec:conclusion}

This article consolidates and extends existing work on
left-linear $\xB$-completion \cite{H80,B91,A95} by using and
adapting elegant proof techniques which have been put forward for
standard completion in \cite{HMSW19}. This approach allowed for
improvements of existing results: Huet's result could be
strengthened to prime critical pairs in \thmref{acpcp} which in turn
improved the definition of fair runs in the inference system \ia
(\defref{fairness}). Furthermore, the usage of peak-and-cliff
decreasingness instead of proof orderings simplified the proof of
the correctness result for \ia (\thmref{correctness}). The
limitation to finite runs also facilitated the removal of the
encompassment condition which allows the inference system to produce
smaller $\xB$-complete systems as we showed in
\exaref{encompassment}.
 
In addition, the relationship between the two existing inference
systems from the literature has been investigated thoroughly, its
core part being a simulation result (\thmref{simres}) which states
that any fair run in \ib can be simulated by a fair run in \ia. With
the novel simulation result, completion with \ib can be reduced to
completion with \ia in the case of finite runs.
Furthermore, the presentation of novel canonicity results
facilitates a concrete definition of minimal complete systems in our
setting. Since the inference system \ia adds critical pairs
stemming from overlaps between (intermediate) rules and equations in
$\xB$, its termination relies heavily on inter-reduction and
therefore canonicity (\exaref{ia_elementary_completion}).
The final theoretical contribution of this article is a formal
presentation of the correspondence between left-linear AC completion
and general AC completion through another simulation result
(\thmref{gacsim}).

The tool \accompll is the first implementation of left-linear AC
completion. Our novel results on canonicity define in which way the
systems produced by \accompll are minimal and unique.
Unfortunately, the experimental results show that despite the
practical advantages of avoiding AC unification and deciding
validity problems with the normal rewrite relation, left-linear AC
completion often needs infinitely many rules and therefore
diverges. This problem does not seem to be mentioned in the
literature. However, the possible speedup due to the
avoidance of AC unification reported in the experimental results as
well as the already mentioned simulation result for general AC
completion show that left-linear AC completion also has merits in
practice.

We conclude by giving some pointers for future work. First of all, the
merits of our novel simulation result for general AC completion could
be evaluated experimentally by providing an implementation. While
\accompll adopts the two-thread version of
multi-completion~\cite{SZ12}, we anticipate that left-linear AC
completion can also be effectively implemented by a variant of
maximal completion that aims to find a canonical system~\cite{SW15,H21}.
Another interesting research
direction is normalized completion for the left-linear case. If
successful, this would facilitate the treatment of important cases
such as abelian groups despite the restriction to left-linear
TRSs. Furthermore, a formalization of the established theoretical
results is desirable. To that end, the existing Isabelle/HOL
formalization from \cite{HMSW19} is a perfect starting point as some
results of this article are extensions of the results for standard
rewriting presented there.

\section*{Acknowledgment}
\noindent
We thank Jonas Sch\"opf and Fabian Mitterwallner for providing the
experimental version of \TTTT as well as the anonymous reviewers
for their suggestions and comments.

\bibliographystyle{alphaurl}
\bibliography{references}

\end{document}